\documentclass[12pt]{article}

\usepackage{amsmath}
\usepackage{amsfonts}
\usepackage{amssymb}
\usepackage{eurosym}
\usepackage{amsthm}

\usepackage{graphicx}
\usepackage{float}
\usepackage{subfig}
\usepackage{caption}
\usepackage{booktabs}
\usepackage{multirow}
\usepackage{enumerate}
\usepackage{setspace}
\usepackage{pdflscape}
\usepackage{lscape}
\usepackage[hang]{footmisc}

\usepackage{comment}
\usepackage{lipsum}
\usepackage{titlesec}
\titleformat*{\subsection}{\normalsize\bfseries} 
\titleformat*{\section}{\large\bfseries} 

\usepackage{pgfplots}
\pgfplotsset{compat=1.18}

\usepackage{xcolor}
\definecolor{airforceblue}{rgb}{0.36, 0.54, 0.66}
\definecolor{battleshipgrey}{rgb}{0.52, 0.52, 0.51}
\definecolor{charcoal}{rgb}{0.21, 0.27, 0.31}

\usepackage{hyperref}
\hypersetup{
    colorlinks=true,
    linkcolor=airforceblue,
    citecolor=airforceblue,
    urlcolor=battleshipgrey
}

\usepackage[
backend=biber,
style=authoryear,
natbib=true,
sorting=nyt,
sortcites=true,
maxcitenames=4,
maxbibnames=99,
uniquename=false,
giveninits=true,
hyperref=true
]{biblatex}

\DeclareFieldFormat{bibhyperref}{\bibhyperref{#1}}

\DeclareCiteCommand{\citet}
  {\usebibmacro{prenote}}
  {\usebibmacro{citeindex}%
   \printtext[bibhyperref]{\printnames{labelname}}%
   \addspace%
   \printtext[bibhyperref]{\mkbibparens{\printfield{year}}}}
  {\multicitedelim}
  {\usebibmacro{postnote}}

\DeclareCiteCommand{\citep}[\mkbibparens]
  {\usebibmacro{prenote}}
  {\usebibmacro{citeindex}%
   \printtext[bibhyperref]{%
     \printnames{labelname}%
     \addcomma\space%
     \printfield{year}}}
  {\multicitedelim}
  {\usebibmacro{postnote}}

\DeclareCiteCommand{\citeauthoronly}
  {\usebibmacro{prenote}}
  {\usebibmacro{citeindex}%
   \printtext[bibhyperref]{\printnames{labelname}}}
  {\multicitedelim}
  {\usebibmacro{postnote}}

\let\cite\citet

\AtEveryBibitem{%
    \hypertarget{cite.\thefield{entrykey}}{}%
}

\newtheorem{lemma}{Lemma}

\newtheorem{proposition}{Proposition}

\def\eproof{\hbox{\hskip3pt\vrule width4pt height8pt depth1.5pt}}

\renewcommand{\baselinestretch}{1.5}
\usepackage[top=1in, bottom=1in, left=1in, right=1in]{geometry}

\begin{document}

	\title{The Economic Benefits and Costs of AI and Policies to Mitigate AI's Impact on Inequality}

	\author{Matthew O. Jackson  \qquad  Zafer Kanik\thanks{
			Jackson: Department of
			Economics, Stanford University, and external faculty member at the Santa Fe Institute.
			Kanik: Adam Smith Business School, University of Glasgow.}}

\vspace{1.5cm}

\date{\today}

\maketitle
	
\thispagestyle{empty}		
\setcounter{page}{0}

\begin{abstract}	
	\begin{singlespace}
   We examine the economic impact of increasingly productive AI and policies that spread its benefits across the economy.
   Improvements in AI productivity trigger labor reallocation and changes in absolute and relative wages for different types of labor.  Wages of labor that is essential for building AI increase faster than overall GDP.   Wages of labor that is substituted for by AI decrease in both absolute and relative terms.  Wages of labor that is used only in final goods production and is not displaced by AI increase in line with overall GDP.   
    We contrast the impact of productivity gains depending on whether AI production is competitive or monopolistic. Monopoly production of AI restricts its deployment, slowing the transition and impact of AI.
    Optimal tax and regulatory policies that 
    achieve Pareto-improvements differ depending on whether there is competition in AI production. 
    \end{singlespace}

Keywords: AI, Wage Inequality, Labor Displacement, Tax Policy 

JEL Codes:  D31, D42, D50, E24, H20, H21, J31, J38, O33, O38
\end{abstract}
	
	\newpage
	
\renewcommand{\baselinestretch}{1.1}

\section{Introduction}

\label{SEC:INTRO}

Artificial intelligence is already becoming one of the most consequential economic forces in recent history. Investments are in the trillions of dollars, and technology advances along dimensions that directly intersect with labor---from writing code and analyzing data to generating text and assisting decision-making---are leading AI to increasingly perform tasks once exclusively carried out by workers across a wide range of occupations \citep[e.g.,][]{IdeTalamas2024,IdeTalamas2025,IdeTalamas2026,autor2024,brynjolfsson2025canaries,brynjolfssonLiRaymond2025}.

This raises first-order questions for economists and citizens: How will AI reshape wages and wage inequality across different types of workers? What policies can ensure that AI's gains are shared, and how do these policies depend on the pace of technological progress, production technologies, and market structure? These questions are already being debated by the very people building AI. Anthropic's
CEO Dario Amodei has argued that those creating the technology should take
responsibility for its effects, and that this is a time to worry less about disincentivizing growth and worry more about ``making sure that everyone gets a part of that growth'' \citep{hagey2026}. That debate increasingly turns to taxes, and regarding this,
\cite{amodei2025} said ``it's going to involve taxes on people like me, and maybe specifically on the AI companies.'' Likewise, \cite{openai2026} has called for
modernizing the tax base---particularly mentioning ``policymakers could rebalance the tax base by increasing reliance on
capital-based revenues—such as higher taxes on capital gains at the top, corporate income, or
targeted measures on sustained AI-driven return.''

 We address these questions within a model in which labor is tracked by its role in production and whether it competes with AI, rather than by skill level.  The model has two sectors---AI production and other (``final'') goods production.  The three types of labor are as follows. \emph{Substitutable workers} perform tasks that AI can replace---e.g., customer service, routine coding, legal clerking, standardized data processing and analysis, content generation, translation, editing, etc. \emph{Complementary workers} are required to produce AI itself---e.g., machine learning engineers, system architects, researchers---and who also have productive uses in the final goods sector. \emph{Final goods-specific workers} perform tasks that are neither substituted by AI nor required for AI production, but who benefit from AI adoption in their production---e.g., doctors, teachers, managers, etc. 
Where workers fit in this production structure determines how they fare from AI adoption.

The fact that AI is itself produced using labor is an important feature of our model.  Thus, it consumes labor in addition to being complementary to labor in some productive tasks and substituting for labor in other tasks. 
Other researchers have modeled labor-substituting technologies via capital adoption, without modeling the production of that capital.  Here, AI is built using workers---e.g., engineers, researchers, data scientists---who have productive uses elsewhere in the economy. 
The important aspect of this is how it pins down wages.  Complementary workers who are used in AI production end up having wages that move with AI productivity ( regardless of how many are employed there), due to the fact that other sectors have to compete for such workers with the AI sector.   On the other hand, substitutable workers' wages are tied down by the price of AI, with which they compete for jobs, and hence their wages  fall with AI productivity---again, regardless of how many are displaced.

These wage dynamics  have no direct analogue in previous analyses. 
This allows us to characterize the tax policies that create Pareto-improvement for all workers, and to analyze how such policies depend on the pace of technological progress, the structure of AI production, and the relative roles of different types of labor in various parts of the economy.

Modeling AI as labor-produced gives rise to a distinctive transition phase during which AI is deployed in the economy.  
Once AI passes a minimum productivity threshold, it begins to compete with substitutable labor, and the economy reallocates labor. During this transition, wages of substitutable workers fall in absolute terms even as aggregate output rises---their wage is tied to the competition they face from AI for their jobs.   Substitutable workers do not drop out of the economy---but instead see lower wages and get pushed out of the tasks where AI substitutes for them into ones where it does not.
By contrast, as AI production increases, complementary workers' wages grow faster than aggregate output, reflecting their dual role as builders of AI and workers who benefit from working alongside it.
The third category of workers---final goods-specific workers (who are not substitutable)---end up with wages that track aggregate output one-for-one.

A central insight that emerges from our model is that the wage ratio of complementary to substitutable labor is pinned down by AI productivity alone---independent of the relative populations of the two labor types. This labor-supply invariance is intuitive and follows from the observations above that the complementary wage increases with the productivity of AI, while the substitutable wage falls as the relative price of AI falls as its productivity rises.%
\footnote{Although we show this in a Cobb-Douglas setting, the result extends well beyond Cobb-Douglas as discussed in Section \ref{SEC:ROBUSTNESS}.} 

The wage dynamics above raise the natural question of which policies ensure that AI's gains are broadly shared.   The three types of labor experience widely different outcomes: substitutable wages fall in both relative and {\sl absolute} terms, final goods specific workers' wages rise in line with GDP, and wages of complementary workers rise faster than GDP.   
Because aggregate output grows during the transition, policy interventions can make all workers better off. 

Given that final goods sector specific workers' wages are rising in line with GDP, taxes that offer welfare (Pareto) improvements relative to pre-AI adoption effectively involve taxing AI-complementary workers and then subsidizing those who are being substituted by AI. We characterize those Pareto-improving tax policies.  
Given that there are overall gains to be shared, they can be distributed in many different combinations.  We examine two natural targets that involve contrasting tax rates as
a function of AI productivity.   One target is simply to maintain AI-substitutable workers at their absolute pre-AI consumption level.  A more ambitious target is to share the AI gains equally among all workers, so as to maintain substitutable workers' {\sl share} of GDP.

The first target of simply maintaining substitutable workers at their pre-AI consumption level involves a tax on complementary workers that is \emph{hump-shaped} in AI productivity:  it initially rises as AI becomes more productive and the wages of substitutable workers fall, but then the tax rate eventually decreases as the increased productivity of AI means that smaller fractions of richer workers' wages are needed to maintain poorer workers' consumption constant. 
By contrast, ensuring that all workers gain equally from the AI expansion requires a tax rate that is  \emph{increasing} in AI productivity, because the market wage ratio widens as AI productivity advances. The choice of the welfare target therefore determines whether a welfare-improving tax policy involves more or less taxation as AI advances.

We also consider what happens if AI production becomes monopolized. A monopolist slows AI adoption by raising its price.  Also, by restricting AI production, the monopolist hires fewer complementary workers and displaces fewer substitutable workers, with aggregate output being below the competitive level. Wage inequality is thus compressed relative to competition---the substitutable wage falls less and the complementary wage rises less---but the total surplus available for redistribution also shrinks. Redistribution must now be carried out among the monopolist and the affected worker classes.   

If one can simply regulate the monopoly price to be the competitive price (or, alternatively, fully tax monopoly profits), then the problem simplifies to the previous competitive policy analysis.  
Our additional policy analysis concerns situations in which  price controls or full profit taxes are not (politically) viable, but partial taxes are.  

Near the onset of AI adoption, monopoly profit grows more slowly than the substitutable-worker deficit, so a profit tax alone cannot finance Pareto-improving transfers; it must be supplemented by a tax on complementary workers, whose wages rise during the transition. As AI productivity rises, monopoly profit eventually overtakes the deficit and profit taxation alone suffices to keep substitutable workers at pre-AI consumption levels. Comparing redistribution regimes under Pareto-improving taxation, we further show that restoring competition---via a price cap or full profit taxation---always leaves complementary workers better off after taxes, with the gap widening as AI advances.

\paragraph{Related Literature}

Our paper relates to literatures that have separately studied technology adoption, wage inequality, taxation, and market power. We bring these issues together in a general-equilibrium model with three role-based labor types, endogenous labor reallocation, and AI produced within the economy---rather than introduced as an exogenous capital input or imported input. This allows us to study how AI productivity affects wages and adoption under both competitive and monopolistic AI production, and how market structure shapes the surplus available for redistribution.

This setting also allows us to ask how the gains from AI are distributed across workers; and, when AI production is monopolized, between workers and monopoly rents. Labor-produced AI creates general-equilibrium wage dynamics that are absent from standard models of technology adoption. In the tax analysis, a key redistributive distinction is between taxing the factor income generated by AI production and taxing monopoly profits from AI production. These instruments are not interchangeable: near the onset of adoption, monopoly rents may be too small to compensate displaced workers, so factor-income taxation may be necessary; later in the transition, as rents grow, profit taxation can become sufficient on its own.

A growing literature studies the impact of AI on labor allocation in knowledge- and task-based economies. \cite{IdeTalamas2025} model AI in knowledge hierarchies, building on \citet{garicanoRossiHansberg2006,garicanoRossiHansberg2015}. \cite{acemoglu2025simple} studies AI through automation and task complementarities. 
We take a different approach, and introduce a role-based labor classification that distinguishes labor by whether they can work in the AI sector and whether they are substitutes or complements to AI.\footnote{We focus on three of the possible four labor types that this two-by-two classification provides.  The missing box is substitutable labor that can only work in the final goods sector.  The impact on that type of labor is a straightforward extension as AI simply lowers their wages.  To keep the model less cluttered, we omit that category.}\footnote{This also distinguishes our model from the job-polarization literature, which emphasizes the hollowing-out of the middle \citep{autorDorn2013,goosManningSalomons2014}.}
This way of classifying labor aligns directly with how it ends up being impacted and allows us to develop a clear policy analysis.

Our model and the foundation for our analysis that it generates in terms of labor allocation and wages relates to a broad literature on automation, skill-biased technical change, capital-deepening, and directed-technical-change. The models of \cite{zeira}, \cite{alm}, and \cite{ar1,ar3,ar4} provide foundations for displacement, reinstatement, and the task content of automation; \cite{hemous2022rise} studies endogenous automation through horizontal innovation; and related work studies automation's implications for employment, labor shares, retraining, wealth inequality, and growth \citep{zn,as,jaim,moll2022,aghionJonesJones2019}. The skill-biased and directed-technical-change literatures study how technology, capital, and skill shape wage inequality \citep{greenwood19971974,acemoglu,acemoglu2002directed}; \citet{ac2,aa} survey this broader literature. 
The results that our model produce are complementary but distinct: wage inequality is generated by endogenous adoption of labor-produced AI, rather than by a standard displacement, capital-deepening, or directed-technical-change channel. Finally, our labor-supply invariance result complements the partial-equilibrium findings of \cite{lewis} and \cite{clp} on endogenous automation absorbing labor-supply changes.

Existing work that studies robot/AI taxation, technology regulation, and redistribution in response to automation/AI \citep{korinekStiglitz2019,thuemmel2023,guerreiro2022,costinot2023,korinekLockwood2026,acemogluManeraRestrepo2020} mainly focuses on taxing or regulating the technology itself---robot taxes, AI taxes, or broader technology regulation---and, when labor-income taxation is present, it is typically part of a distortionary tax system.\footnote{\cite{bastani2024} survey AI taxation, and \cite{bryan2026} reviews the economic impacts of AI.}
In our setting, taxes on production or products are distortionary, while labor taxes are not.  This enables us to derive explicit redistribution schedules across worker classes whose gains and losses arise endogenously from their roles in AI production and deployment without harming growth or GDP. The policy problem is how to redistribute income among workers without limiting technology adoption. Because the transition is endogenous, AI productivity determines both the size of the losses and the available tax base and so, as we show, the required tax path depends in interesting ways on AI productivity and on the redistribution objective.

Our monopoly analysis connects the paper to recent work on AI market structure. \cite{korinekVipra2025} study how scale and scope economies can push AI markets toward concentration. Closest to our monopoly analysis, \citet{atheyScottMorton2025} study upstream AI market power in an open economy general-equilibrium model where AI is a priced imported input, showing how strategic AI pricing affects adoption, displaced-worker wages, and welfare. Our analysis differs by embedding monopoly AI production in a labor-role-based general-equilibrium model, making the monopolist's problem quite distinct from theirs. While \citet{atheyScottMorton2025} focus on the incidence of upstream AI market power and its effects on wages and welfare, we study and compare Pareto-improving redistribution policies that use taxes on workers and/or monopoly profit.

The paper proceeds as follows. Section~\ref{SEC:MODEL} introduces the model. Section~\ref{SEC:TRANSITION} characterizes wage dynamics. Section~\ref{SEC:INCOME_TAX} analyzes Pareto-improving taxation. Section~\ref{SEC:MONOPOLY} extends to monopoly. Section~\ref{SEC:ROBUSTNESS} discusses extensions and robustness of the modeling choices. Section~\ref{SEC:CONCLUSION} concludes.

\section{Our Model}
\label{SEC:MODEL}

We consider a competitive economy composed of two sectors and three distinct types of labor.

The AI sector, indexed by $a$, produces the AI technology ($Y_a$) that serves as an input in final goods production. 
The final goods sector, indexed by $f$, produces a representative consumption good ($Y_f$) using a combination of labor and AI technology.

Labor is classified into three categories ($S,C,F$). 
Aggregate endowments of labor are denoted $\overline L^S$, $\overline L^C$, and $\overline L^F$, all supplied inelastically.

The Cobb-Douglas production functions are: 
\begin{equation}
	Y_a = A_{a} (L^S_{a})^{\alpha^S_{a}} (L^C_{a})^{\alpha^C_{a}},
	\label{eq:tech_production}
\end{equation}
\begin{equation}
	Y_f = A_{f} \left[ L^S_{f} + X_a \right]^{\alpha^S_{f}} (L^C_{f})^{\alpha^C_{f}} (L^F_{f})^{\alpha^F_{f}},
	\label{eq:final_production}
\end{equation}
where $A_a$ and $A_f$ represent productivity in the AI and final goods sectors, respectively. $X_a$ denotes the quantity of AI technology used in final goods production. The term $[L^S_{f} + X_a]$ captures the perfect substitution of AI for some forms of labor ($L^S_{f}$). 

We use a Cobb-Douglas formulation because it leads to very interpretable and intuitive expressions.  We show that it does not affect the qualitative results in Section \ref{SEC:ROBUSTNESS}.

Both sectors exhibit constant returns to scale, with share parameters satisfying $\alpha^S_{a}+\alpha^C_{a}=1$ and $\alpha^S_{f}+\alpha^C_{f}+\alpha^F_{f}=1$. The superscripts $\alpha$s represent the relative importance (or intensity) of these different forms of labor in different sectors.

Our labor classification reflects the
fact that AI, robotics, and other forms of automation interact with labor in three different ways---there is labor for which it substitutes, there is labor involved in final goods production that it does not substitute for and is used together with the AI inputs, and then there is labor that is used in the production of AI inputs (and cannot be substituted by it).

More specifically:
\begin{itemize}

\item  \textit{Final Good-Specific Labor ($ \overline L^F$)} encompasses labor employed exclusively in the final goods sector ($L^F_{f}$).  
This labor category includes for instance, managers, retailers, nurses, and some service workers, who can utilize AI tools to enhance their productivity, but whose skills are not easily substituted for AI systems, nor do they produce AI. 
  
\item \textit{Complementary-Labor ($ \overline L^C$)} comprises workers who can work in both the final goods sector ($L^C_{f}$) and the AI sector ($L^C_{a}$)
and who make use of AI tools, but who are not substituted for by AI. 
For example, this includes software architects and various computational scientists, as well as some engineers, product designers, and managers. 

\item \textit{Substitutable-Labor ($ \overline L^S$)} consists of workers performing   tasks that can also be partly or wholly done by AI. In the final goods sector ($L^S_{f}$), these workers are perfect substitutes for AI. However, they can also serve as inputs for producing AI in the AI sector ($L^S_{a}$). 
For instance, a coder or data analyst  competes with AI tools, but can also be useful in the AI sector in training, validating, and deploying AI.
\end{itemize}

\cite{alm} study wage inequality and skill acquisition using an aggregate production function $Y=[L_r+ K]^{\alpha}{L_n}^{1-\alpha}$, where $L_r$ represents routine labor, $K$ denotes capital that is perfectly substitutable for $L_r$, and $L_n$ signifies non-routine labor. Our model extends their framework not only by including a third form of labor but also by 
modeling the production of automation/AI technologies via Equation \ref{eq:tech_production}. 
Modeling the production of AI---a large and rapidly growing sector of the economy and potentially eventually the largest sector---introduces distinct general equilibrium effects with a transition phase and reallocation of labor across sectors.

\subsection{Competitive Equilibrium}

We normalize the price of the final good to $p_f=1$.  This makes things very easy to interpret, as then all other price and wage levels directly translate into how much GDP they cost or can purchase.  

In particular, let $p_a$ denote the price of AI technology, and let $w_S$, $w_C$, and $w_F$ denote the equilibrium wages for the respective labor types.   

A \textit{competitive equilibrium} is a set of prices $\{p_f, p_a, w_S, w_C, w_F\}$ and allocations and outputs $\{Y_f, Y_a, X_a, L^S_{f}, L^C_{f}, L^F_{f}, L^S_{a}, L^C_{a}\}$ such that producers of AI and final goods maximize profits, households supply labor inelastically, and output and labor markets clear {($L^C_{f}+L^C_{a}=\overline L^C$, $L^S_{f}+L^S_{a}=\overline L^S$, $L^F_{f}=\overline L^F$, $Y_a=X_a$)}. As the definition is completely standard, we provide its formal statement in the Appendix.

\section{The Impact of Productivity Gains in AI on Employment, Wages, and GDP}

\label{SEC:TRANSITION}

We first analyze  how the economy ``transitions'' as 
the productivity of AI, $A_a$, increases. 

This transition is a feature of our framework: because AI is itself produced by labor---unlike in existing models where AI enters purely as capital---increases in AI productivity trigger a reallocation of workers across sectors, leading to changes in wages and wage inequality across different types of labor.
Specifically, there exist two threshold levels of AI productivity, $A_a^{*}\leq A_a^{**}$, such that: if $A_a \leq A_a^{*}$ then there is no AI production, if $A_a \geq A_a^{**}$ then AI has fully displaced substitutable-labor in final good production,
and in between the allocation of labor changes with AI productivity.

Letting $\Phi = (\alpha^C_{a})^{\alpha^C_{a}} (\alpha^S_{a})^{\alpha^S_{a}}$, these threshold productivity levels are:
\begin{equation}
	\label{threshold}
	A^* = \frac{1}{\Phi} \left( \frac{\overline{L}^S}{\overline{L}^C}\right)^{\alpha^C_{a}} \left( \frac{\alpha^C_{f}}{\alpha^S_{f}} \right)^{\alpha^C_{a}}, \qquad A^{**} = \frac{1}{\Phi} \left( \frac{\overline{L}^S}{\overline{L}^C}
\right)^{\alpha^C_{a}} \left(\frac{\alpha^S_{f}\alpha^C_{a} + \alpha^C_{f}}{\alpha^S_{f}\alpha^S_{a}} \right)^{\alpha^C_{a}}.
\end{equation}

Equation~\ref{threshold} reveals the crucial role of complementary-labor in AI production, and the importance of modeling AI production. 

If $\alpha^C_{a}=0$, then there is no transition phase and $A_a^{*}=A_a^{**}=1$.  
In that case, there is a bang-bang transition, when $A_a>1$ substitutable labor is fully substituted for in final production and becomes fully employed in AI production.  Effectively substitutable labor can either be used to its own substitute or the final good, and when   $\alpha^C_{a}=0$ it means that $\alpha^S_{a}=1$ and so substitutable labor produces its own substitute linearly, and hence the bang-bang transition at $A_a=1$. Then, wages for different labor classes all track GDP and there is no change in relative wage levels.  

The interesting case is where there is non-substitutable labor employed in the production of AI: $\alpha^C_{a}>0$.  Then, the economy exhibits a gradual substitution (or transition) phase, as labor (both substitutable and complementary) is gradually reallocated to the production of AI. This substitution phase features both a lower bound ($A_a^{*}$) and an upper bound ($A_a^{**}$), giving rise to three distinct phases of AI adoption: a pre-AI phase, a transition phase, and a post-phase. 

Also note that a finite upper bound $A_a^{**}<\infty$---and hence the existence of a post-phase---arises only if there exist tasks performed by substitutable-labor in both the AI sector and the final goods sector: $\alpha^S_{a}>0$. In that case, as AI productivity $A_a$ rises, substitutable-labor, together with complementary-labor, shift from the final goods sector to the AI sector until $L^S_{f}=0$. Otherwise, for $\alpha^S_{a}=0$, substitutable-labor $L^S$ is only ever employed in the final goods sector and simply experiences a  declining wage as $A_a$ increases.

We document the wage levels as $A_a$ increases in Proposition \ref{prop:equilibrium_wages}.

First, let 
$E_a= \frac{p_aY_a}{p_fY_f}$
denote the share of GDP spent on AI.
During the transition phase, the share of GDP spent on AI takes on the closed form:
\[
E_a=\frac{p_aY_a}{p_fY_f}
=
\frac{
\alpha^S_{f}\frac{\overline L^C}{\overline L^S}\left(A_{a}\Phi\right)^{\frac{1}{\alpha^C_{a}}}-\alpha^C_{f}
}{
\alpha^C_{a}\left(1+\frac{\overline L^C}{\overline L^S}\left(A_{a}\Phi\right)^{\frac{1}{\alpha^C_{a}}}\right)
}.
\]
This share is increasing in $A_a$ during the transition phase; i.e., $\frac{d E_a}{dA_a}>0$.

\begin{proposition}
\label{prop:equilibrium_wages}{\rm [Wages]}
The equilibrium wages for the three types of labor are:
\begin{align}
    w_S &= Y_f  \frac{\alpha^S_{f} - E_a \alpha^C_{a}}{\overline{L}^S}, \label{eq:wS_GDP} \\
    w_C &= Y_f  \frac{\alpha^C_{f} + E_a \alpha^C_{a}}{\overline{L}^C}, \label{eq:wC_GDP} \\
    w_F &= Y_f  \frac{\alpha^F_{f}}{\overline{L}^F}. \label{eq:wF_GDP}
\end{align}
The corresponding share of GDP spent on AI in the economy,  $E_a $, is 0 in the pre-AI phase ($A_a \leq A_a^*$), 
    $\alpha^S_{f}$ in the post-phase ($A_a \geq A_a^{**}$), and increasing in between. 
\end{proposition}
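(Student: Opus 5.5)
We derive the wage formulas from the first-order conditions of the two competitive sectors, combined with the zero-profit (constant returns) accounting identities. We do not solve for the allocation explicitly.

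\textbf{Final goods sector.} With $p_f=1$, profit maximization in final goods gives
\[
w_F = \alpha^F_f Y_f/L^F_f, \qquad w_C=\alpha^C_f Y_f/L^C_f,
\]
and for the composite input $Z=L^S_f+X_a$ the marginal product $\alpha^S_f Y_f/Z$. Since $L^S_f$ and $X_a$ are perfect substitutes inside $Z$, whichever of them is used must be paid this marginal product. In particular, $p_a\ge w_S$, with equality whenever both are used. Market clearing $L^F_f=\overline L^F$ immediately yields (\ref{eq:wF_GDP}). Spending on the composite is $\alpha^S_f Y_f$, split as $w_S L^S_f + p_a X_a$, and spending on complementary labor in final goods is $w_C L^C_f=\alpha^C_f Y_f$.

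\textbf{AI sector.} Constant returns and competition imply zero profit and Cobb--Douglas factor shares. Hence
\[
w_S L^S_a=\alpha^S_a\, p_a Y_a, \qquad w_C L^C_a = \alpha^C_a\, p_a Y_a ,
\]
with $p_aY_a = E_a Y_f$ by definition of $E_a$ (using $Y_a=X_a$).

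\textbf{Wages $w_S$ and $w_C$.} Total complementary wage income is
\[
w_C \overline L^C = w_C L^C_f + w_C L^C_a = \alpha^C_f Y_f + \alpha^C_a E_a Y_f,
\]
which gives (\ref{eq:wC_GDP}). For substitutable labor,
\[
w_S\overline L^S = w_S L^S_f + \alpha^S_a E_a Y_f,
\]
and $w_S L^S_f = \alpha^S_f Y_f - p_a X_a = (\alpha^S_f - E_a)Y_f$. This step needs care: it relies on substitutable labor in final goods being paid the composite's marginal product whenever $L^S_f>0$. If $L^S_f=0$ (post-phase), the composite-spending identity still gives $E_a=\alpha^S_f$. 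Summing, and using $\alpha^S_a - 1 = -\alpha^C_a$, yields (\ref{eq:wS_GDP}). As a consistency check, the three wage bills sum to $Y_f$, as they must when zero profits hold in both sectors.

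\textbf{The path of $E_a$.} In the pre-AI phase $X_a=0$, so $E_a=0$. In the post-phase $L^S_f=0$, so all composite spending goes to AI and $E_a=\alpha^S_f$. In the transition phase both inputs are used, so $p_a=w_S$. The AI sector's unit-cost condition is
\[
p_a = w_S^{\alpha^S_a} w_C^{\alpha^C_a}/(A_a\Phi),
\]
and with $p_a=w_S$ it becomes $(w_C/w_S)^{\alpha^C_a} = A_a\Phi$, i.e.\ $w_C/w_S=(A_a\Phi)^{1/\alpha^C_a}$. Substituting the two wage formulas into this ratio gives
\[
\frac{\overline L^S(\alpha^C_f+E_a\alpha^C_a)}{\overline L^C(\alpha^S_f-E_a\alpha^C_a)} = (A_a\Phi)^{1/\alpha^C_a},
\]
which is linear in $E_a$. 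Solving it reproduces the closed form stated before the proposition.

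\textbf{Main obstacle.} The main work is showing this closed form is increasing in $A_a$. Writing $R=\frac{\overline L^C}{\overline L^S}(A_a\Phi)^{1/\alpha^C_a}$, which is increasing in $A_a$, the expression becomes $E_a=(\alpha^S_f R-\alpha^C_f)/(\alpha^C_a(1+R))$. Its derivative in $R$ has the sign of $\alpha^S_f+\alpha^C_f>0$, so $E_a$ is increasing. Finally, evaluating at the thresholds confirms continuity at the phase boundaries: $E_a=0$ gives $R=\alpha^C_f/\alpha^S_f$, which is $A^*$, and $E_a=\alpha^S_f$ gives $A^{**}$.
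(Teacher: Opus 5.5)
Your proof is correct and follows essentially the paper's route. You use Cobb--Douglas cost shares in each sector, summed across sectors for each labor type (your composite-spending identity $w_SL^S_f+p_aX_a=\alpha^S_fY_f$ is the paper's $(1-e_f)\alpha^S_fY_f$ with $E_a=e_f\alpha^S_f$), pin down $E_a$ in the transition by combining $p_a=w_S$ with the AI unit-cost condition, and get monotonicity from the same substitution $\Lambda=\frac{\overline L^C}{\overline L^S}(A_a\Phi)^{1/\alpha^C_a}$ used in Proposition~\ref{mass}.

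One harmless slip: $p_a\ge w_S$ does not hold in general. In the post-phase $L^S_f=0$ and $w_S>p_a$, but you only use the equality case when both inputs are employed, so nothing breaks.
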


Proposition~\ref{prop:equilibrium_wages} expresses wages as shares of aggregate output $Y_f$, revealing how AI adoption redistributes both absolute and relative income across labor types through a single sufficient statistic: the interaction of AI expenditure share with the intensity of complementary labor in AI production, $E_a \alpha^C_{a}$. The interaction of AI adoption, $E_a$, with the complementary labor intensity in AI production, $\alpha^C_{a}$, is the channel through which labor-produced AI delivers wage effects on different types of labor that are absent in existing models of automation/AI.

The wage of complementary labor (\ref{eq:wC_GDP}) is increasing in both absolute and relative terms during transition, reflecting the dual role of $\overline{L}^C$ workers---they are needed in building AI and also benefit from AI adoption in the final goods production.

The wage of final goods labor (\ref{eq:wF_GDP}) remains a fixed fraction $\alpha^F_{f}$ of output regardless of AI adoption.  The absolute final-goods workers' wage rises with GDP reflecting the increased productivity, and moves one-for-one with GDP as there is no interaction between the employment of this type of labor and AI adoption. This one-for-one movement is a
Cobb--Douglas result, but it highlights another unique implication of our
framework: a tractable Cobb-Douglas specification produces
three economically distinct labor classes purely from differences in
production roles. In particular, the wages of $L^C$ and $L^F$ workers behave
differently not because of any difference in their elasticity of
substitution in the final goods sector, but because one is used in AI
production and the other is not. Were $L^C$ confined to the final goods
sector, it would behave exactly like $L^F$---earning a constant share
of GDP, with no premium from AI adoption.\footnote{The asymmetric distributional
consequences of AI adoption for $L^C$ and $L^F$ workers therefore arise
endogenously from \emph{who produces AI}, rather than from imposed
differences in elasticities. In a richer CES specification,
the absolute wage of $L^F$ would no longer move exactly one-for-one
with GDP---its growth rate would depend on the elasticity of
substitution between $L^F$ and the other inputs in final-goods
production. The qualitative results here would nonetheless carry
through as long as the key distinction between $L^C$ and $L^F$ is that $L^C$ is used in AI production while $L^F$ is not.}

In contrast, the substitutable-labor wage (\ref{eq:wS_GDP}) {\sl decreases}
in both relative and absolute levels during the transition. As AI captures a larger fraction of production, the effective demand for substitutable labor falls, depressing $w_S$. 
This holds even though substitutable labor is used in AI production ($0<\alpha^S_{a}<1$).  This is because their share in the production of AI is less than 1, and the displacement effect for substitutable workers in the final goods sector---through a higher $\frac{p_a Y_a}{p_f Y_f}$---dominates the rising demand for these workers in the production of AI, and their wages decrease as AI adoption rises. 
Once the transition is complete, then their wages become a constant fraction of GDP and begin rising again.

\begin{figure}[t]
\centering
\includegraphics[width=1\textwidth]{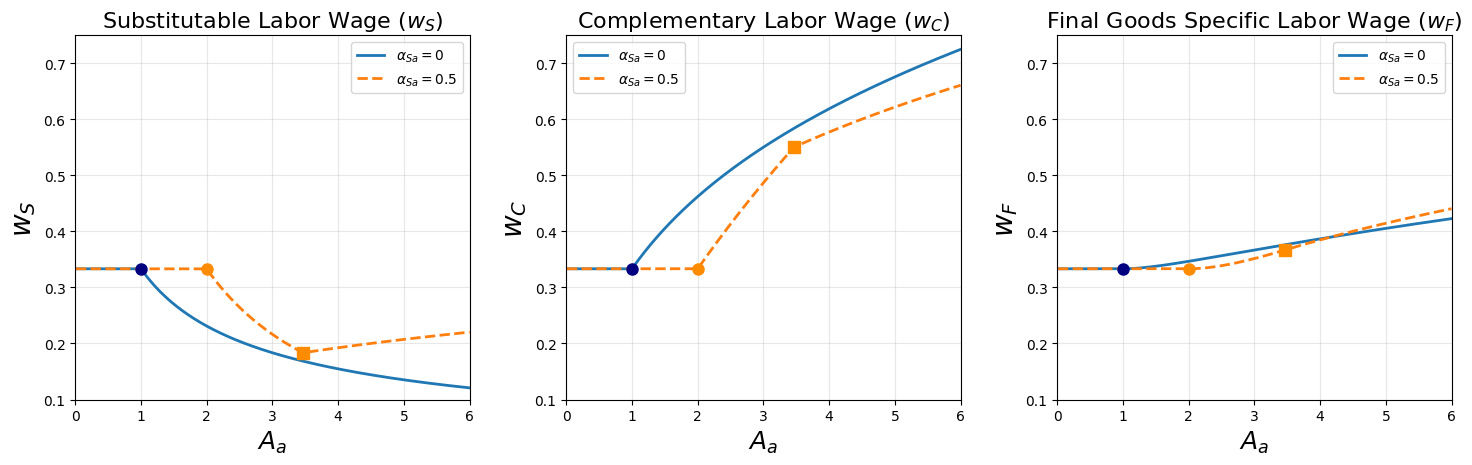}
\caption{AI Adoption and Equilibrium Wages}
\label{fig_wages}

\vspace{0.2cm}
\begin{minipage}{0.95\textwidth}
\footnotesize
\textit{Notes:} This figure displays wages as AI productivity ($A_a$) increases from $A_a=0$ to $A_a=6$. The three panels show wages for substitutable ($w_S$), complementary ($w_C$), and final goods specific ($w_F$) labor. Two AI technologies are compared: the solid blue line  depicts the case where AI is produced exclusively by complementary labor ($\alpha^S_{a}=0$) resulting in an endless transition phase; the dashed orange line depicts balanced factor shares in AI production, allowing a finite transition ($\alpha^S_{a}=0.5$). Dots mark the onset of AI adoption ($A_a^*$); the square marks the completion of transition ($A_a^{**}$) for the $\alpha^S_{a}=0.5$ case. Other parameters: $\alpha^S_{f}=\alpha^C_{f}=\alpha^F_{f}=1/3$, $\overline{L}^S=\overline{L}^C=\overline{L}^F=1$, and $A_f=1$.
\end{minipage}
\end{figure}

Figure \ref{fig_wages} illustrates the wage dynamics during the transition, plotting them as a function of AI productivity, $A_a$. The blue solid line ($\alpha^S_{a}=0$) depicts the case in which AI is produced exclusively by complementary-labor.  In that case, the transition never terminates because $\overline L^S$ cannot be reallocated to the AI sector, and its wage simply continues to fall.
In contrast, the orange dashed line represents the case in which the AI sector employs both labor types ($\alpha^S_{a}=0.5$).  In that case, wages of substitutable labor fall during the transition and eventually stabilize and become a constant fraction of GDP again once the post-AI phase is reached. 
The center shows that complementary labor benefits unambiguously from AI advancement, with wages rising throughout the transition and afterwards in both cases. The right panel shows that final goods sector specific labor gains from the expanding AI adoption, but simply as a constant fraction of GDP---neither displaced by nor attracted to AI production.

Proposition~\ref{prop_wagegap} characterizes the corresponding changes in wage inequality between different labor types in response to a change in $A_a$ during transition.

\begin{proposition}
	\label{prop_wagegap} {\rm [Growing Wage Inequality]}
During the transition phase ($A_a^{*} < A_a < A_a^{**}$)  relative wage changes are:
\[
\frac{d \ln \left(\frac{w_C}{w_S}\right)}{d \ln A_a}
>
\frac{d \ln \left(\frac{w_F}{w_S}\right)}{d \ln A_a}
>
\frac{d \ln p_fY_f}{d \ln A_a} =E_a=\frac{p_aY_a}{p_fY_f}
> 0 >
\frac{d \ln \left(\frac{w_F}{w_C}\right)}{d \ln A_a}.
\]
In particular, relative wage of complementary- to substitutable-labor is completely
determined by the productivity of the AI technology and relative shares of labor in AI production, and is independent of relative labor supply $\overline{L}^C/\overline{L}^S$:
		\[
		\frac{w_C}{w_S}=(A_{a} \Phi)^{1/\alpha^C_{a}}.
		\]
\end{proposition}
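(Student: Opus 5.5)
The core of the argument is to pin down $w_C/w_S$ from two price equations that hold throughout the transition phase; the inequalities then follow from a dual (unit-cost) identity for the final good together with Proposition~\ref{prop:equilibrium_wages}.

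\emph{Step 1 (the wage ratio).} In the transition phase $A_a^{*}<A_a<A_a^{**}$, both $L^S_f>0$ and $X_a>0$. These are perfect substitutes in final-goods production, so profit maximization forces $w_S=p_a$. AI is produced with constant returns, so zero profit requires $p_a$ to equal the unit cost of Cobb--Douglas production:
\[
p_a=\frac{w_S^{\alpha^S_a}w_C^{\alpha^C_a}}{A_a\Phi}.
\]
Setting $p_a=w_S$ and using $\alpha^S_a+\alpha^C_a=1$ gives $(w_C/w_S)^{\alpha^C_a}=A_a\Phi$. Hence $w_C/w_S=(A_a\Phi)^{1/\alpha^C_a}$, which involves no labor endowments. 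It follows immediately that $d\ln(w_C/w_S)/d\ln A_a=1/\alpha^C_a$.

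\emph{Step 2 (GDP growth).} I would use the final-goods zero-profit condition in dual form. With $p_f=1$ and the effective price of the composite input $L^S_f+X_a$ equal to $w_S$, we have $1=\kappa\, w_S^{\alpha^S_f}w_C^{\alpha^C_f}w_F^{\alpha^F_f}$ for a constant $\kappa$ depending on $A_f$ and the $\alpha$'s. Differentiating gives
\[
\alpha^S_f\,d\ln w_S+\alpha^C_f\,d\ln w_C+\alpha^F_f\,d\ln w_F=0.
\]
By Proposition~\ref{prop:equilibrium_wages}, $w_F\propto Y_f$, so $d\ln w_F=d\ln Y_f$. To obtain $d\ln Y_f/d\ln A_a=E_a$, I would invoke Hulten's theorem: the equilibrium is efficient, factors are in fixed supply, and the only shock is to $A_a$, so the output elasticity equals the sales share $p_aY_a/Y_f$. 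As a check, the same result can be derived directly from the income identity $Y_f=\sum_j w_j\overline L^j$ (zero profits) together with the envelope theorem applied to the planner's problem. Positivity then follows because $E_a>0$ during the transition.

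\emph{Step 3 (the remaining inequalities).} Substituting $d\ln w_C=d\ln w_S+d\ln A_a/\alpha^C_a$ into the dual identity yields
\[
(\alpha^S_f+\alpha^C_f)\,\frac{d\ln w_S}{d\ln A_a}=-\frac{\alpha^C_f}{\alpha^C_a}-\alpha^F_f E_a<0,
\]
so $w_S$ falls in absolute terms.
\begin{itemize}
\item Middle inequality: $d\ln(w_F/w_S)=d\ln Y_f-d\ln w_S$, which exceeds $d\ln Y_f=E_a\,d\ln A_a$ because $d\ln w_S<0$.
\item Last inequality: $w_F/w_C\propto 1/(\alpha^C_f+E_a\alpha^C_a)$ by Proposition~\ref{prop:equilibrium_wages}, and this is strictly decreasing because $dE_a/dA_a>0$ (stated before the proposition).
\item First inequality: $d\ln(w_C/w_S)-d\ln(w_F/w_S)=-d\ln(w_F/w_C)>0$, which is the last inequality again.
\end{itemize}

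The step I expect to be most delicate is Step 2. One must verify that Hulten's argument applies cleanly, which requires an interior transition equilibrium and no distortions. If it does not go through, the fallback is to differentiate $Y_f$ directly using the closed form for $E_a$. All other steps are short consequences of the price identities in Steps 1 and 2.
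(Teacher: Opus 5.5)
Your proof is correct. Steps~1 and~2 match the paper's Parts~1 and~4: the wage ratio comes from $p_a=w_S$ plus the AI unit cost, and the output elasticity $E_a$ comes from the envelope/Hulten argument. Your Step~3, however, takes a different route.

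The paper works with explicit closed forms in $\Lambda=\frac{\overline{L}^C}{\overline{L}^S}(A_a\Phi)^{1/\alpha^C_a}$. It shows $w_F/w_S\propto 1+\Lambda$, so the elasticities of $w_F/w_S$ and $w_F/w_C$ are $\frac{1}{\alpha^C_a}\frac{\Lambda}{1+\Lambda}$ and $-\frac{1}{\alpha^C_a}\frac{1}{1+\Lambda}$. It then checks $\frac{1}{\alpha^C_a}\frac{\Lambda}{1+\Lambda}>E_a=\frac{\alpha^S_f\Lambda-\alpha^C_f}{\alpha^C_a(1+\Lambda)}$ by direct algebra.

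You instead use the final-goods unit-cost identity to sign $d\ln w_S<0$, which makes the middle inequality immediate. You obtain the last inequality from the GDP-share wage formulas of Proposition~\ref{prop:equilibrium_wages} together with $dE_a/dA_a>0$, and the first inequality by telescoping.

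Each route buys something different. Yours is shorter and more structural, and it delivers the absolute decline of $w_S$ as a byproduct; the paper asserts that decline in the text but does not establish it in this proof. The paper's computation yields explicit elasticity formulas, which it uses in the main text (e.g., the $\Lambda\gtrless 1$ ranking of $w_F/w_S$ against $w_C/w_F$). It is also self-contained, whereas you rely on the monotonicity of $E_a$, which the paper proves only in the supplementary appendix and via $\Lambda$ anyway.

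You can drop that dependence. Your own identities give $\frac{d\ln w_C}{d\ln A_a}-E_a=\frac{\alpha^S_f-\alpha^C_aE_a}{(\alpha^S_f+\alpha^C_f)\,\alpha^C_a}$. This is positive because $w_S>0$ in \eqref{eq:wS_GDP}, so $d\ln(w_F/w_C)/d\ln A_a<0$ follows directly.
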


Proposition \ref{prop_wagegap} shows that a rise in $A_a$ increases aggregate output with an elasticity equal to AI's expenditure share: $\frac{d \ln p_fY_f}{d \ln A_a} = E_a=\frac{p_aY_a}{p_fY_f}$. This is a direct analog of Hulten's theorem \citep{hulten1978growth}, which states that the effect of a sector's productivity gain on aggregate output is measured by that sector's share in GDP. Here, however, the expenditure share is itself endogenous and rising in $A_a$, so the marginal impact of AI productivity on the economy grows as AI becomes more widely adopted. The proposition shows that the productivity gain reshapes relative wages unequally across the three labor types.

As a result, there is a clear hierarchy of winners and losers during the transition:
\begin{itemize}
    \item \textit{The Winner ($\overline L^C$):} Complementary labor benefits the most. This is because $\overline L^C$ workers face a ``dual demand'': they are needed to use the AI in the final goods sector, and they are needed to \textit{build} the AI in the automation sector.
    
    \item \textit{The Baseline ($\overline L^F$):} Final goods specific labor acts as the benchmark. Since $w_F = \alpha^F_{f} Y_f / \overline L^F$, the wage of this group tracks aggregate output growth one-for-one ($\frac{d \ln w_F}{d \ln A_a} = \frac{d \ln Y_f}{d \ln A_a}$). They gain from the expanding economy but do not receive the extra premium that $\overline L^C$ enjoys.
    
    \item \textit{The Loser ($\overline L^S$):} Because they are perfect substitutes for AI, their wage is tied to the price of AI ($w_S = p_a$). As AI productivity rises, the efficiency-adjusted cost of the AI falls, dragging $w_S$ down both at the absolute level and relative to the other groups.
\end{itemize}

\begin{figure}[t]
\centering
\includegraphics[width=0.6\textwidth]{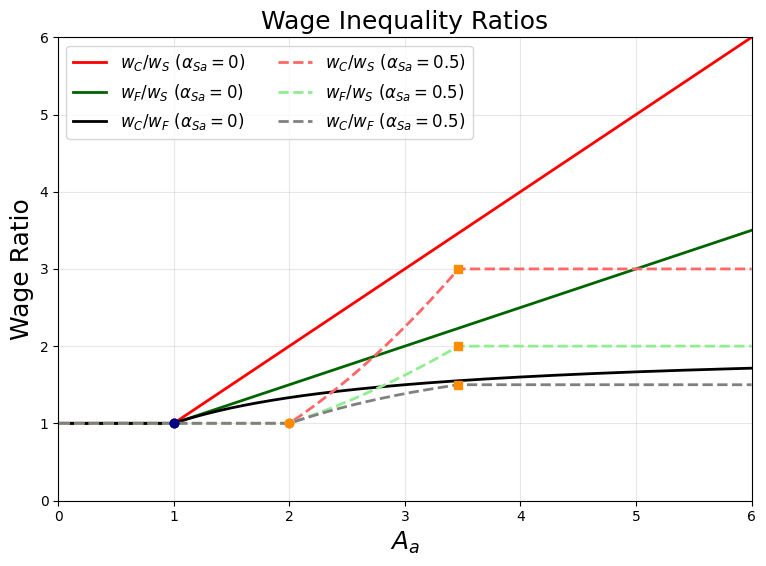}
\caption{AI Adoption and Wage Inequality}
\label{fig_wage_inequality}
\end{figure}

Figure~\ref{fig_wage_inequality} illustrates the wage inequality dynamics characterized in Proposition~\ref{prop_wagegap}, using the parameters from Figure \ref{fig_wages}. 
During the transition, complementary labor pulls ahead of all other groups. The ratio $w_C/w_S$ (red) rises most sharply, with elasticity $\frac{1}{\alpha^C_{a}}$ (i.e., $\frac{d \ln \left(\frac{w_C}{w_S}\right)}{d \ln A_a}=\frac{1}{\alpha^C_{a}}$. The ratio $w_F/w_S$ (green) increases at a slower rate, with elasticity $\frac{\Lambda}{1+\Lambda} \cdot \frac{1}{\alpha^C_{a}}$, where $\Lambda = \frac{\overline{L}^C}{\overline{L}^S} (A_{a}\Phi)^{1/\alpha^C_{a}}=\frac{w_C\overline{L}^C}{w_S\overline{L}^S} $ equals the ratio of complementary labor's equilibrium income share to that of substitutable labor. The ratio $w_C/w_F$ (black) rises at the slowest rate (in this example), with elasticity $\frac{1}{1+\Lambda} \cdot \frac{1}{\alpha^C_{a}}$. As shown in the Appendix, the ranking between the changes in $w_F/w_S$ and $w_C/w_F$ depends on whether $\Lambda$ exceeds unity: when $\Lambda > 1$, final goods specific labor gains faster relative to substitutable labor than complementary labor gains relative to final goods specific labor. Lastly, when $\alpha^S_{a}=0$ (solid lines), $w_C/w_S$ and $w_F/w_S$ grow without bound as the economy remains in endless transition, while $w_C/w_F$ rises toward the finite asymptote $\frac{\overline{L}^F}{\overline{L}^C}\frac{\alpha^S_{f}+\alpha^C_{f}}{\alpha^F_{f}}$. When $\alpha^S_{a}=0.5$ (dashed lines), the ratios stabilize upon reaching the post-AI regime, locking in a permanent redistribution of income.

The labor-supply invariance of the wage ratio in Proposition~\ref{prop_wagegap}---that $w_C/w_S$ is independent of the supplies $\overline{L}^C,\overline{L}^S$---is explained as follows.

The substitutable wage equals the AI price, $w_S = p_a$, and so the wage ratio is pinned down by the AI cost function alone, $w_C/w_S = w_C/p_a = (A_a\Phi)^{1/\alpha^C_a}$. When either group's size changes, the economy reallocates labor into or out of AI production---shifting AI adoption and the sectoral split of labor---so that this ratio is restored. Aggregate output rises or falls with the sizes of the labor forces, but the wage ratio is unchanged.

This labor-supply-invariance result delivers a stark policy implication.  Traditional supply-side interventions aimed at narrowing wage gaps by converting substitutable labor into complementary labor that one might expect to change wages, in this case do not affect inequality in wages between the two groups.  Effectively, although it decreases the wage of the complementary workers, it also decreases the price of automation in parallel.  
Such policies
are still valuable, since they increase the fraction of the population receiving the higher wage, but they do not change the wage-gap itself.  

Note also that the labor-supply invariance holds specifically during the transition, $A^{*}<A_a<A^{**}$. Because both thresholds in \eqref{threshold} scale with $(\overline{L}^{S}/\overline{L}^{C})^{\alpha^{C}_{a}}$, a change in relative supply also shifts $A^{*}$ and $A^{**}$: raising $\overline{L}^{S}/\overline{L}^{C}$ delays onset and completion, and a large enough change moves a fixed $A_a$ out of the regime---into the pre-AI or post-phase, where $w_C/w_S$ is again supply-dependent. The labor-supply factor $(\overline{L}^{S}/\overline{L}^{C})^{\alpha^{C}_{a}}$ cancels in the ratio $A^{**}/A^{*}$. Relative supply therefore relocates the invariance window without changing its width, which is pinned down by technology alone.\footnote{In Section~\ref{transsec} in the Online Supplementary Appendix, we provide an additional analysis on the width of the transition phase as a function of sectoral share parameters.}

\begin{figure}[t]
\centering
\includegraphics[width=0.9\textwidth]{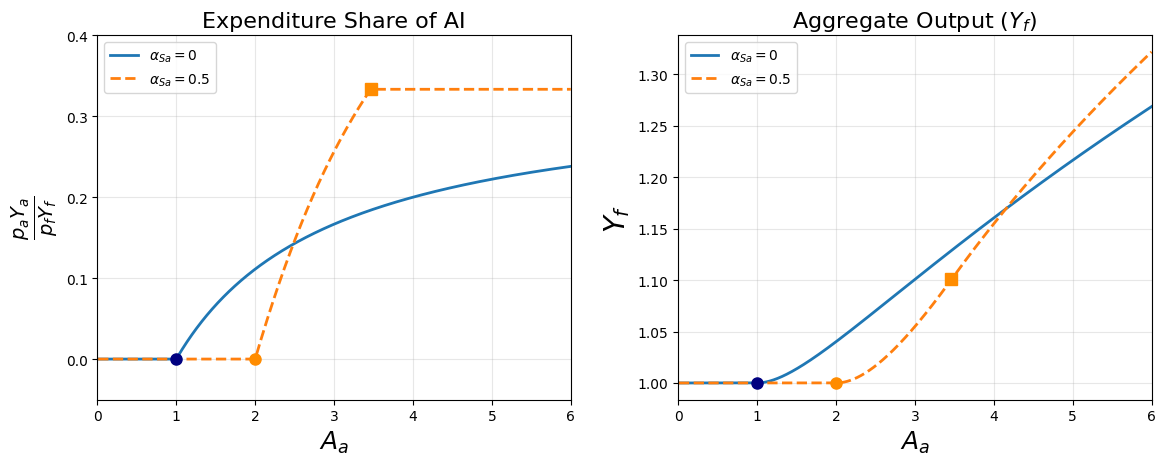}
\caption{The AI Expenditure Share in Final Goods Sector and the Aggregate Output}
\label{fig_ai_share_output}
\vspace{0.2cm}
\end{figure}

Figure~\ref{fig_ai_share_output} illustrates the aggregate dynamics of AI adoption  for the parameters from Figures \ref{fig_wages} and \ref{fig_wage_inequality}. 
The left panel shows that the AI expenditure share in final goods production rises monotonically during the transition. When $\alpha^S_{a}=0$ (solid blue), this share continues to grow towards $\alpha^S_{f}=1/3$ as the economy remains in endless transition; while when $\alpha^S_{a}=0.5$ (dashed orange), the share reaches $\alpha^S_{f}=1/3$ and stays constant at that level upon reaching the post-AI phase.

The right panel reveals that the output curves cross. Early in the transition, the $\alpha^S_{a}=0$ economy (solid blue) produces higher output because AI adoption begins sooner ($A_a^*=1$ versus $A_a^*=2$). However, as AI productivity increases, the $\alpha^S_{a}=0.5$ economy (dashed orange) eventually overtakes the blue line. This crossover reflects two reinforcing mechanisms. First, when $\alpha^S_{a}>0$, substitutable labor can be redeployed to AI production, allowing the economy to exploit higher productivity gains, rather than leaving these workers competing with increasingly efficient AI in the final goods sector. Second, the transition is faster: the economy reaches the post-AI regime at $A_a=A^{**}$, where output grows at the maximal rate with respect to $A_a$. In contrast, when $\alpha^S_{a}=0$, substitutable labor cannot contribute to AI production, limiting the aggregate gains from automation. Beyond the crossover point (around $A_a \approx 4$ in this example), the gap widens as AI productivity rises further---underscoring that the long-run benefits of AI depend critically on whether displaced workers can participate in building the technology that replaces them.

\section{Pareto-Improving AI: Tax Redistribution}

\label{SEC:INCOME_TAX}

Given that substitutable  labor not only suffers relative wage losses, but also absolute wage losses as the economy grows with the increased AI productivity, it is clear that the resulting inequality can become extreme.
Since overall GDP is increasing, it is also clear that with some redistribution, all forms of labor can 
be made better off as a result of the increased productivity of AI.\footnote{For an example where a major technological disruption caused a call for redistribution and ended up having important implications for politics, see the discussion by \cite{anelli2026} of the Free-Silver Populism movement in the 1890s in the U.S.}

We abstract away from final goods-specific labor by setting $\overline{L}^F = 0$ and $\alpha^F_{f} = 0$, because, as shown in Proposition~\ref{prop:equilibrium_wages}, their wage is proportional to aggregate output: $w_F = \alpha^F_{f} Y_f / \overline{L}^F$. Since aggregate output rises with AI productivity, these workers' benefit is proportional to gains from AI adoption and therefore do not require redistributive support. The more policy-relevant tension is between the group whose income grows faster than output (complementary labor) and the group whose income declines during the transition (substitutable labor).\footnote{As will become clear, one could include the third group with similar results but in three dimensions and an additional taxation term.  Working with two groups renders the analysis simpler in terms of figures and focuses attention on the groups who gain and lose relative to GDP growth, and so between-which transfers are most natural.} 

Thus, we simplify the model as follows:
\begin{equation*}
Y_a = A_a (L^S_{a})^{\alpha^S_{a}} (L^C_{a})^{\alpha^C_{a}}, \qquad \alpha^S_{a} + \alpha^C_{a} = 1,
\end{equation*}
\begin{equation*}
Y_f = A_f [L^S_{f} + X_a]^{\alpha^S_{f}} (L^C_{f})^{\alpha^C_{f}}, \qquad \alpha^S_{f} + \alpha^C_{f} = 1.
\end{equation*}

The analysis is tractable and important.
It is tractable for two reasons.  

One is that in a competitive market with inelastic labor supply, taxes and subsidies on wages are non-distortionary.   Thus it does not matter whether
taxes and subsidies are implemented lump sum or as a function of income or the wage.  
Thus, we can equivalently model any tax policy via taxes and subsidies on wages, which keeps notation and the analysis simple. 
We let
$\widetilde{w}_C = w_C(1-\tau_C)$ and $\widetilde{w}_S=w_S(1-\tau_S)$ be the after tax/subsidy wages, where $\tau_C,\tau_S$ are the tax rates (with negative rates being a subsidy).

The second reason is that the wages satisfy a simple equation:
\begin{equation*}
\widetilde{w}_C \overline{L}^C +\widetilde{w}_S \overline{L}^S = Y_f.
\end{equation*}

These two observations mean that a (balanced) tax policy is equivalent to choosing 
$\widetilde{w}_C ,\widetilde{w}_S $, and these pairs define a linear Pareto frontier whose labor supply weighted sums equal total GDP.

Given that AI adoption raises ${w}_C$ and reduces ${w}_S$, a Pareto-improving AI requires transfers from complementary to substitutable labor ($\tau_C>0$ and $\tau_S<0$). Note that, since taxes and subsidies are implemented via wages, and labor is supplied inelastically, effective wages to firms are unchanged and the equilibrium is as before, as are the thresholds for AI adoption ($A^*$) and full transition ($A^{**}$).

Throughout the remainder of this section, to analyze such tax policies we 
consider some starting point 
$A_a^0$ that a government views as a benchmark, 
such that $A^*\leq A_a^0 < A^{**}$, and then a productivity transition to some $A_a^1$, $A^{**} \geq A_a^1> A_a^0$.

We know from the results in Section \ref{SEC:TRANSITION} that untaxed/subsidized wages rise for $C$-workers and fall for $S$-workers.   
The Pareto improving after-tax/subsidy wages are those pictured by the wavy-marked subsegment of the new Pareto Frontier in Figure \ref{fig_paretotax}.

\begin{figure}[t]
\centering
\includegraphics[width=0.9\textwidth]{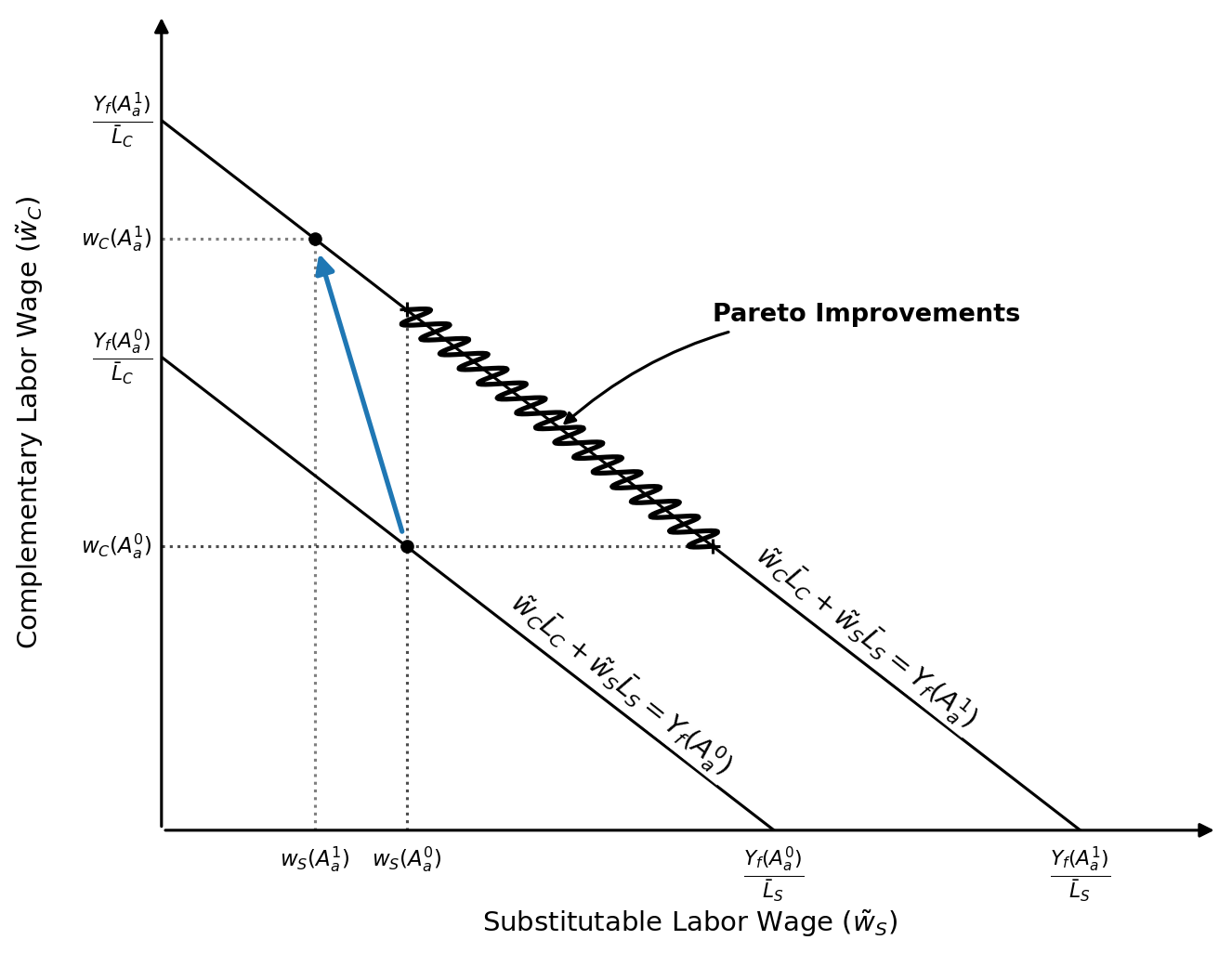}
\caption{Pareto Improving Tax Policies:  An economy starts at some productivity $A_a^0$, 
such that $A^*\leq A_a^0 < A^{**}$, and transitions to some $A^{**} \geq A_a^1> A_a^0$. Untaxed/subsidized wages rise for $C$-workers and fall for $S$-workers.   The Pareto improving after-tax/subsidy wages are those pictured by the wavy-marked subsegment of the new Pareto Frontier.}
\label{fig_paretotax}
\end{figure}

We can use the notion of Pareto improving tax policies to understand how relatively large taxes need to be to implement various targets. 
Of course, there are many possible target tax policies as there is a whole Pareto frontier.  

We first examine the minimal necessary tax policy that offers a Pareto improvement.   This is the tax that sets $\widetilde{w}_S(A_a^1)= w_S(A_a^0)$, so that the $S$-workers are just as well off after the transition {to $A_a^1$} as at the benchmark {at $A_a^0$}.  

The interesting result is that the tax rate that is needed to achieve this, 
$\tau_C (A_a^1)  $,
is {\sl single-peaked} in $A_a^1 \leq A^{**}$.
Initially the necessary tax rate increases, but then it eventually decreases.
The necessary tax rate starts out at zero, and it then increases as substitutable workers' wages fall and so taxes are needed to cover the shortfall,  and the necessary tax rate depends on the relative sizes of the labor pools.  Eventually, however, the decrease in the substitutable  wage slows as the wage approaches its post transition limit.  This, coupled with 
the fact that $w_C(A_a)$ increases faster than $w_S(A_a)$ falls (given that the total output $Y_f$ is increasing in $A_a$ for $A_a>A^*$ as shown in Proposition \ref{prop_wagegap}), means that the necessary tax rate eventually falls.  

Next, consider a tax target of splitting GDP so that labor splits GDP in some ratio $\gamma < {w}_C^1/{w}_S^1 $, so that some tax is required.
The target tax is to have
$\widetilde{w}_C^1/\widetilde{w}_S^1 =\gamma $. This policy requires a tax rate that is
{\sl increasing} in $A_a$. 
Intuitively, as AI adoption rises, the wage ratio between $C$-workers and $S$-workers widens but the target wage ratio is constant, so a larger share of $C$-workers' wages must be taxed to keep the after-tax ratio at $\gamma$.

These results are collected in the following proposition.

\begin{proposition}
\label{taxrate}
The minimal tax rate on complementary workers, $\tau_C>0$, required to hold substitutable workers at their benchmark wage (so that $\widetilde{w}_S^{1}=w_S^{0}$) is single-peaked in the productivity gain: it is zero at $A_a^{1}=A_a^{0}$, strictly increasing for $A_a^{0}<A_a^{1}<\widehat{A}_a$, and strictly decreasing for $A_a^{1}>\widehat{A}_a$, where
\[
\widehat{A}_a=A_a^{0}\,(\alpha^S_{f})^{-\alpha^C_{a}/\alpha^C_{f}} .
\]
If $\widehat{A}_a\geq A^{**}$ the required tax rate rises throughout the entire transition. By contrast, the tax rate required to hit some proportional split of GDP (so that $\widetilde{w}_C^{1}/\widetilde{w}_S^{1}=\gamma<w_C^{1}/w_S^{1}$) is strictly increasing in the productivity gain.
\end{proposition}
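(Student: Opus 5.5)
The plan is to reduce both claims to explicit functions of a single variable built from AI productivity. Throughout the transition, Proposition~\ref{prop_wagegap} gives the ratio
\[
r(A_a)\equiv \frac{w_C}{w_S}=(A_a\Phi)^{1/\alpha^C_a},
\]
which is independent of labor supplies. Setting $\alpha^F_f=0$ and adding the two wage bills in Proposition~\ref{prop:equilibrium_wages} gives the accounting identity
\[
w_S\overline L^S+w_C\overline L^C=Y_f .
\]
To pin down the \emph{levels} of the wages, and not only their ratio, I will use the zero-profit condition of the final-goods sector under the normalization $p_f=1$. The effective price of the composite input $L^S_f+X_a$ is $w_S=p_a$, so the Cobb--Douglas unit-cost condition reads
\[
A_f^{-1}\left(\frac{w_S}{\alpha^S_f}\right)^{\alpha^S_f}\left(\frac{w_C}{\alpha^C_f}\right)^{\alpha^C_f}=1 ,
\]
which means $w_S^{\alpha^S_f}w_C^{\alpha^C_f}$ is constant in $A_a$.

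Combining this constant with $w_C=r\,w_S$ and $\alpha^S_f+\alpha^C_f=1$ gives $w_S\propto r^{-\alpha^C_f}$ and $w_C\propto r^{\alpha^S_f}$. Now define
\[
x\equiv r(A_a^1)/r(A_a^0)=(A_a^1/A_a^0)^{1/\alpha^C_a}\ge 1 .
\]
Then $w_S^1=w_S^0x^{-\alpha^C_f}$ and $w_C^1=w_C^0x^{\alpha^S_f}$. These are the only inputs needed. As the paper notes, taxes on inelastic labor leave the equilibrium unchanged, so the raw wages at $A_a^1$ are exactly these.

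For the first claim, budget balance with $\widetilde w_S^1=w_S^0$ requires $\tau_C w_C^1\overline L^C=(w_S^0-w_S^1)\overline L^S$. Hence
\[
\tau_C=\frac{\overline L^S w_S^0}{\overline L^C w_C^0}\,\bigl(x^{-\alpha^S_f}-x^{-1}\bigr),
\]
using $\alpha^S_f+\alpha^C_f=1$. This vanishes at $x=1$. Its derivative in $x$ is proportional to $x^{-2}\bigl(1-\alpha^S_f x^{\alpha^C_f}\bigr)$, which is positive for $x<(\alpha^S_f)^{-1/\alpha^C_f}$ and negative beyond. Since $x$ is strictly increasing in $A_a^1$, the peak in $A_a^1$ is at $\widehat A_a=A_a^0(\alpha^S_f)^{-\alpha^C_a/\alpha^C_f}$, as stated. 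The clause about $\widehat A_a\ge A^{**}$ is immediate, because the formulas are valid only for $A_a^1\le A^{**}$.

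For the second claim, the target $\widetilde w_C^1/\widetilde w_S^1=\gamma$ together with the budget identity gives $\widetilde w_C^1=\gamma Y_f/(\gamma\overline L^C+\overline L^S)$. Therefore
\[
\tau_C=1-\frac{\gamma}{\gamma\overline L^C+\overline L^S}\cdot\frac{Y_f}{w_C},
\qquad
\frac{Y_f}{w_C}=\overline L^C+\frac{\overline L^S}{r}.
\]
Since $r$ is strictly increasing in $A_a$, the ratio $Y_f/w_C$ is strictly decreasing, so $\tau_C$ is strictly increasing. The condition $\gamma<r$ ensures $\tau_C>0$.

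The step I expect to need the most care is the level pin-down of $w_S$ through the unit-cost condition. One must check that during the transition $w_S$ equals both the marginal product of $L^S_f$ and $p_a$, which holds because both uses are interior there. The rest is one-variable calculus.
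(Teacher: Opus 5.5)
Your proposal is correct and follows the paper's proof essentially step for step. It uses the level pin-down $w_S=\Theta\Omega^{-\alpha^C_f}$, $w_C=\Theta\Omega^{\alpha^S_f}$ from the final-good unit-cost identity, reduces the budget-balance rate $\tau_C=(w_S^0-w_S^1)\overline{L}^S/(w_C^1\overline{L}^C)$ to a one-variable function whose derivative changes sign once, and obtains the closed form $\tau_C=\frac{\overline{L}^S}{\gamma\overline{L}^C+\overline{L}^S}(1-\gamma/\Omega_1)$ for the ratio target. The only differences are cosmetic: you parametrize by $x=\Omega_1/\Omega_0$ rather than $A_a^1/A_a^0$, which slightly simplifies the derivative, and you reach the $\gamma$-formula via $Y_f/w_C=\overline{L}^C+\overline{L}^S/r$ instead of eliminating $\tau_S$.
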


Since total GDP rises with $A_a$, the tax that holds $S$-workers at their benchmark starting wage ($\widetilde{w}_S^1 = w_S^0$) is always Pareto-improving: with $S$-workers' wages held fixed, the net gain goes to $C$-workers, so $\widetilde{w}_C^1 > w_C^0$.

For $\gamma$ sufficiently close to $ w_C^0/w_S^0$, the proportional policy (so that $\widetilde{w}_C^{1}/\widetilde{w}_S^{1}=\gamma<w_C^{1}/w_S^{1}$) is also Pareto-improving. To see this, for $\gamma = w_C^0/w_S^0$, the policy is Pareto-improving since the total wage bill increases when transitioning from $A_a^0$ to $A_a^1$ and hence both after-tax wages scale up with the tax. For $\gamma$ that deviates too substantially from this initial ratio, in either direction, the after tax wage of one of the two groups will be below its initial level.

\paragraph{Comparative Statics regarding Production Technologies and Labor Supplies.}

Beyond how Pareto improving tax rates depend on the productivity gain, we can also see how they change with the relative sizes of the labor force, as well as parameters of the production functions.

\begin{proposition}
\label{taxrate2}
Both the minimal tax rate on complementary workers required to hold substitutable workers at their benchmark wage and the tax rate required to hit some proportional split of GDP are increasing in $\frac{\overline{L}^S}{\overline{L}^C}$.

The minimal tax rate on complementary workers, $\tau_C>0$, required to hold substitutable workers at their benchmark wage is decreasing in the relative importance of substitutable labor in final goods production, $\alpha_f^S$; while the tax rate required to hit some proportional split of GDP is independent of $\alpha_f^S$.
\end{proposition}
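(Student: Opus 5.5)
The plan is to write both tax rates in closed form as functions of $\lambda\equiv\overline{L}^S/\overline{L}^C$, $\alpha^S_f$ and the wage ratio $R(A_a)\equiv w_C/w_S=(A_a\Phi)^{1/\alpha^C_a}$ from Proposition~\ref{prop_wagegap}. Then both comparative statics become one-line sign checks. Throughout, $A_a^0<A_a^1$ are held fixed and assumed to lie inside the transition window for the parameter values being compared. The window itself moves with $\lambda$ by \eqref{threshold}, so the claim is local to parameters for which this holds. The basic ingredients are the accounting identity $w_S\overline{L}^S+w_C\overline{L}^C=Y_f$ in the two-group model, and the fact that $R$ depends only on $A_a$ and the AI-sector shares. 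In particular $R$ is independent of $\lambda$ and of $\alpha^S_f$.

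\emph{Minimal tax.} Budget balance with $\widetilde w_S^1=w_S^0$ gives $\tau_C w_C^1\overline{L}^C=(w_S^0-w_S^1)\overline{L}^S$, hence
\[
\tau_C=\frac{\lambda}{R^1}\Big(\frac{w_S^0}{w_S^1}-1\Big).
\]
The key step is to show that $w_S^0/w_S^1$ depends only on technology. I would use the final-goods first-order conditions. Write $Z=L^S_f+X_a$. The ratio of marginal products gives $Z/L^C_f=R\,\alpha^S_f/\alpha^C_f$. Substituting this into $w_S=p_a=A_f\alpha^S_f (Z/L^C_f)^{-\alpha^C_f}$ yields
\[
w_S=A_f\alpha^S_f\Big(R\,\alpha^S_f/\alpha^C_f\Big)^{-\alpha^C_f},
\]
which is independent of labor supplies. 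Therefore
\[
\frac{w_S^0}{w_S^1}=\Big(\frac{R^1}{R^0}\Big)^{\alpha^C_f}=\Big(\frac{A_a^1}{A_a^0}\Big)^{\alpha^C_f/\alpha^C_a}.
\]
This gives
\[
\tau_C=\frac{\lambda}{R^1}\Big[\Big(\frac{A_a^1}{A_a^0}\Big)^{\alpha^C_f/\alpha^C_a}-1\Big].
\]
The formula is linear in $\lambda$ with a positive coefficient, so $\tau_C$ is increasing in $\lambda$. For the $\alpha^S_f$ result, note that $R^1$ does not involve $\alpha^S_f$. Raising $\alpha^S_f$ lowers $\alpha^C_f=1-\alpha^S_f$ and hence the exponent applied to $A_a^1/A_a^0>1$. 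The bracket therefore falls, so $\tau_C$ is decreasing in $\alpha^S_f$.

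\emph{Proportional target.} Imposing $\widetilde w_C=\gamma\widetilde w_S$ together with $\widetilde w_C\overline{L}^C+\widetilde w_S\overline{L}^S=Y_f=w_S\overline{L}^S+w_C\overline{L}^C$ gives $\widetilde w_C=\gamma Y_f/(\gamma\overline{L}^C+\overline{L}^S)$. Dividing by $w_C$ yields
\[
1-\tau_C=\frac{\gamma(\lambda/R+1)}{\gamma+\lambda}.
\]
The $\lambda$-derivative of the right side has the sign of $\frac{\gamma}{R}(\gamma+\lambda)-\gamma\big(\frac{\lambda}{R}+1\big)=\gamma\big(\frac{\gamma}{R}-1\big)$. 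This is negative because $\gamma<R$ by assumption, so $\tau_C$ is increasing in $\lambda$. The expression involves only $\gamma$, $\lambda$ and $R$, so it is independent of $\alpha^S_f$.

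I expect the main obstacle to be the step establishing that $w_S$ is a function of $R$ alone, and hence free of $\lambda$. This step has to rely on Proposition~\ref{prop_wagegap}'s supply invariance of $R$. It also requires care that the final-goods first-order conditions apply at an interior transition allocation, which the restriction $A^*\le A_a^0<A_a^1\le A^{**}$ guarantees.
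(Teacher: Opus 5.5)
Your proposal is correct and follows essentially the paper's route. Both rates are written in closed form via the supply-invariant ratio $(A_a\Phi)^{1/\alpha^C_a}$ and the level $w_S=\Theta\,\Omega^{-\alpha^C_f}$, which you rederive from the final-good first-order conditions instead of the unit-cost identity. Your expression $\frac{\lambda}{R^1}\big[(A_a^1/A_a^0)^{\alpha^C_f/\alpha^C_a}-1\big]$ is algebraically the same as the paper's, so the sign checks coincide. Your remark that $A_a^0,A_a^1$ must stay inside the transition window, which itself shifts with $\overline{L}^S/\overline{L}^C$ and $\alpha^S_f$, is a caveat the paper leaves implicit.
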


The importance of substitutable labor in the AI-sector, $\alpha^S_{a}$, enters both targets in more complicated ways and so produces ambiguous comparative statics.  
In particular, note that $w_C/w_S = (A_a\Phi)^{1/\alpha^C_{a}}$, where $\alpha^C_{a} = 1 - \alpha^S_{a}$ and $\Phi = (\alpha^C_{a})^{\alpha^C_{a}}(\alpha^S_{a})^{\alpha^S_{a}}$. Whether a higher $\alpha^S_{a}$ widens or narrows the wage ratio depends on the productivity level: $\partial\ln(w_C/w_S)/\partial\alpha^S_{a} = \ln(A_a\alpha^S_{a})/(\alpha^C_{a})^2$, which is positive when $A_a\alpha^S_{a} > 1$ and negative when $A_a\alpha^S_{a} < 1$.\footnote{To see this, note that $\ln(w_C^0/w_S^0) = \frac{1}{\alpha^C_{a}}[\ln A_a^0 + \alpha^C_{a}\ln\alpha^C_{a} + \alpha^S_{a}\ln\alpha^S_{a}]$. Differentiating with respect to $\alpha^S_{a}$ gives $\partial\ln(w_C^0/w_S^0)/\partial\alpha^S_{a} = \ln(A_a^0\,\alpha^S_{a})/(\alpha^C_{a})^2$, which changes sign at $A_a^0\alpha^S_{a} = 1$.} For the $\gamma$-ratio target, the required tax increases with $w_C^1/w_S^1$, so it rises with $\alpha^S_{a}$ when $A_a^1\alpha^S_{a} > 1$ and falls when $A_a^1\alpha^S_{a} < 1$. For the wage target $\widetilde{w}_S = w_S^0$, $\alpha^S_{a}$ changes the wage ratio at both $A_a^0$ and $A_a^1$, affecting the deficit and the tax base simultaneously, and the net effect critically depends on the levels $A_0$ and $A_1$. 

Overall, our findings emphasize that when AI is produced by labor, redistribution policies are not static: optimal tax rates depend systematically on the productivity of AI, and must therefore evolve as the technology and the economy progress.

\section{Tax Policy with Monopoly AI Production}

\label{SEC:MONOPOLY}

We now consider what happens if AI production becomes monopolized. 
To keep the analysis uncluttered, we presume that the final goods sector remains competitive and maintain the production functions used in Section~\ref{SEC:INCOME_TAX}. 

Throughout this section, we consider starting at a level $A_a^{0}$ and transitioning to higher level
$A_a^{1}$ in region $A^{*}\le A_a^{0}<A_a^{1}\le A^{**}_{m}$.

First, note that without any government intervention, 
a monopolist AI producer restricts output and 
slows the transition to AI.   This in turn slows the decline of the substitutable wage, and also results in lower complementary wages and aggregate output compared to the competitive case.  The monopolist then captures some AI surplus as pure profit. This alters both the distribution of gains from AI and the instruments available for redistribution. 

In particular, a key difference from the competitive case is that total output is now divided among three claimants rather than two:
\begin{equation*}
  Y_f^{m} =  \widetilde{w}_S^{m}\,\overline{L}^S + \widetilde{w}_C^{m}\,\overline{L}^C + \widetilde{\Pi}^{m} ,
    \label{eq:budget_mono}
\end{equation*}
where $\widetilde{\Pi}^{m}$ is the monopolist's after-tax profit.

First, let us note that if the planner either puts in a price control that pegs the AI price to the competitive level, or puts in a full tax on monopoly profits, then the monopolist earns no rents and the problem simplifies to the competitive case.  There are many reasons to expect that full price controls and full monopoly taxes are not policy tools that would be chosen by a government, and so we also analyze what happens when a government instead employs partial taxes on monopoly profits that target worker welfare levels similar to what we analyzed in the competitive case.

The planner now has two tax instruments that can be used to finance a subsidy to substitutable workers: a tax on monopoly profit at rate
$\tau_\pi$ and a tax on complementary wages at rate $\tau_C$.

We model the AI producer as a monopolist, within an otherwise general equilibrium framework. This differs from a
partial-equilibrium formulation in which the monopolist takes $w_S^m$ and $w_C^m$
as fixed and faces AI demand only as a function of $p_a$. That formulation is not
well posed here: because AI and substitutable labor are perfect substitutes in
final-good production, wages adjust in equilibrium to the monopoly price.\footnote{If one examines demand for AI with fixed wages, it would be either 0 or full AI depending on whether $p_a$ was below or above $w_S$.  This artificial discontinuity would be because the monopolist did not anticipate wages changing to equilibrate to $p_a$.}    
Thus, the AI producer chooses its price and input levels while anticipating the
equilibrium response of wages, AI demand, and labor allocations induced by the
competitive final-good sector and labor-market clearing.

The monopolist's profit, $\Pi^m$,is
$$
\Pi^m=p_aY_a-w_S^mL^S_a-w_C^mL^C_a = p_a A_a(L^S_a)^{\alpha^S_a}(L^C_a)^{\alpha^C_a}
-w_S^mL^S_a-w_C^mL^C_a.
$$
The variables that the monopolist controls are
$p_a,L^S_a,L^C_a$,
and these are chosen 
to maximize after-tax profit
$(1-\tau_\pi)\Pi^m$, subject to the final-good sector's equilibrium reaction and market clearing conditions. 

In particular, unlike a partial equilibrium monopolist, here the monopolist anticipates the general equilibrium effects of its choices, as 
those in turn impact the demand for AI as a function of the price.   
In particular, the choice of labor inputs by the monopolist influences the market clearing wage $w_C$, which in turn influences the monopolist's profit.  Essentially, in this problem the monopolist has both monopoly and monopsony power, and by strategically choosing its inputs as well as the price of its output, it impacts the overall profits.  Intuitively, by using more substitutable labor itself, the monopolist can drive up the demand for AI for any given price level.

In Supplementary Appendix Section \ref{sec:monopoly_struc}, we consider an alternative monopoly
formulation in which the monopolist only chooses its price/output to maximize profit, chooses the
cost-minimizing labor mix given the resulting wages.  We compare this formulation with the GE-monopoly
formulation used here. The GE-monopoly formulation here has this extra monopsony-like aspect that the
firm chooses input quantities while internalizing their effect on equilibrium
wages. The alternative formulation in the appendix leads to a different allocation and 
lower profit, because it restricts the firm to cost-minimizing input bundles,
but the main qualitative implications remain similar.

During the transition, the final-good sector uses both AI and
substitutable labor, which implies $w_S^m=p_a$. 
Anticipating $w_S^m=p_a$, the
profit can be written as a choice over $L^S_a,L^C_a$, with
\[
\Pi^m
=
w_S^m\left(A_a(L^S_a)^{\alpha^S_a}(L^C_a)^{\alpha^C_a}-L^S_a\right)
-w_C^mL^C_a,
\]
where $w_S^m$ and $w_C^m$ are the endogenous functions of $(L^S_a,L^C_a)$ defined
by the final-good sector's reaction, with $w_S^m=p_a$ imposed along the interior
equilibrium locus. 

Importantly, in this reduced-form objective $w_S^m$ and $w_C^m$ are not fixed
wages; they are the endogenous equilibrium wage functions induced by the
final-good sector's reaction and labor-market clearing. The AI price is embedded in the inverse general-equilibrium relation \[ p_a=w_S^m(L_a^S,L_a^C). \]

In Appendix~\ref{app:monopoly},  we provide the explicit full monopoly mixed with
general-equilibrium formulation and derive the corresponding first-order
conditions. Lemma~\ref{lem:mon_alloc} in the Appendix shows that the monopolist restricts scale relative to competition: $L^{C,m}_a<L^{C,comp}_a$, $L^{S,m}_a<L^{S,comp}_a$ and $Y_a^{m}<Y_a^{comp}$.

For any $\tau_\pi < 1$, the after-tax objective is a positive scaling of $\Pi^{m}$ and the optimal allocation is the same as without taxation. When $\tau_\pi = 1$, the monopolist is indifferent across all production levels, and we assume it chooses the competitive allocation.

We carry out the same exercise as in the competitive case, considering an economy under monopoly at benchmark productivity $A_a^0$, and suppose AI productivity increases to $A_a^1 > A_a^0$. 
The total surplus from the technology transition from $A_0$ to $A_1$ decomposes as :
\begin{equation*}
    Y_f^{m,1} - Y_f^{m,0} = \underbrace{(w_C^{m,1} - w_C^{m,0})\overline{L}^C}_{\text{C-worker gain}} - \underbrace{(w_S^{m,0} - w_S^{m,1})\overline{L}^S}_{\text{S-worker deficit}} + \underbrace{(\Pi^{m,1} - \Pi^{m,0})}_{\text{profit gain}}.
    \label{eq:surplus_decomp}
\end{equation*}

The profit tax revenue partially (or fully) finances the transfer to substitutable workers, and any remaining deficit is covered by a tax on complementary wages. Pareto-improving AI for workers requires that no worker group is worse off after transfers at $A_a^1$ than at the benchmark $A_a^0$, and at least one being better-off (one of the first two inequalities being strict), 
and after-tax monopoly profit stays non-negative: \begin{equation*} 
\widetilde{w}_S^{m} \geq w_S^{m,0}, \qquad \widetilde{w}_C^{m} \geq w_C^{m,0}, \qquad \widetilde{\Pi}^{m} \geq 0. \label{eq:pareto_constraints_mono} 
\end{equation*}

The monopolist's output level does not admit a closed-form solution. We therefore combine analytical results with simulations to describe the monopoly equilibrium.

This analysis establishes four results.

\begin{itemize}
\item First, a tax on monopoly profits alone cannot finance Pareto-improving redistribution until AI productivity is sufficiently high: near the onset of adoption the monopolist's rents fall short of the substitutable-worker deficit, so the profit tax must be supplemented by a tax on complementary workers. 
\item Second, a complementary-worker tax alone is sufficient, just as under competition, because the total wage bill rises throughout the transition. Under monopoly this requires only that AI productivity, $A_a$, grows proportionally faster than the markup $\mu = p_a/c$---so that the monopolist passes part of each productivity gain into output rather than absorbing it as a higher markup. While we show analytically in Proposition \ref{prop:ctax_mono} that this is the sufficient condition, we verify through simulations that it holds throughout the transition. 
\item Third, removing the monopolist's market power, through a price cap or full profit taxation, restores the competitive first-best; when full taxation is infeasible, partial profit taxation leaves complementary workers better off than under unregulated monopoly but still worse off than under competition, so the output loss from monopoly is borne by the workforce. 
\item Fourth, for the alternative target of a fixed relative-wage split between the two labor types, the required tax is smaller than under competition, since monopoly compresses wage inequality.
\end{itemize}

Let us begin with the first result mentioned above.  Suppose the planner institutes a policy that targets maintaining substitutable workers at the benchmark wage of $\widetilde{w}_S = w_S^{m,0}$. 
Under the constraint $\widetilde{\Pi}^{m} \geq 0$ the planner may tax profit up to $\tau_\pi = 1$, so the profit tax alone ($\tau_C = 0$) achieves Pareto-improving AI if and only if\footnote{With $\tau_C=0$, complementary workers keep their untaxed wage $w_C^{m,1}$, so the remaining Pareto constraint $\widetilde{w}_C^{m}\geq w_C^{m,0}$ reduces to $w_C^{m,1}\geq w_C^{m,0}$; which is satisfied because C-worker wage is increasing in $A_a$ throughout the transition.}
\begin{equation}
    \Pi^{m,1} \geq (w_S^{m,0} - w_S^{m,1})\overline{L}^S.
\end{equation}
Figure~\ref{fig:mono_profit_deficit} shows that this fails near $A^*$---the monopolist production and its rents are too small to cover the deficit.  The profits are sufficient to cover the wage shortfall only once $A_a$ passes a crossing point (here around $A_a \approx 5.9$ for $\alpha^S_{a}=0$). Monopoly rents therefore accumulate slowly, and profit taxation suffices on its own only at high AI productivity.

\begin{figure}[t]
\centering
\includegraphics[width=0.52\textwidth]{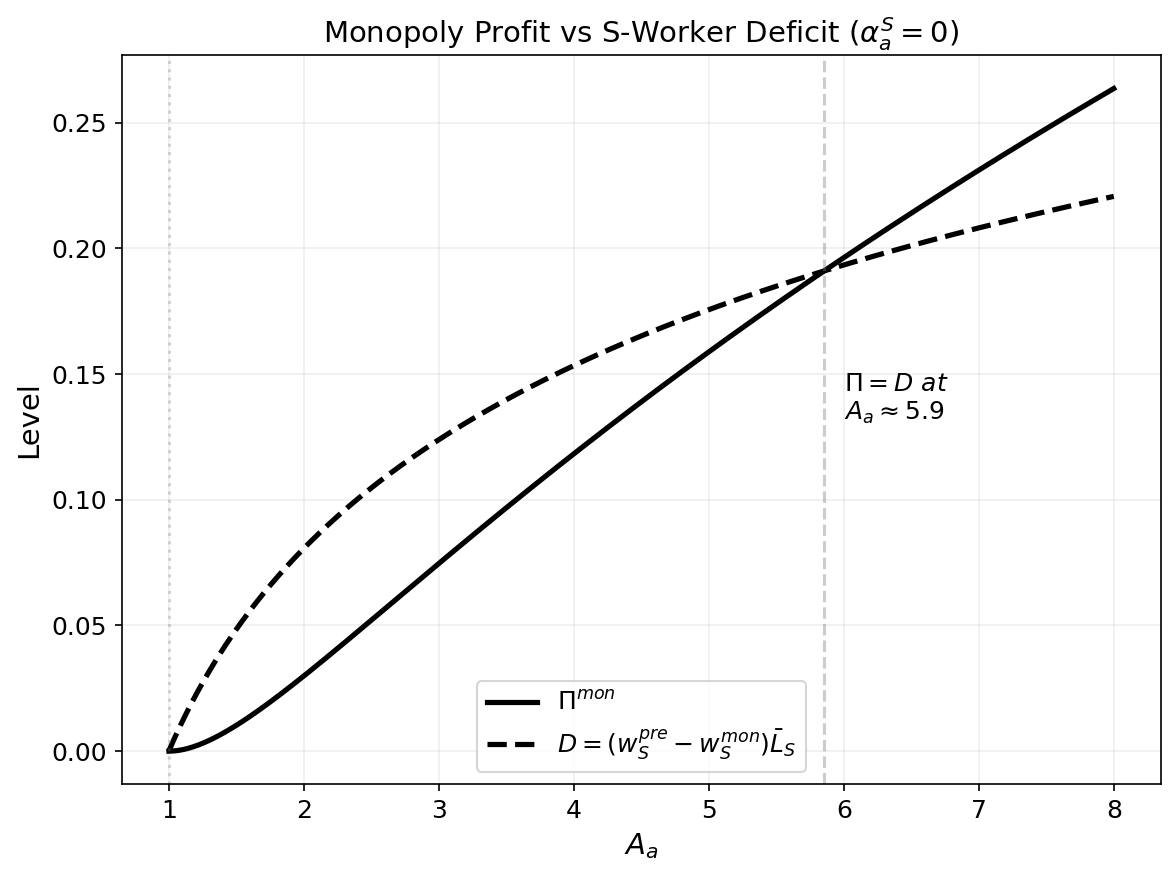}
\caption{monopoly Profit vs Substitutable-Worker Deficit}
\label{fig:mono_profit_deficit}

\vspace{0.1cm}
\begin{minipage}{0.95\textwidth}
\footnotesize
\textit{Notes:} Monopoly profit $\Pi^{m}$ (solid) vs.\ the substitutable-worker deficit $D = (w_S^{pre} - w_S^{m})\overline{L}^S$ (dashed); the vertical line marks the crossing $\Pi^{m} = D$. Parameters: $\alpha^S_{a} = 0$, $\alpha^S_{f} = \alpha^C_{f} = 0.5$, $\overline{L}^S = \overline{L}^C = 1$, $A_f = 1$.
\end{minipage}
\end{figure}

The intuition behind this is that near the onset of AI adoption, monopoly profit is second order in the increase in productivity relative to
adoption threshold $A^*$, while the substitutable-worker wage deficit is first order. Hence a profit tax alone cannot finance
Pareto-improving transfers arbitrarily close to the onset; it must be
supplemented by a tax on complementary workers. The simulations show that, as
AI productivity rises further, monopoly profit eventually overtakes the deficit
and profit taxation alone suffices.

Next, let us consider what happens when there is no tax on the monopolist
and only the complementary workers are taxed.  

A complementary-worker tax alone ($\tau_\pi = 0$) achieves Pareto-improving AI if and only if the complementary-worker gain covers the deficit,
\begin{equation}
\label{wagetaxonly}
    (w_C^{m,1} - w_C^{m,0})\overline{L}^C > (w_S^{m,0} - w_S^{m,1})\overline{L}^S,
\end{equation}
which, as in the competitive case, is equivalent to the total wage bill increasing. Figure~\ref{fig:mono_output_gains} confirms that the surplus $G_C^{m}-D^{m}$---smaller than under competition, since output is lower and part of the surplus is taken as profit---remains strictly positive throughout the transition, so the complementary-worker tax alone is always sufficient. As shown in the next result the total wage bill rises in $A_a$ whenever $A_a/\mu$ rises, where $\mu = p_a/c$ is the markup of the monopolist.

\begin{figure}[t]
\centering
\includegraphics[width=0.9\textwidth]{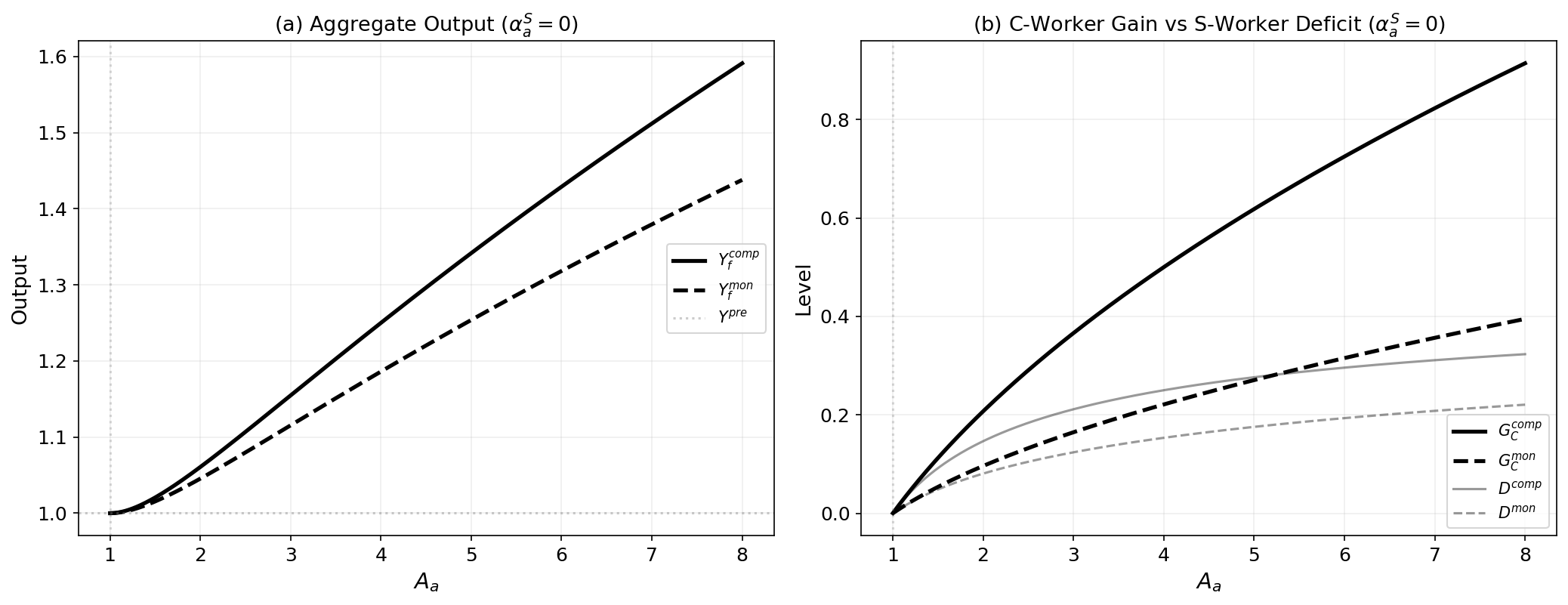}
\caption{Output Levels and Workers' Gains and Losses under Competition vs. Monopoly}
\label{fig:mono_output_gains}

\vspace{0.1cm}
\begin{minipage}{0.95\textwidth}
\footnotesize
\textit{Notes:} Panel~(a): aggregate output under competition (solid) and monopoly (dashed). Panel~(b): complementary-worker gain $G_C = (w_C - w_C^{pre})\overline{L}^C$ (thick) vs.\ substitutable-worker deficit $D = (w_S^{pre} - w_S)\overline{L}^S$ (thin) under both structures; $G_C > D$ at every $A_a > A^*$. Parameters: $\alpha^S_{f} = \alpha^C_{f} = 0.5$, $\alpha^S_{a} = 0$, $\overline{L}^S = \overline{L}^C = 1$, $A_f = 1$.
\end{minipage}
\end{figure}

Let
$\mu=p_a/c\ge 1$ denote the monopolist's markup of the AI price, $p_a$, over unit cost, $c$.

\begin{proposition}[Sufficiency of the complementary-worker tax under monopoly]
\label{prop:ctax_mono}
Consider a
planner targeting maintaining substitutable workers at wages $\widetilde{w}_S=w_S^{m,0}$.  A tax
on complementary wages alone ($\tau_\pi=0$, $\tau_C>0$) achieves Pareto-improving AI 
for workers if and
only if AI productivity grows proportionally faster than the markup,
\[
\frac{A_a^{1}}{A_a^{0}} > \frac{\mu^{1}}{\mu^{0}}.
\]

\end{proposition}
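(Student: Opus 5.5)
The plan is to reduce the Pareto condition to a statement about the total wage bill, and then show that under monopoly the wage bill is a monotone function of the single index $A_a/\mu$.

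\emph{Reduction to the wage bill.} With $\tau_\pi=0$ and a balanced transfer from $C$- to $S$-workers, setting $\widetilde{w}_S^{m,1}=w_S^{m,0}$ leaves $C$-workers with
\[
\widetilde{w}_C^{m,1}\overline{L}^C=W^{m,1}-w_S^{m,0}\overline{L}^S,
\qquad W^m\equiv w_S^m\overline{L}^S+w_C^m\overline{L}^C .
\]
Hence $\widetilde{w}_C^{m,1}> w_C^{m,0}$ if and only if $W^{m,1}>W^{m,0}$, which is exactly condition (\ref{wagetaxonly}). It therefore suffices to show that $W^{m,1}>W^{m,0}$ if and only if $A_a^1/\mu^1>A_a^0/\mu^0$.

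\emph{Step 1: the wage ratio under monopoly.} The Cobb--Douglas unit cost of AI is
\[
c=\frac{1}{A_a}\Big(\frac{w_S}{\alpha^S_a}\Big)^{\alpha^S_a}\Big(\frac{w_C}{\alpha^C_a}\Big)^{\alpha^C_a}
=\frac{w_S}{A_a\Phi}\Big(\frac{w_C}{w_S}\Big)^{\alpha^C_a}.
\]
This holds whatever input mix the monopolist actually uses, since $\mu$ is defined relative to unit cost. During the transition the final-goods sector uses both AI and $S$-labor, so $p_a=w_S^m$. Substituting $p_a=\mu c$ and cancelling $w_S^m$ gives
\[
r\equiv \frac{w_C^m}{w_S^m}=\Big(\frac{A_a\Phi}{\mu}\Big)^{1/\alpha^C_a}.
\]
This is the competitive formula of Proposition~\ref{prop_wagegap}, with $A_a$ replaced by the effective productivity $A_a/\mu$. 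So $r$ is strictly increasing in $A_a/\mu$.

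\emph{Step 2: wages as functions of $r$.} The final-goods sector is competitive. Let $E=L^S_f+X_a$. Its first-order conditions give $r=(\alpha^C_f/\alpha^S_f)\,E/L^C_f$ and $w_S=\alpha^S_fA_f(L^C_f/E)^{\alpha^C_f}$. Eliminating $E/L^C_f$ yields
\[
w_S^m=\kappa\,r^{-\alpha^C_f},\qquad w_C^m=\kappa\,r^{\alpha^S_f},\qquad \kappa=\alpha^S_fA_f(\alpha^C_f/\alpha^S_f)^{\alpha^C_f},
\]
so that
\[
W^m(r)=\kappa\big(\overline{L}^S r^{-\alpha^C_f}+\overline{L}^C r^{\alpha^S_f}\big).
\]
Differentiating,
\[
\frac{dW^m}{dr}=\kappa\, r^{-\alpha^C_f-1}\big(\alpha^S_f\overline{L}^C r-\alpha^C_f\overline{L}^S\big),
\]
which is positive exactly when $r>\bar r\equiv \alpha^C_f\overline{L}^S/(\alpha^S_f\overline{L}^C)$. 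By (\ref{threshold}), $\bar r$ is the pre-AI wage ratio, $\bar r=(A^*\Phi)^{1/\alpha^C_a}$.

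\emph{Step 3 (main obstacle): $r\ge\bar r$ throughout the monopoly transition.} The difficulty is that the monopolist's allocation has no closed form, so I will avoid solving for it and use only two facts. First, $L^C_f\le\overline{L}^C$. Second, $E\ge \overline{L}^S$, which holds because $X_a\ge L^S_a$. The latter follows from nonnegativity of profit: since $p_a=w_S$,
\[
\Pi^m=w_S^m\,(X_a-L^S_a)-w_C^mL^C_a\ge 0 .
\]
(The monopolist can always secure $\Pi^m\ge0$ by producing nothing.) Combining the two facts, $r=(\alpha^C_f/\alpha^S_f)E/L^C_f\ge\bar r$, with strict inequality once AI is produced with $L^C_a>0$. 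Hence $W^m$ is strictly increasing in $r$ on the relevant region.

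\emph{Conclusion.} Chaining Steps 1--3, $W^{m,1}>W^{m,0}$ if and only if $r^1>r^0$, which holds if and only if $A_a^1/\mu^1>A_a^0/\mu^0$. By the reduction above, this is exactly the condition under which the complementary-worker tax alone achieves Pareto-improving AI for workers.
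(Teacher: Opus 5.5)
Your proof is correct and follows essentially the same route as the paper: reduce Pareto improvement to a rise in the total wage bill, write the wage bill as a function of a single allocation index that is increasing beyond its pre-AI value (your $r$ is the paper's $R$ scaled by $\bar r$, so your $W(r)$ is the paper's $Y^{pre}f(R)$ from Lemma~\ref{lem:wagebill}), and tie that index to $A_a/\mu$ through the markup formula $w_C^m/w_S^m=(A_a\Phi/\mu)^{1/\alpha^C_a}$. The only difference is in how you show the economy sits on the increasing branch ($r>\bar r$): you use nonnegativity of profit, which gives $X_a\ge L^S_a$, whereas the paper uses the input-mix result $Y_a-L^S_a=\alpha^C_aY_a$ from Lemma~\ref{lem:mon_alloc}; your argument is slightly more elementary because it does not need the input-mix characterization.
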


The proposition shows that a complementary-worker tax alone suffices for Pareto improvement for workers whenever the monopolist passes a positive fraction of each productivity gain into output rather than the markup. Whether the required tax rate is higher or lower than under competition then depends on how the markup moves with $A_a$. This can be explained as follows. The required rate equals the substitutable-worker deficit
being financed divided by the complementary-wage base being taxed. Monopoly unambiguously
shrinks the base---restricting AI production holds the complementary wage below its
competitive level, $w_C^{m}<w_C^{comp}$---and this alone raises the required rate; its
effect on the deficit, by contrast, is exactly what the markup governs. Since
$w_S^{m}=w_S^{comp}\,\mu^{\alpha^C_f/\alpha^C_a}$ at any given $A_a$, a roughly constant
markup scales the substitutable wage up at both $A_a^{0}$ and $A_a^{1}$ and leaves the
deficit no smaller, so the required rate exceeds its competitive counterpart by the factor
$\mu^{1/\alpha^C_a}$; a markup that rises steeply with $A_a$ instead lifts $w_S$ by more at
$A_a^{1}$ than at $A_a^{0}$, compressing the deficit enough that the required rate can fall
below the competitive level.

Proposition~\ref{prop:ctax_mono} shows that this condition is necessary and sufficient,
and Lemma~\ref{lem:markup_slower_than_productivity} in the Appendix proves that $\frac{A_a^{1}}{A_a^{0}}>\frac{\mu^{1}}{\mu^{0}}$ holds 
throughout the interior of the monopoly transition. The simulation in Figure~\ref{fig:mono_output_gains} illustrates this result.

Next let us consider the third result mentioned above.  The planner can remove the monopolist's pricing power directly, by capping the AI price at $p_a^{comp}$ or taxing profit at $100$ percent. Either restores the competitive allocation---and hence reduces the problem to that of Section~\ref{SEC:INCOME_TAX}, with the full surplus available---so, combined with the competitive complementary-worker tax, both are Pareto-improving for workers. 
However, when full taxation is infeasible and only $\tau_\pi < 1$ is available, the monopolist keeps $(1-\tau_\pi)\Pi^{m} > 0$ and continues to restrict output. Comparing, at $A_a^1$, (i)~monopoly with no profit tax, (ii)~monopoly with $\tau_\pi \in (0,1)$, and (iii)~restored competition---each protecting substitutable workers at $w_S^{m,0}$---Figure~\ref{fig:mono_regime_comparison} confirms the ordering $\widetilde{w}_C^{(iii)} > \widetilde{w}_C^{(ii)} > \widetilde{w}_C^{(i)}$, with the gap to competition widening in $A_a^1$ and only partly closed by the partial profit tax.
So, after-tax wages for complementary workers are best under competition, then under partial taxation of the monopolist, and worst with no taxation of the monopolist.

\begin{figure}[t]
\centering
\includegraphics[width=0.52\textwidth]{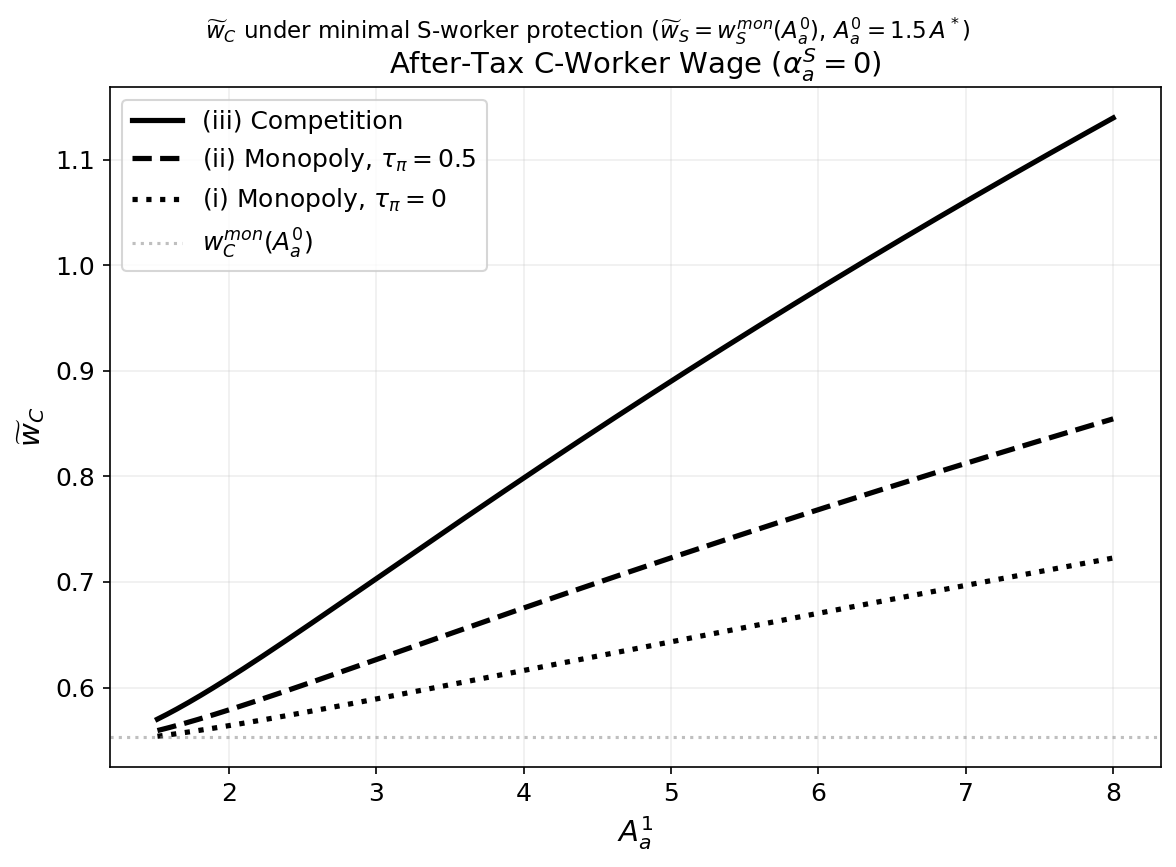}
\caption{After-Tax Complementary Wage under Minimal S-Worker Protection}
\label{fig:mono_regime_comparison}

\vspace{0.1cm}
\begin{minipage}{0.95\textwidth}
\footnotesize
\textit{Notes:} After-tax complementary wage $\widetilde{w}_C$ when the planner sets $\widetilde{w}_S = w_S^{m}(A_a^0)$: (iii) competition (solid), (ii) monopoly with $\tau_\pi = 0.5$ (dashed), (i) monopoly with $\tau_\pi = 0$ (dotted). The gray dotted line marks $w_C^{m}(A_a^0)$; benchmark $A_a^0 = 1.5\,A^*$. Parameters: $\alpha^S_{a} = 0$, $\alpha^S_{f} = \alpha^C_{f} = 0.5$, $\overline{L}^S = \overline{L}^C = 1$, $A_f = 1$.
\end{minipage}
\end{figure}

Finally, let us consider the fourth result mentioned above that concerns the alternative target of a fixed relative wage between the two types of labor.  

For the relative-wage target $\gamma$, the required tax $\tau_C$ (for $\tau_\pi=0$) is
\begin{equation*}
\tau_C = \frac{\overline{L}^S}{\gamma\overline{L}^C + 
\overline{L}^S}\left(1 - \frac{\gamma}{w_C^{m,1}/w_S^{m,1}}
\right),
\label{eq:tau_gamma_mono}
\end{equation*}
structurally identical to the competitive formula, with market structure entering only through the wage ratio. Since the onset threshold $A^*$ is common to both structures, lower AI adoption under monopoly compresses wage inequality, so the $\gamma$-tax is lower than under competition and falls further if monopoly profits are also taxed.

Overall, the analysis delivers three messages. First, optimal taxes depend not only on AI productivity but also on the market structure of AI production: the same productivity level can call for different instruments and rates under competition than under monopoly. Second, the instrument mix shifts along the transition path---a profit tax cannot stand alone near the onset, where rents are small relative to the deficit, and must be paired with a complementary-worker tax, while profit taxation alone suffices only at higher productivity. Third, moving toward competition raises output and the redistributable surplus, so that with redistribution in place \emph{all} workers are strictly better off than under monopoly: although the pre-tax wage $w_S^{m}$ is higher under monopoly, the after-tax wage attainable under competition is higher still, because the larger pie more than offsets the larger transfer needed to protect substitutable workers. The cost of monopoly therefore falls on the entire workforce. These messages bear directly on current proposals to tax AI rents, cap AI prices, or fund public wealth from AI profits: the right policy bundle is not fixed, but depends on where the economy stands in the transition and on how the AI sector is structured. Whether monopoly's potential advantages in spurring AI development outweigh the output loss documented here is left to future research.

\section{Discussion of the Model}
\label{SEC:ROBUSTNESS}

\subsection{Price Anchoring and AI-Sector Employment}
\label{sec:price_anchoring}

Our modeling of labor in AI production has essential implications for wages even if there is not a large absorption of labor by the AI sector.  
The vital aspect of labor being used in AI production is that it ties wages to the productivity of AI.  This has important and contrasting implications for substitutable and complementary labor, even if AI production is itself a small part of the total economy.  This is important to emphasize as the AI production sector may never absorb large fractions of the available labor forces, but may never the less determine their fates.  Let us examine this in more detail, given its importance.   

During the transition, final-good producers arbitrage between AI and substitutable labor, which has the important implication that $p_a=w_S$. This coupled with the fact that AI's price is its unit cost,
\[
p_a
=
\frac{w_S^{\alpha^S_a}w_C^{\alpha^C_a}}{A_a\Phi},
\qquad
\Phi=(\alpha^S_a)^{\alpha^S_a}(\alpha^C_a)^{\alpha^C_a},
\]
is what gives the conclusion that
\[
\frac{w_C}{w_S}=(A_a\Phi)^{1/\alpha^C_a},
\]
where recall that $\Phi = (\alpha^C_{a})^{\alpha^C_{a}} (\alpha^S_{a})^{\alpha^S_{a}}$.
Thus, the AI price anchors the relative wages, which ultimately boil down to AI's productivity and the relative roles of different sources of labor in the different sectors.  
This price channel holds regardless of how modest employment in the AI-sector is. 

This price-anchoring logic also explains our redistribution results. The relevant tax base is not the number of workers employed in AI production, but the income gains accruing to factors that determine the price of AI. As AI productivity rises, substitutable workers face wage losses because their wage is tied to competition with the AI price, which drops as it becomes more productive.  Other AI-producing factors gain because they become more valuable inputs into production. Aggregate gains exceed the losses of substitutable workers, so taxes on the incomes generated by AI production can finance Pareto-improving transfers.

The same logic applies beyond labor. If AI production uses scarce inputs with alternative uses---such as specialized capital, computing infrastructure, energy, land, data, or materials---then part of the gains from AI adoption accrues to the owners of those inputs. A tax on AI-producing factor income should therefore be interpreted broadly: it may fall on wages of AI-building workers, rents to specialized capital, or returns to other scarce inputs used in AI production. This is conceptually distinct from a tax on monopoly profit. As we showed in Section~\ref{SEC:MONOPOLY}, these tax bases need not move together: monopoly profits may be insufficient to compensate displaced workers near the onset of adoption, while factor-income taxation may be needed to supplement redistribution; later in the transition, as rents grow, profit taxation can become sufficient on its own.

\subsection{AI-in-AI}

We can easily extend the model to allow AI to be used in its own production. The AI sector now produces according to
\begin{equation}
	Y_a = A_a [L^S_{a} + X_{aa}]^{\alpha^S_{a}} (L^C_{a})^{\alpha^C_{a}},
\end{equation}
where $X_{aa}$ denotes AI output reinvested into the sector to substitute for substitutable labor, and the market clearing condition is $X_{aa} + X_{fa} = Y_a$. The threshold $A^*$ is unchanged.\footnote{At the onset, AI production is just beginning---$Y_a$ is zero---so there is no AI output available to use as an input in AI production. The recursive channel is therefore inactive at $A^*$, and the threshold is determined by the same condition as before.} 

Between $A^*$ and $A^{**}$ (the thresholds in the baseline model without AI used in AI production), aggregate outcomes---wages, prices, output, and the wage ratio---at any level of $A_a$ are identical to those in the baseline model regardless of how much AI is used within the AI sector itself. The recursive use of AI introduces multiplicity in the sectoral adoption intensities, but the equilibrium levels of wages, prices, and outputs are unique. The full derivation is provided in the Online Supplementary Appendix.

The critical divergence between the baseline model and the AI-in-AI specification arises beyond $A_a^{**}$ of the baseline model. In the baseline model, once all substitutable labor in the final goods sector is displaced, the Domar weight becomes fixed and further gains in $A_a$ raise output at a constant elasticity, while the wage ratio $w_C/w_S$ stabilizes. In the AI-in-AI specification, there is no finite $A_a^{**}$: even after the final goods sector has fully adopted AI, the AI sector continues to substitute its own output for labor. The wage ratio $w_C/w_S = (A_a\Phi)^{1/\alpha^C_a}$ expands indefinitely, and the Domar weight continues to grow. The self-improving nature of AI thus prevents the plateauing of AI-driven growth, but at the cost of persistently widening wage inequality.

\begin{figure}[t]
\centering
\includegraphics[width=1.0\textwidth]{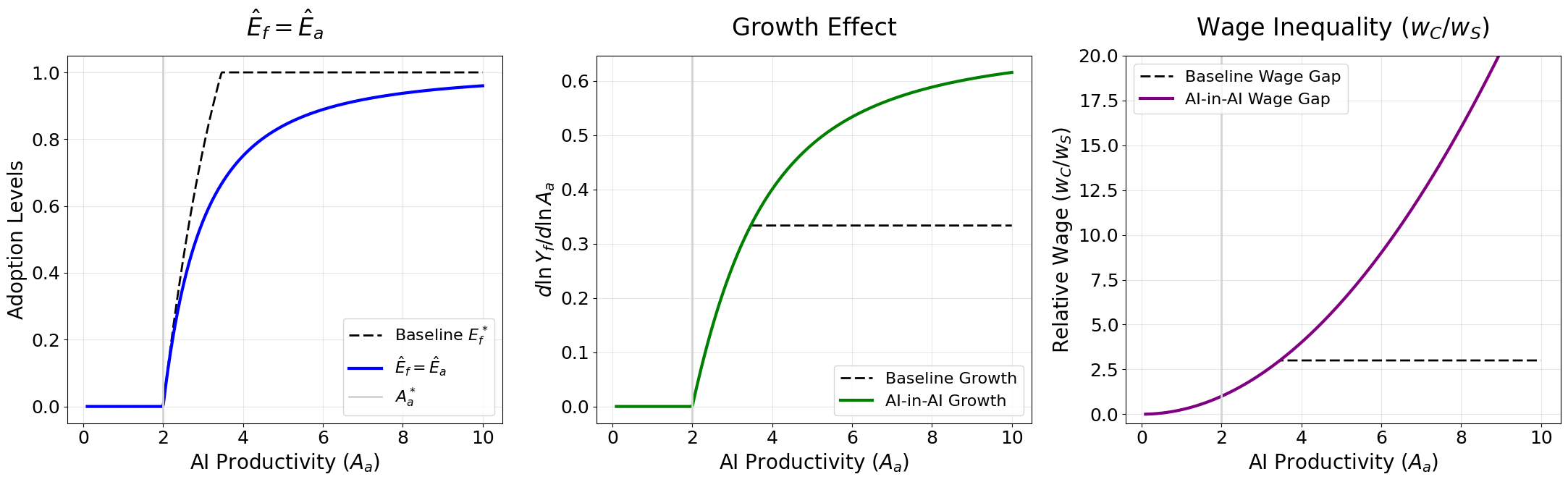}
\caption{AI-in-AI: Adoption, Growth, and Inequality}
\label{fig:ai_in_ai_sim}
\vspace{0.2cm}
\begin{minipage}{0.95\textwidth}
\footnotesize
\textit{Notes:} This figure illustrates equilibrium dynamics for a symmetric path where $\hat{E}_f = \hat{E}_a$. Parameters: $\overline{L}^S=\overline{L}^C=1$, $\alpha^S_{f}=\alpha^C_{f}=\alpha^F_{f}=1/3$, and $\alpha^S_{a}=\alpha^C_{a}=0.5$. The vertical gray line indicates the onset threshold $A_a^*=2$. Prior to the baseline full-displacement threshold ($A_a^{**} \approx 3.46$), both specifications yield identical outcomes. Beyond $A_a^{**}$, the AI-in-AI model exhibits continued growth and expanding inequality, while the baseline model stagnates.
\end{minipage}
\end{figure}

\subsection{Beyond Cobb-Douglas}
\label{sec:robustness_cd}

The invariance result---that the wage ratio $w_C/w_S$ depends only on AI-sector parameters and AI productivity---does not rely on the Cobb-Douglas specification. 

Under a constant-returns AI technology, the unit cost of AI is a function of input prices and productivity alone, $p_a = c(w_S, w_C; A_a)$. Combined with the equilibrium condition $p_a = w_S$ during the transition---which follows from the perfect substitutability of AI output and substitutable labor in the final goods sector, independent of the functional form of either technology---and the cost function giving $p_a = c = w_S$, this implicitly determines $w_C/w_S$ as a function of $A_a$ and AI-sector technology parameters only.

As a concrete illustration, replace the AI-sector production technology with the constant-elasticity-of-substitution (CES) production function
\begin{equation}
Y_a=A_a\Big[\alpha^S_{a}\,(L^S_{a})^{\rho}+\alpha^C_{a}\,(L^C_{a})^{\rho}\Big]^{1/\rho},\qquad \rho=\frac{\sigma-1}{\sigma},
\label{eq:ces_production}
\end{equation}
where $\sigma>0$ is the elasticity of substitution between substitutable and complementary labor in building AI ($\sigma\to1$ recovers Cobb-Douglas). 
Equilibrium conditions appearing in Appendix~\ref{app:ces} show that the wage ratio is
\begin{equation}
    \frac{w_C}{w_S} = \left(\frac{A_a^{1-\sigma} - (\alpha^S_a)^\sigma}{(\alpha^C_a)^\sigma}\right)^{\frac{1}{1-\sigma}}.
    \label{eq:wage_ratio_ces}
\end{equation}

The key properties of the Cobb-Douglas case survive. The ratio still depends only on AI-sector parameters and AI productivity, and it is still increasing in $A_a$ for every $\sigma$: a more productive AI sector raises the complementary wage relative to the substitutable wage. As $\sigma\to1$ the formula collapses to the Cobb-Douglas ratio $(A_a\Phi)^{1/\alpha^C_{a}}$. Since the redistribution results of Section~\ref{SEC:INCOME_TAX} and the monopoly extension use the AI technology only through this monotonicity, they carry over unchanged. What the elasticity changes is the span of productivities over which the transition runs. Replacing Cobb-Douglas with CES therefore leaves the direction of every effect unchanged.

Similarly, a CES structure in the final-good sector would not alter the qualitative implications. What matters is the substitutability of AI and substitutable labor and labor mobility within each class, not the functional form of the final good sector's production technology. \cite{kj} provide a formal treatment under general nested CES structures and study the labor-mobility dimension of a related problem, and the results are similar.

The Cobb-Douglas structure of final production, delivering the wage formulas \eqref{eq:wS_GDP}--\eqref{eq:wF_GDP}, was chosen for its tractability for the tax analysis in Section~\ref{SEC:INCOME_TAX}. The closed-form expressions for tax levels change with the functional form, but the qualitative results would hold under functional forms assigning a similar distinction for different labor types.

\section{Concluding Remarks}
\label{SEC:CONCLUSION}

We have developed a tractable model with multiple forms of labor and in which AI is produced by labor. We have used it to show how progress in AI productivity shifts labor allocation throughout the economy and changes relative wages and aggregate output.   A key feature---that AI is built by workers with productive uses elsewhere in the economy---generates equilibrium dynamics that are absent when automation is modeled simply as capital.  Labor reallocates endogenously across sectors as AI productivity rises, and the wage ratio between complementary and substitutable workers can be measured by AI-sector productivity, independent of relative labor supplies.

Three implications deserve emphasis. First, AI-driven growth and distributional outcomes do not move together. During the transition, aggregate output rises while substitutable workers experience absolute wage declines, and the gains from growth can be smaller than the wage losses borne by displaced workers. 
Second, the choice between absolute and relative welfare targets, as well as the pace of technology, matter fundamentally for Pareto-improving taxation. 
Third, market structure matters not only for the level of wages but also for the design of the redistribution instrument. Under monopoly, a profit tax is a strictly superior instrument in principle---it falls on pure rents and does not distort pre-tax income---but it cannot stand alone in the early transition: monopoly rents grow too slowly relative to the deficit substitutable workers face. Pareto-improving redistribution then requires combining a profit tax with a tax on workers producing AI, and the required combination shifts as AI productivity rises.

These results have implications for contemporary policy debates in which the question of how to share the gains from AI is becoming increasingly urgent. Proposals for public wealth funds, capital-based taxation, and broader participation in AI-driven growth all rely implicitly on assumptions about the structure of AI production and the available fiscal instruments. Our framework suggests that the answers depend systematically on the market structure of AI production---and that the same policy that achieves Pareto-improvement at one stage of the transition may be inadequate at another. Policy design therefore needs to be dynamic: tax regimes that work near the onset of AI adoption may need to evolve substantially as AI productivity grows and the composition of available rents shifts.

Several extensions deserve further investigation. 

The labor-produced AI assumption can be combined with frictions that we have abstracted from: costly retraining, skill acquisition, labor-market matching, and the dynamic incentives of monopolists facing entry threats. 

The empirical literature on AI and productivity has begun to document differential effects across worker categories that map into the role-based classification we use \citep{felten2021occupational,brynjolfsson2025canaries}.
Our three labor groups capture the essentials in plain terms: workers who are adversely affected by AI due to strong substitutability, workers who gain from AI adoption due to increased productivity, and workers who build AI and who gain the most from AI. 
Nonetheless, a natural next step is to combine our key new ingredient---that AI is itself produced by labor---with a model of AI substitution and complementarity at the task and skill level.  That would involve some loss of tractability in terms of policy analysis and broader insights, but would result in a model that could be fit more directly empirically.  
In particular, one thing that is missing from our model, and challenging to anticipate, is what new opportunities for labor will emerge from AI.   For instance, \cite{althoffReichardt2026} discuss how AI could make it possible for lower skilled workers equipped with AI to complete tasks that used to only be available for workers with higher skills.   In our model, that comes in the form of substitution (those high skilled workers would fall into our substitutable category, and those low skilled workers would fall into our complementary and final goods workers).  But it could also be that there are entirely new things that could be done with AI, resulting in new goods and new employment opportunities.   The rate of those could speed the transition, absorbing more workers.   

A broader lesson from our analysis is that the structure of AI production---who builds it, who is displaced by it, and how rents from it are captured---matters for the welfare consequences and depends on the pace of technological progress itself. The transition to widespread AI adoption is unlikely to be smooth or uniform across economies, and the policies that determine whether its gains are broadly shared need to reflect the production technologies of AI and the rest of the production economy. The framework we have developed provides a transparent benchmark for thinking about these questions, and we hope it stimulates further work that can refine, extend, and challenge the conclusions in the model here.

\printbibliography[title=References]

\newpage
\section{Appendix}

\label{appendix}

\subsection{Derivations for the Competitive Equilibrium}

A \textit{competitive equilibrium} is a set of prices $\{p_f, p_a, w_S, w_C, w_F\}$ and allocations and outputs $\{Y_f, Y_a, X_a, L^S_{f}, L^C_{f}, L^F_{f}, L^S_{a}, L^C_{a}\}$ such that:

\begin{itemize}
    \item[I.] The representative firm in the final good sector chooses inputs to maximize profit:
    \[
    \max_{L^F_{f}, L^S_{f}, L^C_{f}, X_a} \quad 
    A_{f} [L^S_{f} + X_a]^{\alpha^S_{f}} (L^C_{f})^{\alpha^C_{f}} (L^F_{f})^{\alpha^F_{f}}
    - \left( w_{F}L^F_{f} + w_{S}L^S_{f} + w_{C}L^C_{f} + p_a X_a \right).
    \]

    \item[II.] The representative firm in the AI sector chooses inputs to maximize profit:
    \[
    \max_{L^S_{a}, L^C_{a}} \quad 
    p_{a} \left[ A_{a} (L^S_{a})^{\alpha^S_{a}} (L^C_{a})^{\alpha^C_{a}} \right] 
    - \left( w_{S}L^S_{a} + w_{C}L^C_{a} \right).
    \]

    \item[III.] Representative households supply labor inelastically and maximize consumption $C^j$ subject to their budget constraint $C^j \leq w_j \overline L^j$ for $j \in \{S, C, F\}$.

    \item[IV.] Markets clear:
    \begin{align*}
        \text{final good:} & \quad Y_f = C^S + C^C + C^F. \\
        \text{AI output:} & \quad Y_a = X_a. \\
        \text{Labor Markets:} & \quad \overline L^S = L^S_{a} + L^S_{f}, \quad \overline L^C = L^C_{a} + L^C_{f}, \quad \overline L^F = L^F_{f}.
    \end{align*}
\end{itemize}

\subsubsection*{A. AI Sector}

The cost minimization problem for the representative AI producer is
\[
\underset{L^S_{a},L^C_{a}}{min}\ w_{S}L^S_{a}+w_{C}L^C_{a}
\quad\text{subject to}\quad
1=A_a(L^S_{a})^{\alpha^S_{a}}(L^C_{a})^{\alpha^C_{a}}.
\]

The Lagrangian function is:
\[
\mathcal{L=}w_{S}L^S_{a}+w_{C}L^C_{a}-\lambda_{a}\left(A_a(L^S_{a})^{\alpha^S_{a}}(L^C_{a})^{\alpha^C_{a}}-1\right).
\]

The first order conditions are:
\begin{itemize}
\item {\footnotesize{}$\frac{\partial\mathcal{L}}{\partial L^C_{a}}=w_{C}-\frac{\lambda_{a}^{*}\alpha^C_{a}A_a(L^{S*}_{a})^{\alpha^S_{a}}(L^{C*}_{a})^{\alpha^C_{a}}}{L^{C*}_{a}}=0$}{\footnotesize \par}
{\footnotesize{}$L^{C*}_{a}=\frac{\lambda_{a}^{*}\alpha^C_{a}A_a(L^{S*}_{a})^{\alpha^S_{a}}(L^{C*}_{a})^{\alpha^C_{a}}}{w_{C}}$}{\footnotesize \par}

\item {\footnotesize{}$\frac{\partial\mathcal{L}}{\partial L^S_{a}}=w_{S}-\frac{\lambda_{a}^{*}\alpha^S_{a}A_a(L^{S*}_{a})^{\alpha^S_{a}}(L^{C*}_{a})^{\alpha^C_{a}}}{L^{S*}_{a}}=0$}{\footnotesize \par}
{\footnotesize{}$L^{S*}_{a}=\frac{\lambda_{a}^{*}\alpha^S_{a}A_a(L^{S*}_{a})^{\alpha^S_{a}}(L^{C*}_{a})^{\alpha^C_{a}}}{w_{S}}$}{\footnotesize \par}

\item {\footnotesize{}$\frac{\partial\mathcal{L}}{\partial\lambda_{a}}=1-A_a(L^{S*}_{a})^{\alpha^S_{a}}(L^{C*}_{a})^{\alpha^C_{a}}=0$}{\footnotesize \par}
\end{itemize}

By plugging $L^{C*}_{a}$ and $L^{S*}_{a}$ into the last FOC above, we find:
{\footnotesize{}
\[
\frac{1}{\lambda_{a}^{*}}
=
A_a\left(\frac{\alpha^C_{a}}{w_{C}}\right)^{\alpha^C_{a}}\left(\frac{\alpha^S_{a}}{w_{S}}\right)^{\alpha^S_{a}}.
\]
}{\footnotesize \par}

The zero profit condition implies that
$\lambda_{a}^{*}=w_{C}L^{C*}_{a}+w_{S}L^{S*}_{a}=C_a(w_{S},w_{C})=p_{a}$.
Then, the price and factor demands can be written as follows:

\begin{equation}
p_{a}
=
\frac{1}{A_{a}(\alpha^C_{a})^{\alpha^C_{a}}(\alpha^S_{a})^{\alpha^S_{a}}}
\left(w_{S}\right)^{\alpha^S_{a}}\left(w_{C}\right)^{\alpha^C_{a}}.
\label{p_a-1-1}
\end{equation}

\[
L^S_{a}=\frac{\alpha^S_{a}p_{a}Y_{a}}{w_{S}}
\quad\text{and}\quad
L^C_{a}=\frac{\alpha^C_{a}p_{a}Y_{a}}{w_{C}}.
\]

\bigskip

\subsubsection*{B. Final-good sector}

Given the market clearing condition $Y_{a}=X_a$, we use the notation $Y_{a}$ instead of $X_a$.

{\footnotesize{}
\[
\begin{array}{c}
\underset{L^F_{f},L^S_{f},L^C_{f},Y_{a}}{min}
\ w_{F}L^F_{f}+w_{S}L^S_{f}+w_{C}L^C_{f}+p_{a}Y_{a}
\ \text{ subject to}\\
1=A_{f}(L^C_{f})^{\alpha^C_{f}}(L^F_{f})^{\alpha^F_{f}}[L^S_{f}+Y_{a}]^{\alpha^S_{f}}.
\end{array}
\]
}{\footnotesize \par}

The Lagrangian function is:
{\footnotesize{}
\[
\begin{array}{c}
\mathcal{L=}
w_{F}L^F_{f}+w_{S}L^S_{f}+w_{C}L^C_{f}+p_{a}Y_{a}\\
-\lambda_f\left(A_{f}(L^C_{f})^{\alpha^C_{f}}(L^F_{f})^{\alpha^F_{f}}[L^S_{f}+Y_{a}]^{\alpha^S_{f}}-1\right).
\end{array}
\]
}{\footnotesize \par}

The FOCs imply:
\[
A_{f}(L^{C*}_{f})^{\alpha^C_{f}}(L^{F*}_{f})^{\alpha^F_{f}}[L^{S*}_{f}+Y_{a}^{*}]^{\alpha^S_{f}}=1.
\]

By plugging the equation above into the other FOCs, we get:
\begin{itemize}
\item {\footnotesize{}$\frac{\partial\mathcal{L}}{\partial L^C_{f}}=w_{C}-\lambda_{f}^{*}\frac{\alpha^C_{f}}{L^{C*}_{f}}=0$}{\footnotesize \par}
{\footnotesize{}$L^{C*}_{f}=\lambda_{f}^{*}\frac{\alpha^C_{f}}{w_{C}}$}{\footnotesize \par}

\item {\footnotesize{}$\frac{\partial\mathcal{L}}{\partial L^F_{f}}=w_{F}-\lambda_{f}^{*}\frac{\alpha^F_{f}}{L^{F*}_{f}}=0$}{\footnotesize \par}
{\footnotesize{}$L^{F*}_{f}=\lambda_{f}^{*}\frac{\alpha^F_{f}}{w_{F}}$}{\footnotesize \par}

\item {\footnotesize{}$\frac{\partial\mathcal{L}}{\partial L^S_{f}}=\left(w_{S}-\lambda_{f}^{*}\frac{\alpha^S_{f}}{L^{S*}_{f}+Y_{a}^{*}}\right)\geq0$
and
$L^{S*}_{f}\left(w_{S}-\lambda_{f}^{*}\frac{\alpha^S_{f}}{L^{S*}_{f}+Y_{a}^{*}}\right)=0$}{\footnotesize \par}

{\footnotesize{}$\frac{\partial\mathcal{L}}{\partial Y_{a}}=\left(p_{a}-\lambda_{f}^{*}\frac{\alpha^S_{f}}{L^{S*}_{f}+Y_{a}^{*}}\right)\geq0$
and
$Y_{a}^{*}\left(p_{a}-\lambda_{f}^{*}\frac{\alpha^S_{f}}{L^{S*}_{f}+Y_{a}^{*}}\right)=0$}{\footnotesize \par}

Both $L^{S*}_{f}$ and $Y_{a}^{*}$ can not be zero, otherwise $Y_{f}=0$. Then:
\begin{itemize}
\item {\footnotesize{}Case 1. $w_{S}=\lambda_{f}^{*}\frac{\alpha^S_{f}}{L^{S*}_{f}+Y_{a}^{*}}$,
$L^{S*}_{f}=\lambda_{f}^{*}\frac{\alpha^S_{f}}{w_{S}}$; and $p_{a}>\lambda_{f}^{*}\frac{\alpha^S_{f}}{L^{S*}_{f}+Y_{a}^{*}}$,
$Y_{a}^{*}=0$ . In this case, $\frac{w_{S}}{p_{a}}<1$.}{\footnotesize \par}

\item {\footnotesize{}Case 2. $w_{S}>\lambda_{f}^{*}\frac{\alpha^S_{f}}{L^{S*}_{f}+Y_{a}^{*}}$,
$L^S_{f}=0$; and $p_{a}=\lambda_{f}^{*}\frac{\alpha^S_{f}}{L^{S*}_{f}+Y_{a}^{*}}$,
$Y_{a}^{*}=\lambda_{f}^{*}\frac{\alpha^S_{f}}{p_{a}}$. In this case, $\frac{w_{S}}{p_{a}}>1$.}{\footnotesize \par}

\item {\footnotesize{}Case 3. $w_{S}=\lambda_{f}^{*}\frac{\alpha^S_{f}}{L^{S*}_{f}+Y_{a}^{*}}$,
$L^{S*}_{f}=\lambda_{f}^{*}\frac{(1-e_{f})\alpha^S_{f}}{w_{S}}$;
$p_{a}=\lambda_{f}^{*}\frac{\alpha^S_{f}}{L^{S*}_{f}+Y_{a}^{*}}$,
$Y_{a}^{*}=\lambda_{f}^{*}\frac{e_{f}\alpha^S_{f}}{p_{a}}$ .
In this case, $\frac{w_{S}}{p_{a}}=1$.}{\footnotesize \par}
\end{itemize}
\end{itemize}

{\footnotesize{}
\[
\{L^{S*}_{f},Y_{a}^{*}\}=:\begin{cases}
\{\lambda_{f}^{*}\frac{\alpha^S_{f}}{w_{S}},0\} & \text{if }\frac{w_{S}}{p_{a}}<1\\
\{\lambda_{f}^{*}\frac{(1-e_{f})\alpha^S_{f}}{w_{S}},\lambda_{f}^{*}\frac{e_{f}\alpha^S_{f}}{p_{a}} \} & \text{if }\frac{w_{S}}{p_{a}}=1\\
\{0,\lambda_{f}^{*}\frac{\alpha^S_{f}}{p_{a}}\} & \text{if }\frac{w_{S}}{p_{a}}>1
\end{cases}
\]
}{\footnotesize \par}

The FOCs above together imply:
{\footnotesize{}
\[
\lambda_{f}^{*}=\begin{cases}
\frac{1}{A_{f}(\frac{\alpha^C_{f}}{w_{C}})^{\alpha^C_{f}}(\frac{\alpha^F_{f}}{w_{F}})^{\alpha^F_{f}}\left[\frac{\alpha^S_{f}}{w_{S}}\right]^{\alpha^S_{f}}} & \text{if }\frac{w_{S}}{p_{a}}<1\\[0.3em]
\frac{1}{A_{f}(\frac{\alpha^C_{f}}{w_{C}})^{\alpha^C_{f}}(\frac{\alpha^F_{f}}{w_{F}})^{\alpha^F_{f}}\left[\frac{\alpha^S_{f}}{w_{S}}\right]^{\alpha^S_{f}}} & \text{if }\frac{w_{S}}{p_{a}}=1\\[0.3em]
\frac{1}{A_{f}(\frac{\alpha^C_{f}}{w_{C}})^{\alpha^C_{f}}(\frac{\alpha^F_{f}}{w_{F}})^{\alpha^F_{f}}\left[\frac{\alpha^S_{f}}{p_{a}}\right]^{\alpha^S_{f}}} & \text{if }\frac{w_{S}}{p_{a}}>1
\end{cases}
\]
}{\footnotesize \par}

Under the zero profit conditions and the normalization of $p_{f}=1$, it follows that:
{\footnotesize{}
\[
\begin{cases}
A_{f}(\frac{\alpha^C_{f}}{w_{C}})^{\alpha^C_{f}}(\frac{\alpha^F_{f}}{w_{F}})^{\alpha^F_{f}}\left[\frac{\alpha^S_{f}}{w_{S}}\right]^{\alpha^S_{f}}=1 & \text{if }\frac{w_{S}}{p_{a}}<1\\
A_{f}(\frac{\alpha^C_{f}}{w_{C}})^{\alpha^C_{f}}(\frac{\alpha^F_{f}}{w_{F}})^{\alpha^F_{f}}\left[\frac{\alpha^S_{f}}{w_{S}}\right]^{\alpha^S_{f}}=1 & \text{if }\frac{w_{S}}{p_{a}}=1\\
A_{f}(\frac{\alpha^C_{f}}{w_{C}})^{\alpha^C_{f}}(\frac{\alpha^F_{f}}{w_{F}})^{\alpha^F_{f}}\left[\frac{\alpha^S_{f}}{p_a}\right]^{\alpha^S_{f}}=1 & \text{if }\frac{w_{S}}{p_{a}}>1
\end{cases}
\]
}{\footnotesize \par}

Lastly, by plugging $p_{a}$ into the equation above, it follows that:
{\footnotesize{}
\begin{equation}
\begin{cases}
A_{f}(\frac{\alpha^C_{f}}{w_{C}})^{\alpha^C_{f}}(\frac{\alpha^F_{f}}{w_{F}})^{\alpha^F_{f}}\left[\frac{\alpha^S_{f}}{w_{S}}\right]^{\alpha^S_{f}}=1
& \text{if }
A_{a}(\alpha^C_{a})^{\alpha^C_{a}}(\alpha^S_{a})^{\alpha^S_{a}}w_{S}^{1-\alpha^S_{a}}w_{C}^{-\alpha^C_{a}}<1\\[0.4em]
A_{f}(\frac{\alpha^C_{f}}{w_{C}})^{\alpha^C_{f}}(\frac{\alpha^F_{f}}{w_{F}})^{\alpha^F_{f}}\left[\frac{\alpha^S_{f}}{w_{S}}\right]^{\alpha^S_{f}}=1
& \text{if }
A_{a}(\alpha^C_{a})^{\alpha^C_{a}}(\alpha^S_{a})^{\alpha^S_{a}}w_{S}^{1-\alpha^S_{a}}w_{C}^{-\alpha^C_{a}}=1\\[0.4em]
A_{f}(\frac{\alpha^C_{f}}{w_{C}})^{\alpha^C_{f}}(\frac{\alpha^F_{f}}{w_{F}})^{\alpha^F_{f}}
\left[\frac{\alpha^S_{f}A_{a}(\alpha^C_{a})^{\alpha^C_{a}}(\alpha^S_{a})^{\alpha^S_{a}}}{\left(w_{S}\right)^{\alpha^S_{a}}\left(w_{C}\right)^{\alpha^C_{a}}}\right]^{\alpha^S_{f}}=1
& \text{if }
A_{a}(\alpha^C_{a})^{\alpha^C_{a}}(\alpha^S_{a})^{\alpha^S_{a}}w_{S}^{1-\alpha^S_{a}}w_{C}^{-\alpha^C_{a}}>1
\end{cases}
\label{w_S_and_w_C_and_exog_param}
\end{equation}
}{\footnotesize \par}

\bigskip

Then, the conditional factor demands are:
{\footnotesize{}
\begin{equation}
L^C_{f}=:\begin{cases}
\frac{\alpha^C_{f}Y_{f}}{w_{C}A_{f}(\frac{\alpha^C_{f}}{w_{C}})^{\alpha^C_{f}}(\frac{\alpha^F_{f}}{w_{F}})^{\alpha^F_{f}}\left[\frac{\alpha^S_{f}}{w_{S}}\right]^{\alpha^S_{f}}} & \text{if }\frac{w_{S}}{p_{a}}<1\\
\frac{\alpha^C_{f}Y_{f}}{w_{C}A_{f}(\frac{\alpha^C_{f}}{w_{C}})^{\alpha^C_{f}}(\frac{\alpha^F_{f}}{w_{F}})^{\alpha^F_{f}}\left[\frac{\alpha^S_{f}}{w_{S}}\right]^{\alpha^S_{f}}} & \text{if }\frac{w_{S}}{p_{a}}=1\\
\frac{\alpha^C_{f}Y_{f}}{w_{C}A_{f}(\frac{\alpha^C_{f}}{w_{C}})^{\alpha^C_{f}}(\frac{\alpha^F_{f}}{w_{F}})^{\alpha^F_{f}}\left[\frac{\alpha^S_{f}}{p_{a}}\right]^{\alpha^S_{f}}} & \text{if }\frac{w_{S}}{p_{a}}>1
\end{cases}
\label{Lc_f}
\end{equation}

\begin{equation}
L^F_{f}=:\begin{cases}
\frac{\alpha^F_{f}Y_{f}}{w_{F}A_{f}(\frac{\alpha^C_{f}}{w_{C}})^{\alpha^C_{f}}(\frac{\alpha^F_{f}}{w_{F}})^{\alpha^F_{f}}\left[\frac{\alpha^S_{f}}{w_{S}}\right]^{\alpha^S_{f}}} & \text{if }\frac{w_{S}}{p_{a}}<1\\
\frac{\alpha^F_{f}Y_{f}}{w_{F}A_{f}(\frac{\alpha^C_{f}}{w_{C}})^{\alpha^C_{f}}(\frac{\alpha^F_{f}}{w_{F}})^{\alpha^F_{f}}\left[\frac{\alpha^S_{f}}{w_{S}}\right]^{\alpha^S_{f}}} & \text{if }\frac{w_{S}}{p_{a}}=1\\
\frac{\alpha^F_{f}Y_{f}}{w_{F}A_{f}(\frac{\alpha^C_{f}}{w_{C}})^{\alpha^C_{f}}(\frac{\alpha^F_{f}}{w_{F}})^{\alpha^F_{f}}\left[\frac{\alpha^S_{f}}{p_{a}}\right]^{\alpha^S_{f}}} & \text{if }\frac{w_{S}}{p_{a}}>1
\end{cases}
\label{Lf_f}
\end{equation}

\begin{equation}
\{L^S_{f},Y_a\}=:\begin{cases}
\{\frac{\alpha^S_{f}Y_{f}}{w_{S}A_{f}(\frac{\alpha^C_{f}}{w_{C}})^{\alpha^C_{f}}(\frac{\alpha^F_{f}}{w_{F}})^{\alpha^F_{f}}\left[\frac{\alpha^S_{f}}{w_{S}}\right]^{\alpha^S_{f}}},0\}
& \text{if }\frac{w_{S}}{p_{a}}<1\\[0.4em]
\{\frac{(1-e_{f})\alpha^S_{f}Y_{f}}{w_{S}A_{f}(\frac{\alpha^C_{f}}{w_{C}})^{\alpha^C_{f}}(\frac{\alpha^F_{f}}{w_{F}})^{\alpha^F_{f}}\left[\frac{\alpha^S_{f}}{w_{S}}\right]^{\alpha^S_{f}}},\frac{e_{f}\alpha^S_{f}Y_{f}}{p_{a}A_{f}(\frac{\alpha^C_{f}}{w_{C}})^{\alpha^C_{f}}(\frac{\alpha^F_{f}}{w_{F}})^{\alpha^F_{f}}\left[\frac{\alpha^S_{f}}{w_{S}}\right]^{\alpha^S_{f}}}\}
& \text{if }\frac{w_{S}}{p_{a}}=1\\[0.4em]
\{0,\frac{\alpha^S_{f}Y_{f}}{p_{a}A_{f}(\frac{\alpha^C_{f}}{w_{C}})^{\alpha^C_{f}}(\frac{\alpha^F_{f}}{w_{F}})^{\alpha^F_{f}}\left[\frac{\alpha^S_{f}}{p_{a}}\right]^{\alpha^S_{f}}}\}
& \text{if }\frac{w_{S}}{p_{a}}>1
\end{cases}
\label{Ls_f_Ya}
\end{equation}

}{\footnotesize \par}

\vspace{0.5cm}

Combining the budget constraint and FOCs of the utility maximization leads to:

\[
p_fY_{f}=(\alpha^C_{f}+\alpha^S_{f}+\alpha^F_{f})p_fY_{f}=w_SL^S+w_CL^C+w_F L^F=C^S+C^C+C^F.
\]

By combining the market clearing conditions and factor demands, we get:
\begin{equation}
L^S=:\begin{cases}
\frac{p_{f}Y_{f}\alpha^S_{f}}{w_{S}} & \text{if }\frac{w_{S}}{p_{a}}<1\\
\frac{p_{a}Y_{a}\alpha^S_{a}+p_{f}Y_{f}(1-e_{f})\alpha^S_{f}}{w_{S}} & \text{if }\frac{w_{S}}{p_{a}}=1\\
\frac{p_{a}Y_{a}\alpha^S_{a}}{w_{S}} & \text{if }\frac{w_{S}}{p_{a}}>1
\end{cases}
\label{LS_market}
\end{equation}

\begin{equation}
L^C=:\begin{cases}
\frac{p_{f}Y_{f}\alpha^C_{f}}{w_{C}} & \text{if }\frac{w_{S}}{p_{a}}<1\\
\frac{p_{a}Y_{a}\alpha^C_{a}+p_{f}Y_{f}\alpha^C_{f}}{w_{C}} & \text{if }\frac{w_{S}}{p_{a}}=1\\
\frac{p_{a}Y_{a}\alpha^C_{a}+p_{f}Y_{f}\alpha^C_{f}}{w_{C}} & \text{if }\frac{w_{S}}{p_{a}}>1
\end{cases}
\label{LC_market}
\end{equation}

\begin{equation}
L^F
=:
\frac{p_f Y_f \alpha^F_{f}}{w_F}
\quad\Longrightarrow\quad
w_F=\alpha^F_{f}\frac{p_f Y_f}{L^F}=\alpha^F_{f}\frac{Y_f}{L^F}.
\label{wF_simple}
\end{equation}

Market clearing for goods, factor demands, and $p_{f}=1$ together imply:
\begin{equation}
p_{a}Y_a=e_{f}\alpha^S_{f}Y_{f}.
\label{paYa}
\end{equation}
where $e_{f}=1$ for $\frac{w_{S}}{p_{a}}>1$, $e_{f}=0$ for $\frac{w_{S}}{p_{a}}<1$, and $0\leq e_{f}\leq1$ for $\frac{w_{S}}{p_{a}}=1$.

Then, we can rewrite the market clearing for labor as follows:
\begin{equation}
L^S=:\begin{cases}
\frac{\alpha^S_{f}\left(w_{S}L^S+w_{C}L^C+w_F L^F\right)}{w_{S}} & \text{if }\frac{w_{S}}{p_{a}}<1\\
\frac{\left(e_{f}\alpha^S_{f}\alpha^S_{a}+(1-e_{f})\alpha^S_{f}\right)\left(w_{S}L^S+w_{C}L^C+w_F L^F\right)}{w_{S}} & \text{if }\frac{w_{S}}{p_{a}}=1\\
\frac{\left(\alpha^S_{f}\alpha^S_{a}\right)\left(w_{S}L^S+w_{C}L^C+w_F L^F\right)}{w_{S}} & \text{if }\frac{w_{S}}{p_{a}}>1
\end{cases}
\label{LS_rewrite}
\end{equation}

\begin{equation}
L^C=:\begin{cases}
\frac{\alpha^C_{f}\left(w_{S}L^S+w_{C}L^C+w_F L^F\right)}{w_{C}} & \text{if }\frac{w_{S}}{p_{a}}<1\\
\frac{\left(e_{f}\alpha^S_{f}\alpha^C_{a}+\alpha^C_{f}\right)\left(w_{S}L^S+w_{C}L^C+w_F L^F\right)}{w_{C}} & \text{if }\frac{w_{S}}{p_{a}}=1\\
\frac{\left(\alpha^S_{f}\alpha^C_{a}+\alpha^C_{f}\right)\left(w_{S}L^S+w_{C}L^C+w_F L^F\right)}{w_{C}} & \text{if }\frac{w_{S}}{p_{a}}>1
\end{cases}
\label{LC_rewrite}
\end{equation}

\begin{equation}
\frac{w_{C}}{w_{S}}=:\begin{cases}
\frac{L^S}{L^C}\frac{\alpha^C_{f}}{\alpha^S_{f}} & \text{if }\frac{w_{S}}{p_{a}}<1\\[0.3em]
\frac{L^S}{L^C}\frac{\left(e_{f}\alpha^S_{f}\alpha^C_{a}+\alpha^C_{f}\right)}{\left(e_{f}\alpha^S_{f}\alpha^S_{a}+(1-e_{f})\alpha^S_{f}\right)} & \text{if }\frac{w_{S}}{p_{a}}=1\\[0.3em]
\frac{L^S}{L^C}\frac{\alpha^S_{f}\alpha^C_{a}+\alpha^C_{f}}{\alpha^S_{f}\alpha^S_{a}} & \text{if }\frac{w_{S}}{p_{a}}>1
\end{cases}
\label{wC_over_wS_piecewise}
\end{equation}

There exist two threshold levels of productivity of the AI sector $A_a^{*}$ and $A_a^{**}$
such that there is no AI usage in final-good sector if $A_{a} \leq A_a^{*}$; the level of AI's share gradually increases in between $A_a^{*}$ and $A_a^{**}$; and AI replaces all substitutable labor employed in the final-good sector if $A_{a} \geq A_a^{**}$.
By using the LHS and RHS equalities at threshold levels in Equation \eqref{w_S_and_w_C_and_exog_param}, the threshold levels are as follows:
\begin{equation}
\begin{cases}
A_a^{*}=\frac{1}{(\alpha^C_{a})^{\alpha^C_{a}}(\alpha^S_{a})^{\alpha^S_{a}}}\left(\frac{L^S}{L^C}\frac{\alpha^C_{f}}{\alpha^S_{f}}\right)^{\alpha^C_{a}}\\[0.4em]
A_a^{**}=\frac{1}{(\alpha^C_{a})^{\alpha^C_{a}}(\alpha^S_{a})^{\alpha^S_{a}}}\left(\frac{L^S}{L^C}\frac{\alpha^S_{f}\alpha^C_{a}+\alpha^C_{f}}{\alpha^S_{f}\alpha^S_{a}}\right)^{\alpha^C_{a}}
\end{cases}
\label{Aa_star}
\end{equation}

\bigskip

Next, we derive the fraction $e_{f}$ from the interior condition $\frac{w_S}{p_a}=1$ (i.e., the substitution phase), together with \eqref{p_a-1-1}:
\[
\frac{p_a}{w_S}
=
\frac{1}{A_a(\alpha^C_{a})^{\alpha^C_{a}}(\alpha^S_{a})^{\alpha^S_{a}}}\left(\frac{w_C}{w_S}\right)^{\alpha^C_{a}}
=1
\quad\Longrightarrow\quad
\frac{w_C}{w_S}=\left(A_a(\alpha^C_{a})^{\alpha^C_{a}}(\alpha^S_{a})^{\alpha^S_{a}}\right)^{\frac{1}{\alpha^C_{a}}}.
\]

Thus, during the substitution phase ($A_a^{*} < A_{a} < A_a^{**}$), we have:
\begin{equation}
\frac{w_C}{w_S}
=
\left(A_a(\alpha^C_{a})^{\alpha^C_{a}}(\alpha^S_{a})^{\alpha^S_{a}}\right)^{\frac{1}{\alpha^C_{a}}}.
\label{wC_over_wS_subphase}
\end{equation}

Combining \eqref{wC_over_wS_subphase} with the middle line of \eqref{wC_over_wS_piecewise}, and solving for $e_f^{\ast}$, yields:
\begin{equation}
e_{f}=:\begin{cases}
0 & \text{if }A_{a} \leq A_a^{*}\\[0.6em]
\displaystyle
\frac{
\frac{L^C}{L^S}\left(A_{a}(\alpha^C_{a})^{\alpha^C_{a}}(\alpha^S_{a})^{\alpha^S_{a}}\right)^{\frac{1}{\alpha^C_{a}}}\alpha^S_{f}-\alpha^C_{f}
}{
\alpha^C_{a}\alpha^S_{f}\left(1+\frac{L^C}{L^S}\left(A_{a}(\alpha^C_{a})^{\alpha^C_{a}}(\alpha^S_{a})^{\alpha^S_{a}}\right)^{\frac{1}{\alpha^C_{a}}}\right)
}
& \text{if }A_a^{*}\leq A_{a}\leq A_a^{**}\\[0.8em]
1 & \text{if }A_{a} \geq A_a^{**}
\end{cases}
\label{ea}
\end{equation}

Then, during the substitution phase ($A_a^{*} < A_{a} < A_a^{**}$), the expenditure share of AI in value added is:
\[
e_{f}\alpha^S_{f}=\frac{p_aY_a}{p_fY_f}
=
\frac{
\alpha^S_{f}\frac{L^C}{L^S}\left(A_{a}(\alpha^C_{a})^{\alpha^C_{a}}(\alpha^S_{a})^{\alpha^S_{a}}\right)^{\frac{1}{\alpha^C_{a}}}-\alpha^C_{f}
}{
\alpha^C_{a}\left(1+\frac{L^C}{L^S}\left(A_{a}(\alpha^C_{a})^{\alpha^C_{a}}(\alpha^S_{a})^{\alpha^S_{a}}\right)^{\frac{1}{\alpha^C_{a}}}\right)
}.
\]

\bigskip
\subsection{Proof of Proposition~\ref{prop:equilibrium_wages}}

\noindent Write $E_a= \dfrac{p_aY_a}{p_fY_f}=e_f\,\alpha^S_{f}$ for the AI expenditure share, where the second equality is \eqref{paYa}. The three wage formulas follow by combining the labor-market-clearing conditions with \eqref{paYa}.

\medskip
\noindent\emph{Substitutable labor.} From the interior case ($w_S/p_a=1$) of \eqref{LS_market}, $w_S L^S=p_aY_a\,\alpha^S_{a}+p_fY_f(1-e_f)\alpha^S_{f}$. Substituting $p_aY_a=e_f\alpha^S_{f}Y_f$ from \eqref{paYa} and $p_f=1$, and using $1-\alpha^S_{a}=\alpha^C_{a}$,
\[
w_S L^S=\alpha^S_{f}Y_f\big[e_f\alpha^S_{a}+(1-e_f)\big]=\alpha^S_{f}Y_f\big[1-e_f(1-\alpha^S_{a})\big]=\alpha^S_{f}Y_f-e_f\alpha^S_{f}\alpha^C_{a}Y_f=Y_f\big(\alpha^S_{f}-E_a\alpha^C_{a}\big),
\]
so that
\[
w_S=\frac{Y_f\left(\alpha^S_{f}-E_a\alpha^C_{a}\right)}{L^S}.
\]

\medskip
\noindent\emph{Complementary labor.} From the interior case of \eqref{LC_market}, $w_C L^C=p_aY_a\,\alpha^C_{a}+p_fY_f\,\alpha^C_{f}=e_f\alpha^S_{f}\alpha^C_{a}Y_f+\alpha^C_{f}Y_f=Y_f\big(\alpha^C_{f}+E_a\alpha^C_{a}\big)$, so that
\[
w_C=\frac{Y_f\left(\alpha^C_{f}+E_a\alpha^C_{a}\right)}{L^C}.
\]

\medskip
\noindent\emph{Final-good-specific labor.} Equation~\eqref{wF_simple} gives directly
\[
w_F=\frac{\alpha^F_{f}Y_f}{L^F}.
\]

\medskip
\noindent Finally, $E_a=p_aY_a/(p_fY_f)=e_f\alpha^S_{f}$ equals $0$ in the pre-AI phase ($e_f=0$), $\alpha^S_{f}$ in the post-AI phase ($e_f=1$), and the interior value in \eqref{ea} during the transition. \eproof

\bigskip
\subsection{Derivation of Absolute Wages in the Competitive Economy}

Under perfect competition and constant returns to scale, the equilibrium is anchored by the following unit-cost identity:
\begin{equation}
    1 = \frac{1}{\Theta} (\min\{p_a,ws\})^{\alpha^S_{f}} (w_C)^{\alpha^C_{f}} (w_F)^{\alpha^F_{f}},
    \label{eq:unit_cost_anchor}
\end{equation}
where $p_a$ is the effective price of the substitutable input bundle, and we define the aggregate productivity constant:
\[ \Theta = A_f (\alpha^S_{f})^{\alpha^S_{f}} (\alpha^C_{f})^{\alpha^C_{f}} (\alpha^F_{f})^{\alpha^F_{f}}. \]

The effective price $\min\{p_a,ws\}$ equals $w_S$ before AI is adopted, $w_S=p_a$ during the transition, and $p_a$ once AI is the cheaper input during post-transition in a finite transition case.

\paragraph{1. Pre-AI Regime ($A_a \leq A_a^*$)}
In this regime, $Y_a=0$. All labor types are employed in the final-good sector ($L^S_{f}=L^S, L^C_{f}=L^C, L^F_{f}=L^F$). Using the market clearing conditions and \eqref{wF_simple}, wages equal marginal products:
\begin{align}
    w_S &= \alpha^S_{f} A_f (L^S)^{\alpha^S_{f}-1} (L^C)^{\alpha^C_{f}} (L^F)^{\alpha^F_{f}}, \label{eq:wS_pre_level} \\
    w_C &= \alpha^C_{f} A_f (L^S)^{\alpha^S_{f}} (L^C)^{\alpha^C_{f}-1} (L^F)^{\alpha^F_{f}}, 
    \label{eq:wC_pre_level} \\
    w_F &= \alpha^F_{f} A_f (L^S)^{\alpha^S_{f}} (L^C)^{\alpha^C_{f}} (L^F)^{\alpha^F_{f}-1}.
\end{align}

\paragraph{2. Transition Regime ($A_a^* < A_a < A_a^{**}$)}
During the transition, we employ $p_a = w_S$ and the relative wage ratio $\Omega(A_a) = w_C/w_S$ from \eqref{wC_over_wS_subphase}. To pin down the levels we also need the ratio of substitutable to final-specific wages. This ratio is determined by the equilibrium of Section~\ref{appendix}, not by the levels we are about to solve for: from the interior case of \eqref{LS_market} together with \eqref{paYa} (which give $w_S L^S=\alpha^S_{f}Y_f[1-e_f^{\ast}(1-\alpha^S_{a})]$) and from \eqref{wF_simple} ($w_F=\alpha^F_{f}Y_f/L^F$),
\[
\frac{w_S}{w_F}= K_S = \frac{\alpha^S_{f}[1 - e_f^*(1-\alpha^S_{a})]L^F}{\alpha^F_{f}L^S},
\qquad\text{equivalently}\qquad
w_F = \frac{1}{K_S} w_S,
\]
which depends on $A_a$ only through $e_f^{\ast}$ and on the parameters and endowments, but not on the wage levels. Substituting $w_C = \Omega(A_a)w_S$ and $w_F = \frac{1}{K_S}w_S$ into \eqref{eq:unit_cost_anchor} and using $\alpha^S_{f}+\alpha^C_{f}+\alpha^F_{f}=1$ collapses the exponent on $w_S$ to one, giving $w_S\,\Omega^{\alpha^C_{f}}K_S^{-\alpha^F_{f}}=\Theta$, and hence the levels:
\begin{align}
    w_S(A_a) &= \Theta \cdot \Omega(A_a)^{-\alpha^C_{f}} K_S^{\alpha^F_{f}}, \label{eq:wS_trans_final} \\
    w_C(A_a) &= \Theta \cdot \Omega(A_a)^{1-\alpha^C_{f}} K_S^{\alpha^F_{f}}. \label{eq:wC_trans_final}
\end{align}
At the onset threshold $A_a^*$, $e_f^*=0$ (so $K_S = \frac{\alpha^S_{f} L^F}{\alpha^F_{f} L^S}$) and $\Omega(A_a^*) = \frac{L^S \alpha^C_{f}}{L^C \alpha^S_{f}}$. Substituting these into \eqref{eq:wS_trans_final} and \eqref{eq:wC_trans_final} recovers \eqref{eq:wS_pre_level} and \eqref{eq:wC_pre_level} exactly, verifying continuity. Because $K_S$ depends on the endowments through $L^F/L^S$ (and through $e_f^{\ast}$), the transition wage \emph{levels} are invariant to $L^S$ and $L^C$ only in the two-type case $\alpha^F_{f}=0$, where $K_S^{\alpha^F_{f}}=1$ and the levels reduce to $w_S=\Theta\,\Omega^{-\alpha^C_{f}}$, $w_C=\Theta\,\Omega^{\alpha^S_{f}}$ used in Section~\ref{SEC:INCOME_TAX}; the wage \emph{ratio} $w_C/w_S=\Omega$ is invariant to labor supply for every $\alpha^F_{f}$.

\paragraph{3. Post-AI Regime ($A_a \geq A_a^{**}$)}

In the post-AI regime, substitutable labor $L^S$ is entirely reallocated to the AI sector as a production input ($L^S_{a} = L^S$), and the final-good sector utilizes only $X_a$. The aggregate production function is:
\begin{equation}
    Y_f = A_f \left( A_a (L^S)^{\alpha^S_{a}} (L^C_{a})^{\alpha^C_{a}} \right)^{\alpha^S_{f}} (L^C_{f})^{\alpha^C_{f}} (L^F)^{\alpha^F_{f}}
\end{equation}
Substituting the fixed equilibrium labor allocations $\frac{L^C_{a}}{L^C} = \frac{\alpha^S_{f}\alpha^C_{a}}{\alpha^C_{f} + \alpha^S_{f}\alpha^C_{a}}$ and $\frac{L^C_{f}}{L^C} = \frac{\alpha^C_{f}}{\alpha^C_{f} + \alpha^S_{f}\alpha^C_{a}}$, the wages are derived as the marginal products of the aggregate endowments:

\paragraph{I. Substitutable Labor Wage ($w_S$)}
This labor now acts as a "builder" of AI technology. Its wage is the marginal product of $L^S$ in the consolidated production chain:
\begin{equation}
    w_S = \frac{\partial Y_f}{\partial L^S} = (\alpha^S_{f} \alpha^S_{a}) \frac{Y_f}{L^S} \propto A_a^{\alpha^S_{f}}
\end{equation}

\paragraph{II. Complementary Labor Wage ($w_C$)}
This labor provides inputs to both the creation of AI ($L^C_{a}$) and the final-good production ($L^C_{f}$). Its wage reflects its total marginal contribution to $Y_f$:
\begin{equation}
    w_C = \frac{\partial Y_f}{\partial L^C} = (\alpha^C_{f} + \alpha^S_{f} \alpha^C_{a}) \frac{Y_f}{L^C} \propto A_a^{\alpha^S_{f}}
\end{equation}

\paragraph{III. Final-Good-Specific Labor Wage ($w_F$)}
This labor is a pure complement to the automated production process. Its wage tracks the expansion of aggregate output:
\begin{equation}
    w_F = \frac{\partial Y_f}{\partial L^F} = \alpha^F_{f} \frac{Y_f}{L^F} \propto A_a^{\alpha^S_{f}}
\end{equation}

We now derive the relative wages $\frac{w_F}{w_S}$ and $\frac{w_F}{w_C}$ in each regime.

\medskip
\noindent
\textbf{1. Pre-AI Regime ($A_a \leq A_a^*$)}

In this regime, automation is not adopted ($Y_a=0$). The economy operates as a standard Cobb-Douglas production function with labor allocations $L^S_{f}=L^S$ and $L^C_{f}=L^C$. The relative wages are determined purely by the ratio of output elasticities and endowments:
\[
\frac{w_F}{w_S} = \frac{\alpha^F_{f} Y_f / L^F}{\alpha^S_{f} Y_f / L^S} = \frac{\alpha^F_{f}}{\alpha^S_{f}}\frac{L^S}{L^F},
\quad \text{and} \quad
\frac{w_F}{w_C} = \frac{\alpha^F_{f} Y_f / L^F}{\alpha^C_{f} Y_f / L^C} = \frac{\alpha^F_{f}}{\alpha^C_{f}}\frac{L^C}{L^F}.
\]

\medskip
\noindent
\textbf{2. Transition Regime ($A_a^* < A_a < A_a^{**}$)}

Let $Z = L^S_{f} + Y_a$ be the aggregate of substitutable tasks. From the final-good firm's FOCs, wages are $w_F = \alpha^F_{f} \frac{Y_f}{L^F}$ and $w_S = p_a = \alpha^S_{f} \frac{Y_f}{Z}$.
Taking the ratio yields:
\begin{equation}
\frac{w_F}{w_S} = \frac{\alpha^F_{f}}{\alpha^S_{f}} \frac{Z}{L^F}.
\label{eq:ratio_F_S_step1}
\end{equation}
We need to express $Z$ in terms of the endowment $L^S$. Recall the labor market clearing condition for $L^S$:
\[
L^S = L^S_{f} + L^S_{a} = (1-e_f^{\ast})Z + L^S_{a}.
\]
In the automation sector, $p_a Y_a = w_S L^S_{a} / \alpha^S_{a}$. Since $p_a = w_S$ in the transition, this implies $L^S_{a} = \alpha^S_{a} Y_a = \alpha^S_{a} e_f^{\ast} Z$. Substituting this back into the labor clearing condition:
\[
L^S = (1-e_f^{\ast})Z + \alpha^S_{a} e_f^{\ast} Z = Z [ 1 - e_f^{\ast}(1-\alpha^S_{a}) ].
\]
Thus, $Z = \frac{L^S}{1 - e_f^{\ast}(1-\alpha^S_{a})}$. Substituting $Z$ into Eq. \eqref{eq:ratio_F_S_step1}:
\[
\frac{w_F}{w_S} = \frac{\alpha^F_{f}}{\alpha^S_{f}} \frac{L^S}{L^F} \frac{1}{1 - e_f^{\ast}(1-\alpha^S_{a})}.
\]

Next, consider $\frac{w_F}{w_C}$. From the FOCs, $w_C$ equates to the marginal product of complementary labor in the final-good sector: $w_C = \alpha^C_{f} \frac{Y_f}{L^C_{f}}$. The ratio is:
\begin{equation}
\frac{w_F}{w_C} = \frac{\alpha^F_{f}}{\alpha^C_{f}} \frac{L^C_{f}}{L^F}.
\label{eq:ratio_F_C_step1}
\end{equation}
We express $L^C_{f}$ in terms of the endowment $L^C$. The total expenditure on complementary labor is $w_C L^C = w_C L^C_{f} + w_C L^C_{a}$.
Using the cost shares: $w_C L^C_{f} = \alpha^C_{f} p_f Y_f$ and $w_C L^C_{a} = \alpha^C_{a} p_a Y_a = \alpha^C_{a} (e_f^{\ast} \alpha^S_{f} p_f Y_f)$.
Summing these gives total income to $L^C$:
\[
w_C L^C = p_f Y_f (\alpha^C_{f} + e_f^{\ast} \alpha^S_{f} \alpha^C_{a}).
\]
Using $w_C = \alpha^C_{f} p_f Y_f / L^C_{f}$, we substitute $w_C$ on the LHS:
\[
\frac{\alpha^C_{f} p_f Y_f}{L^C_{f}} L^C = p_f Y_f (\alpha^C_{f} + e_f^{\ast} \alpha^S_{f} \alpha^C_{a})
\implies
L^C_{f} = L^C \frac{\alpha^C_{f}}{\alpha^C_{f} + e_f^{\ast} \alpha^S_{f} \alpha^C_{a}}.
\]
Substituting $L^C_{f}$ into Eq. \eqref{eq:ratio_F_C_step1}:
\[
\frac{w_F}{w_C} = \frac{\alpha^F_{f}}{\alpha^C_{f}} \frac{1}{L^F} \left( L^C \frac{\alpha^C_{f}}{\alpha^C_{f} + e_f^{\ast} \alpha^S_{f} \alpha^C_{a}} \right)
= \frac{\alpha^F_{f}}{\alpha^C_{f} + e_f^{\ast} \alpha^S_{f} \alpha^C_{a}} \frac{L^C}{L^F}.
\]

\medskip
\noindent
\textbf{3. Post-AI Regime ($A_a \geq A_a^{**}$)}

In this regime, automation is fully adopted such that $e_f^{\ast}=1$ (or effectively the limit as $L^S_{f} \to 0$).
For $\frac{w_F}{w_S}$, substitute $e_f^{\ast}=1$ into the transition formula:
\[
\frac{w_F}{w_S} = \frac{\alpha^F_{f}}{\alpha^S_{f}} \frac{L^S}{L^F} \frac{1}{1 - (1-\alpha^S_{a})} = \frac{\alpha^F_{f}}{\alpha^S_{f} \alpha^S_{a}} \frac{L^S}{L^F}.
\]
For $\frac{w_F}{w_C}$, substitute $e_f^{\ast}=1$ into the transition formula:
\[
\frac{w_F}{w_C} = \frac{\alpha^F_{f}}{\alpha^C_{f} + \alpha^S_{f} \alpha^C_{a}} \frac{L^C}{L^F}.
\]

\subsection{Proof of Proposition~\ref{prop_wagegap}: Growing Wage Inequality}

\begin{proof}
Consider the substitution phase \(A_a^{*} < A_a < A_a^{**}\).

\medskip
\noindent\textbf{Part 1: Elasticity of \(w_C/w_S\).}\\
During the substitution phase we have:
\[
\frac{w_C}{w_S} = \left( A_a (\alpha^C_{a})^{\alpha^C_{a}} (\alpha^S_{a})^{\alpha^S_{a}} \right)^{\frac{1}{\alpha^C_{a}}}.
\]
Taking logs:
\[
\ln\left(\frac{w_C}{w_S}\right) = \frac{1}{\alpha^C_{a}} \ln A_a + \frac{1}{\alpha^C_{a}} \ln\left[(\alpha^C_{a})^{\alpha^C_{a}} (\alpha^S_{a})^{\alpha^S_{a}}\right].
\]
Differentiating with respect to \(\ln A_a\):
\[
\frac{d \ln (w_C/w_S)}{d \ln A_a} = \frac{1}{\alpha^C_{a}}.
\]
\medskip

\noindent\textbf{Part 2: Elasticity of \(w_F/w_S\).}\\
During the substitution phase:
\[
\frac{w_F}{w_S} = \frac{\alpha^F_{f}}{\alpha^S_{f}} \frac{L^S}{L^F} \frac{1}{1 - e_f^{\ast} \alpha^C_{a}},
\]
where \(e_f^{\ast}\) is given by Equation \eqref{ea}. Define:
\[
\Lambda(A_a) = \frac{L^C}{L^S} \left( A_a (\alpha^C_{a})^{\alpha^C_{a}} (\alpha^S_{a})^{\alpha^S_{a}} \right)^{\frac{1}{\alpha^C_{a}}}.
\]
Note that \(\frac{d \ln \Lambda}{d \ln A_a} = \frac{1}{\alpha^C_{a}}\). From Equation \eqref{ea}:
\[
e_f^{\ast} = \frac{\alpha^S_{f}\Lambda - \alpha^C_{f}}{\alpha^S_{f}\alpha^C_{a}(1+\Lambda)}.
\]

Now compute the elasticity. First, note that:
\[
\frac{w_F}{w_S} \propto \frac{1}{1 - e_f^{\ast} \alpha^C_{a}}.
\]
Compute \(1 - e_f^{\ast} \alpha^C_{a}\):
\begin{align*}
1 - e_f^{\ast} \alpha^C_{a} 
&= 1 - \frac{\alpha^S_{f}\Lambda - \alpha^C_{f}}{\alpha^S_{f}(1+\Lambda)} \\
&= \frac{\alpha^S_{f}(1+\Lambda) - (\alpha^S_{f}\Lambda - \alpha^C_{f})}{\alpha^S_{f}(1+\Lambda)} \\
&= \frac{\alpha^S_{f} + \alpha^C_{f}}{\alpha^S_{f}(1+\Lambda)}.
\end{align*}

Thus:
\[
\frac{w_F}{w_S} = \frac{\alpha^F_{f}}{\alpha^S_{f}} \frac{L^S}{L^F} \cdot \frac{\alpha^S_{f}(1+\Lambda)}{\alpha^S_{f} + \alpha^C_{f}} = \frac{\alpha^F_{f}}{\alpha^S_{f} + \alpha^C_{f}} \frac{L^S}{L^F} (1+\Lambda).
\]

By taking logs:
\[
\ln\left(\frac{w_F}{w_S}\right) = \ln\left(\frac{\alpha^F_{f}}{\alpha^S_{f} + \alpha^C_{f}} \frac{L^S}{L^F}\right) + \ln(1+\Lambda).
\]

Differentiate with respect to \(\ln A_a\):
\[
\frac{d \ln (w_F/w_S)}{d \ln A_a} = \frac{1}{1+\Lambda} \cdot \frac{d\Lambda}{d\ln A_a} = \frac{1}{1+\Lambda} \cdot \Lambda \cdot \frac{1}{\alpha^C_{a}} = \frac{1}{\alpha^C_{a}} \frac{\Lambda}{1+\Lambda}.
\]
Since \(\Lambda > 0\) in the substitution phase, this elasticity is strictly positive.

\medskip
\noindent\textbf{Part 3: Elasticity of \(w_F/w_C\).}\\
Using the chain rule:
\[
\frac{d \ln (w_F/w_C)}{d \ln A_a} = \frac{d \ln (w_F/w_S)}{d \ln A_a} - \frac{d \ln (w_C/w_S)}{d \ln A_a} = \frac{1}{\alpha^C_{a}} \frac{\Lambda}{1+\Lambda} - \frac{1}{\alpha^C_{a}} = -\frac{1}{\alpha^C_{a}} \frac{1}{1+\Lambda}.
\]
Since \(\Lambda > 0\), this elasticity is strictly negative.

\medskip

\noindent\textbf{Part 4: Elasticity of final-good sector output.}\\
In the substitution phase ($A_a^* < A_a < A_a^{**}$), by the envelope theorem, the elasticity of final-good sector output value with respect to automation productivity is:
\[
\frac{\textup{d}\ln(p_fY_f)}{\textup{d}\ln(A_{a})}=e_{f}\alpha^S_{f}
\]
In other words, the elasticity of GDP with respect to $A_a$
is equal to the Domar weight of the AI sector.
To see this:
\[
\frac{dY_f}{dA_a} = \frac{\partial Y_f}{\partial Y_a} \cdot \frac{\partial Y_a}{\partial A_a} = p_a \cdot \frac{\partial Y_a}{\partial A_a}
\]
\[
\frac{d \ln Y_f}{d \ln A_a} = \frac{A_a}{Y_f} \cdot \frac{dY_f}{dA_a} = \frac{A_a}{Y_f} \cdot \left( p_a \cdot \frac{\partial Y_a}{\partial A_a} \right)
\]
Next, by taking the partial derivative of $Y_a$ w.r.t. $A_a$:
\[
\frac{\partial Y_a}{\partial A_a} = \frac{Y_a}{A_a}
\]
Substituting this in:
\[
\frac{d \ln Y_f}{d \ln A_a} = \frac{A_a}{Y_f} \cdot \left( p_a \cdot \frac{Y_a}{A_a} \right) = \frac{p_a Y_a}{Y_f}
\]
Since $p_f = 1$, this is:
\[
\frac{d \ln Y_f}{d \ln A_a} = \frac{p_a Y_a}{p_f Y_f} = e_f \alpha^S_{f}
\]
where $e_f \alpha^S_{f} = \frac{p_a Y_a}{p_f Y_f}$ is the Domar weight of the AI sector.

\medskip

\noindent\textbf{Part 5: Comparison.}

First, compare \(\frac{d \ln (w_F/w_S)}{d \ln A_a}\) and \(\frac{d \ln Y_f}{d \ln A_a}\):

\[
\frac{d \ln (w_F/w_S)}{d \ln A_a} - \frac{d \ln Y_f}{d \ln A_a} = 
\frac{1}{\alpha^C_{a}(1+\Lambda)} \left[ \Lambda - (\alpha^S_{f}\Lambda - \alpha^C_{f}) \right]. 
\]

Then, $\frac{d \ln (w_F/w_S)}{d \ln A_a} - \frac{d \ln Y_f}{d \ln A_a}>0$ holds for $\alpha^S_{f}<1$ and  $\Lambda>0$. Next, since \(e_f^{\ast} > 0\) in the substitution phase, we have \(\frac{d \ln Y_f}{d \ln A_a} = e_f^{\ast} \alpha^S_{f} > 0\) (the Domar weight derived in Part~4). Finally, from Part 1, \(\frac{d \ln (w_C/w_S)}{d \ln A_a} = \frac{1}{\alpha^C_{a}} > 0\). Since \(\frac{1}{\alpha^C_{a}} > \frac{\Lambda}{1+\Lambda} \cdot \frac{1}{\alpha^C_{a}}\) (because \(0 < \frac{\Lambda}{1+\Lambda} < 1\)), we have:
\[
\frac{d \ln (w_C/w_S)}{d \ln A_a} > \frac{d \ln (w_F/w_S)}{d \ln A_a}.
\]

Putting it all together:
\[
\frac{d \ln (w_C/w_S)}{d \ln A_a} > \frac{d \ln (w_F/w_S)}{d \ln A_a} > \frac{d \ln Y_f}{d \ln A_a} > 0 > \frac{d \ln (w_F/w_C)}{d \ln A_a}.
\]

This completes the proof of all statements on wage inequality.
\end{proof}

\subsection{Proof of Proposition~\ref{taxrate}}

\noindent Throughout we use the two-type transition wages of Section~\ref{SEC:INCOME_TAX} (the $\alpha^F_{f}=0$ specialization of the absolute-wage derivation in Appendix~\ref{appendix}),
\[
w_S=\Theta\,\Omega^{-\alpha^C_{f}},
\qquad
w_C=\Theta\,\Omega^{\alpha^S_{f}},
\qquad
\Omega=\frac{w_C}{w_S}=\left(A_a\Phi\right)^{1/\alpha^C_{a}},
\]
where $\Phi=(\alpha^C_{a})^{\alpha^C_{a}}(\alpha^S_{a})^{\alpha^S_{a}}$, $\Theta=A_f(\alpha^S_{f})^{\alpha^S_{f}}(\alpha^C_{f})^{\alpha^C_{f}}$, and $\alpha^S_{f}+\alpha^C_{f}=1$. Because wage taxes and subsidies are non-distortionary under inelastic labor supply, the pre-tax wages---and hence $\Omega$, which is strictly increasing in $A_a$---are unaffected by the policy.

\medskip
\noindent To see the first claim, balance the budget, $(w_C^{1}-\widetilde{w}_C^{1})\overline{L}^C=(\widetilde{w}_S^{1}-w_S^{1})\overline{L}^S$, and set $\widetilde{w}_S^{1}=w_S^{0}$. The required rate is
\[
\tau_C=\frac{w_C^{1}-\widetilde{w}_C^{1}}{w_C^{1}}=\frac{(w_S^{0}-w_S^{1})\,\overline{L}^S}{w_C^{1}\,\overline{L}^C}.
\]
Writing $\Omega_0=\Omega(A_a^{0})$ and $\Omega_1=\Omega(A_a^{1})$ and substituting the wage levels above, $\Theta$ cancels and
\[
\tau_C=\frac{\overline{L}^S}{\overline{L}^C}\,\frac{\Omega_0^{-\alpha^C_{f}}-\Omega_1^{-\alpha^C_{f}}}{\Omega_1^{\alpha^S_{f}}}.
\]
Let $x= A_a^{1}/A_a^{0}\geq 1$ be the gross productivity gain, so that $\Omega_1=\Omega_0\,x^{1/\alpha^C_{a}}$. Using $\alpha^S_{f}+\alpha^C_{f}=1$,
\[
\tau_C=\frac{\overline{L}^S}{\overline{L}^C}\,\Omega_0^{-1}\left[x^{-\alpha^S_{f}/\alpha^C_{a}}-x^{-1/\alpha^C_{a}}\right]
=\frac{\overline{L}^S}{\overline{L}^C}\,\Omega_0^{-1}\,g(x),
\qquad
g(x)= x^{-1/\alpha^C_{a}}\!\left(x^{\alpha^C_{f}/\alpha^C_{a}}-1\right).
\]
Hence $\tau_C=0$ at $x=1$ (i.e.\ at $A_a^{1}=A_a^{0}$) and $\tau_C>0$ for $x>1$. Differentiating $g$,
\[
g'(x)=\frac{1}{\alpha^C_{a}}\,x^{-\frac{1}{\alpha^C_{a}}-1}\left[\,1-\alpha^S_{f}\,x^{\alpha^C_{f}/\alpha^C_{a}}\,\right],
\]
and since the prefactor is strictly positive, $g'(x)$ shares the sign of $1-\alpha^S_{f}\,x^{\alpha^C_{f}/\alpha^C_{a}}$. Therefore $g'(x)>0$ for $x<x^{*}$ and $g'(x)<0$ for $x>x^{*}$, where
\[
x^{*}=(\alpha^S_{f})^{-\alpha^C_{a}/\alpha^C_{f}} .
\]
Since $0<\alpha^S_{f}<1$ we have $x^{*}>1$, and at the benchmark $g'(1)=\alpha^C_{f}/\alpha^C_{a}>0$. Thus $\tau_C$ rises from zero at $A_a^{0}$, attains a unique maximum at $\widehat{A}_a=A_a^{0}\,x^{*}=A_a^{0}(\alpha^S_{f})^{-\alpha^C_{a}/\alpha^C_{f}}$, and falls thereafter; if $\widehat{A}_a\geq A^{**}$ the maximizer lies beyond the transition, so $\tau_C$ is increasing on all of $[A_a^{0},A^{**}]$. This establishes the first claim.

The intuition is that $\tau_C$ is the substitutable-worker deficit relative to the complementary-worker base, $\tau_C=(w_S^{0}-w_S^{1})\overline{L}^S/(w_C^{1}\overline{L}^C)$. Near the benchmark the deficit grows from zero and so outpaces the (already positive) base, pushing $\tau_C$ up; only once $w_S^{1}$ flattens toward its post-transition level does the base overtake the deficit and $\tau_C$ decline.

\medskip
\noindent To see the second claim, note that the balanced-budget tax delivering $\widetilde{w}_C^{1}/\widetilde{w}_S^{1}=\gamma$ solves
\[
\frac{\widetilde{w}_C^{1}}{\widetilde{w}_S^{1}}=\gamma=\frac{w_C^{1}(1-\tau_C)}{w_S^{1}(1-\tau_S)}
\qquad\text{together with}\qquad
\tau_C\,w_C^{1}\overline{L}^C+\tau_S\,w_S^{1}\overline{L}^S=0 .
\]
Eliminating $\tau_S$ yields
\[
\tau_C=\frac{\overline{L}^S}{\gamma\,\overline{L}^C+\overline{L}^S}\left(1-\frac{\gamma}{\Omega_1}\right),
\qquad
\Omega_1=\frac{w_C^{1}}{w_S^{1}}=\left(A_a^{1}\Phi\right)^{1/\alpha^C_{a}} .
\]
The prefactor is a positive constant, and $\Omega_1$ is strictly increasing in $A_a^{1}$, so $\gamma/\Omega_1$ is strictly decreasing and $\tau_C$ is strictly increasing in the productivity gain (and positive exactly when $\gamma<\Omega_1$). This establishes the second claim.

\medskip
\noindent Finally, take $\gamma=\Omega_0=w_C^{0}/w_S^{0}$. The pre-transition wages lie on the frontier $w_S^{0}\overline{L}^S+w_C^{0}\overline{L}^C=Y_f^{0}$ along the ray $w_C/w_S=\gamma$, and the after-tax wages lie on the new frontier $\widetilde{w}_S^{1}\overline{L}^S+\widetilde{w}_C^{1}\overline{L}^C=Y_f^{1}$ along the same ray. Since output rises during the transition, $Y_f^{1}>Y_f^{0}$, the after-tax point lies strictly farther out along the ray, so $\widetilde{w}_S^{1}>w_S^{0}$ and $\widetilde{w}_C^{1}>w_C^{0}$ and the policy is Pareto-improving; by continuity this persists for $\gamma$ in a neighborhood of $\Omega_0$. For $\gamma$ large enough the ray is steep enough that $\widetilde{w}_C^{1}<w_C^{0}$, and the policy is no longer Pareto-improving.\eproof

\subsection{Proof of Proposition~\ref{taxrate2}}

Throughout we use the two-type transition wages,
\[
w_S = \Theta\,\Omega^{-\alpha^C_{f}}, \qquad
w_C = \Theta\,\Omega^{\alpha^S_{f}}, \qquad
\Omega \equiv \frac{w_C}{w_S} = (A_a\Phi)^{1/\alpha^C_{a}},
\]
where $\Phi = (\alpha^C_{a})^{\alpha^C_{a}}(\alpha^S_{a})^{\alpha^S_{a}}$, $\Theta = A_f(\alpha^S_{f})^{\alpha^S_{f}}(\alpha^C_{f})^{\alpha^C_{f}}$, and $\alpha^S_{f}+\alpha^C_{f}=1$. Write $\Omega_0=\Omega(A_a^{0})$ and $\Omega_1=\Omega(A_a^{1})$. From the proof of Proposition~\ref{taxrate}, the two required tax rates are
\[
\tau_C^{\,w} = \frac{(w_S^{0}-w_S^{1})\,\overline{L}^S}{w_C^{1}\,\overline{L}^C},
\qquad
\tau_C^{\,\gamma} = \frac{\overline{L}^S}{\gamma\overline{L}^C+\overline{L}^S}\left(1-\frac{\gamma}{\Omega_1}\right),
\]
the latter positive precisely when $\gamma<\Omega_1$. By the labor-supply invariance of Proposition~\ref{prop_wagegap}, the pre-tax wage levels $w_S^{0},w_S^{1},w_C^{1}$ and the ratio $\Omega$ do not depend on $\overline{L}^S,\overline{L}^C$.

\medskip
\noindent\textit{(i) Both targets are increasing in $\overline{L}^S/\overline{L}^C$.}
Let $r=\overline{L}^S/\overline{L}^C$. For the wage target, $\tau_C^{\,w}=r\,(w_S^{0}-w_S^{1})/w_C^{1}$, where $(w_S^{0}-w_S^{1})/w_C^{1}>0$ is independent of $r$ by invariance; hence $\tau_C^{\,w}$ is strictly increasing in $r$. For the $\gamma$ target, write $\dfrac{\overline{L}^S}{\gamma\overline{L}^C+\overline{L}^S}=\dfrac{r}{\gamma+r}$, so
\[
\tau_C^{\,\gamma}=\frac{r}{\gamma+r}\left(1-\frac{\gamma}{\Omega_1}\right),
\qquad
\frac{\partial}{\partial r}\!\left(\frac{r}{\gamma+r}\right)=\frac{\gamma}{(\gamma+r)^{2}}>0 .
\]
Since $1-\gamma/\Omega_1>0$ is independent of $r$, $\tau_C^{\,\gamma}$ is strictly increasing in $r$.

\medskip
\noindent\textit{(ii) Dependence on $\alpha^S_{f}$.}
Substituting the wage levels into $\tau_C^{\,w}$, the factor $\Theta$ cancels, and using $\alpha^S_{f}+\alpha^C_{f}=1$,
\[
\tau_C^{\,w}
=\frac{\overline{L}^S}{\overline{L}^C}\cdot
\frac{\Omega_0^{-\alpha^C_{f}}-\Omega_1^{-\alpha^C_{f}}}{\Omega_1^{\alpha^S_{f}}}
=\frac{\overline{L}^S}{\overline{L}^C}\left[
\frac{1}{(A_a^{0}\Phi)^{1/\alpha^C_{a}}}\left(\frac{A_a^{0}}{A_a^{1}}\right)^{\alpha^S_{f}/\alpha^C_{a}}
-\frac{1}{(A_a^{1}\Phi)^{1/\alpha^C_{a}}}\right].
\]
Since $\Phi$ and $\alpha^C_{a}=1-\alpha^S_{a}$ depend only on AI-sector parameters, the sole $\alpha^S_{f}$-dependent term is $\big(A_a^{0}/A_a^{1}\big)^{\alpha^S_{f}/\alpha^C_{a}}$. As $A_a^{0}/A_a^{1}<1$ and the exponent $\alpha^S_{f}/\alpha^C_{a}$ increases in $\alpha^S_{f}$, this term is strictly decreasing in $\alpha^S_{f}$; the remaining term is constant in $\alpha^S_{f}$. Hence $\tau_C^{\,w}$ is strictly decreasing in $\alpha^S_{f}$. For the $\gamma$ target, $\tau_C^{\,\gamma}$ depends on the technology only through $\Omega_1=(A_a^{1}\Phi)^{1/\alpha^C_{a}}$, an AI-sector object, so $\alpha^S_{f}$ does not appear and $\tau_C^{\,\gamma}$ is independent of $\alpha^S_{f}$.

\medskip
\noindent\textit{(iii) Dependence on $\alpha^S_{a}$.}
Since $\ln\Omega=\frac{1}{\alpha^C_{a}}\big[\ln A_a+\alpha^C_{a}\ln\alpha^C_{a}+\alpha^S_{a}\ln\alpha^S_{a}\big]$ with $\alpha^C_{a}=1-\alpha^S_{a}$, differentiating (the $\alpha^C_{a}\ln\alpha^C_{a}$ terms cancel) gives
\[
\frac{\partial\ln(w_C/w_S)}{\partial\alpha^S_{a}}=\frac{\ln(A_a\alpha^S_{a})}{(\alpha^C_{a})^{2}},
\]
positive when $A_a\alpha^S_{a}>1$ and negative when $A_a\alpha^S_{a}<1$. For the $\gamma$ target, $\tau_C^{\,\gamma}$ is increasing in $\Omega_1$ and $\partial\Omega_1/\partial\alpha^S_{a}=\Omega_1\ln(A_a^{1}\alpha^S_{a})/(\alpha^C_{a})^{2}$, so $\partial\tau_C^{\,\gamma}/\partial\alpha^S_{a}$ shares the sign of $\ln(A_a^{1}\alpha^S_{a})$. For the wage target, writing $\tau_C^{\,w}=\frac{\overline{L}^S}{\overline{L}^C}\big[\Omega_0^{-\alpha^C_{f}}\Omega_1^{-\alpha^S_{f}}-\Omega_1^{-1}\big]$, $\alpha^S_{a}$ enters through both $\Omega_0$ and $\Omega_1$, moving the deficit and the tax base simultaneously, so the net effect has no definite sign and depends on $A_a^{0}$ and $A_a^{1}$. \qed

\subsection{Derivations for Section~\ref{SEC:MONOPOLY}: Monopoly AI Production}
\label{app:monopoly}

Throughout this subsection we work with the two-type economy of Section~\ref{SEC:INCOME_TAX} ($\alpha^F_{f}=0$, $\overline{L}^F=0$):
\[
Y_a = A_a (L^S_{a})^{\alpha^S_{a}} (L^C_{a})^{\alpha^C_{a}}, \quad \alpha^S_{a}+\alpha^C_{a}=1,
\qquad
Y_f = A_f\big[L^S_{f}+X_a\big]^{\alpha^S_{f}} (L^C_{f})^{\alpha^C_{f}}, \quad \alpha^S_{f}+\alpha^C_{f}=1 .
\]
The final-good sector is perfectly competitive under both market structures; only the AI sector differs. We write $\Phi = (\alpha^S_{a})^{\alpha^S_{a}}(\alpha^C_{a})^{\alpha^C_{a}}$ and $\Theta = A_f (\alpha^S_{f})^{\alpha^S_{f}}(\alpha^C_{f})^{\alpha^C_{f}}$, and use superscripts $mon$ and $comp$ to denote the two regimes evaluated at a common productivity $A_a$. Part~A fixes the common structure; Part~B isolates the monopolist's allocation; Part~C proves the no-tax comparison (Proposition~\ref{prop:mon_comp_no_tax}) and the thresholds; Part~D records the wage-bill machinery; Part~E assembles the proof of Proposition~\ref{prop:ctax_mono}; Part~F derives the redistribution instruments and the regime ordering.

\subsubsection*{The monopoly-GE formulation}

As explained in Section~\ref{SEC:MONOPOLY}, we model the AI producer as a monopolist fully understanding the 
general-equilibrium setting within which they are embedded, rather than as a partial-equilibrium monopolist
facing fixed-wages. The monopolist chooses the AI price and the
inputs used in AI production, anticipating the equilibrium response of the
competitive final-good sector and labor markets.

The derivation below characterizes the non-degenerate interior transition branch
with \(0<\alpha_a^S<1\). The limiting case \(\alpha_a^S=0\), used in some
figures, is handled separately in the numerical implementation.

Given prices \((p_a,w_S^m,w_C^m)\), the competitive final-good sector chooses AI
use and final-good labor demands, \((X_a,L_f^S,L_f^C)\), to maximize
\[
A_f[L_f^S+X_a]^{\alpha_f^S}(L_f^C)^{\alpha_f^C}
-
p_aX_a
-
w_S^mL_f^S
-
w_C^mL_f^C .
\]
Let
\[
Z=L_f^S+X_a.
\]
During the interior transition, both AI and substitutable labor are used in
final-good production. The final-good sector's first-order conditions are
therefore
\[
p_a
=
\alpha_f^S A_f Z^{\alpha_f^S-1}(L_f^C)^{\alpha_f^C},
\]
\[
w_S^m
=
\alpha_f^S A_f Z^{\alpha_f^S-1}(L_f^C)^{\alpha_f^C},
\]
and
\[
w_C^m
=
\alpha_f^C A_f Z^{\alpha_f^S}(L_f^C)^{\alpha_f^C-1}.
\]
Since AI and substitutable labor enter the same final-good bundle, the first two
conditions imply
\[
p_a=w_S^m
\]
at any interior equilibrium.

The monopolist produces AI according to
\[
Y_a=A_a(L_a^S)^{\alpha_a^S}(L_a^C)^{\alpha_a^C}.
\]
Equilibrium requires AI-market clearing and labor-market clearing:
\[
X_a=Y_a,
\qquad
L_f^S=\overline L^S-L_a^S,
\qquad
L_f^C=\overline L^C-L_a^C.
\]
Thus, after imposing market clearing, we can write
\[
Z
=
\overline L^S-L_a^S
+
A_a(L_a^S)^{\alpha_a^S}(L_a^C)^{\alpha_a^C}.
\]
Hence the final-good sector's equilibrium reaction is
\begin{equation}
\begin{aligned}
p_a=w_S^m
&=
\alpha_f^S A_f
\left[
\overline L^S-L_a^S
+
A_a(L_a^S)^{\alpha_a^S}(L_a^C)^{\alpha_a^C}
\right]^{\alpha_f^S-1}
(\overline L^C-L_a^C)^{\alpha_f^C},
\\
w_C^m
&=
\alpha_f^C A_f
\left[
\overline L^S-L_a^S
+
A_a(L_a^S)^{\alpha_a^S}(L_a^C)^{\alpha_a^C}
\right]^{\alpha_f^S}
(\overline L^C-L_a^C)^{\alpha_f^C-1}.
\end{aligned}
\label{eq:wage_levels_mono}
\end{equation}

The monopolist's price-setting profit is
\[
\Pi^m
=
p_aY_a-w_S^mL_a^S-w_C^mL_a^C.
\]
Using the interior final-good reaction \(w_S^m=p_a\), this can be written as
\[
\Pi^m
=
w_S^m(Y_a-L_a^S)-w_C^mL_a^C.
\]
Therefore, along the interior general-equilibrium residual-demand locus, the
monopolist's problem can be written in reduced form as
\[
\max_{L_a^S,L_a^C}
\Pi^m(L_a^S,L_a^C)
=
w_S^m(L_a^S,L_a^C)
\left[
A_a(L_a^S)^{\alpha_a^S}(L_a^C)^{\alpha_a^C}
-
L_a^S
\right]
-
w_C^m(L_a^S,L_a^C)L_a^C,
\]
where \(w_S^m(L_a^S,L_a^C)\) and \(w_C^m(L_a^S,L_a^C)\) are the endogenous wage
functions defined in \eqref{eq:wage_levels_mono}. The price choice is embedded
in the inverse-demand relation
\[
p_a=w_S^m(L_a^S,L_a^C).
\]
Thus wages are not held fixed when the monopolist changes its input choices:
\[
\frac{\partial \Pi^m}{\partial L_a^C}
=
w_S^m\frac{\partial Y_a}{\partial L_a^C}
+
\frac{\partial w_S^m}{\partial L_a^C}(Y_a-L_a^S)
-
w_C^m
-
L_a^C\frac{\partial w_C^m}{\partial L_a^C}.
\]

To close the argument, consider price deviations by the AI producer.
There are two cases: price deviations that keep the economy inside the
transition region, and price cuts that move the economy to the boundary
\(L_f^S=0\).

First consider any price deviation that keeps the economy in the interior
transition, so that \(L_f^S>0\). In this region, final-good producers use both
AI and substitutable labor. Hence the equilibrium price of AI must satisfy
\[
p_a=w_S^m.
\]
Therefore, every interior price deviation corresponds to some feasible interior
input pair \((L_a^S,L_a^C)\), with the induced price given by the inverse
general-equilibrium relation
\[
p_a=w_S^m(L_a^S,L_a^C).
\]
Thus the monopolist's price choice is already embedded in the reduced-form
problem
\[
\max_{L_a^S,L_a^C}
\Pi^m(L_a^S,L_a^C).
\]
Let
\[
(L_a^{S,m},L_a^{C,m})
\]
denote the monopoly-GE solution in the interior transition region, and let
\[
p_a^m=w_S^m(L_a^{S,m},L_a^{C,m}).
\]
Then, by construction,
\[
\Pi^m(L_a^{S,m},L_a^{C,m})
\geq
\Pi^m(L_a^S,L_a^C)
\]
for every other feasible input pair that induces an interior-transition
equilibrium. Hence no price increase or price cut that leaves the economy in
the interior transition region can raise monopoly profit. Such a price deviation
is simply another point on the same general-equilibrium residual-demand locus,
and the reduced-form maximization already optimizes over that locus.

It remains to consider price cuts large enough to move the economy to the
boundary \(L_f^S=0\). Fix \(A_a\) and suppose that the GE-monopoly solution is an
interior-transition allocation,
\[
(L_a^{S,m},L_a^{C,m}),
\qquad
0<L_a^{S,m}<\overline L^S,
\qquad
0<L_a^{C,m}<\overline L^C,
\qquad
L_f^{S,m}>0.
\]
Let the corresponding interior profit be
\[
\Pi^{m,\mathrm{int}}
=
\Pi^m(L_a^{S,m},L_a^{C,m}).
\]

At the boundary \(L_f^S=0\), all substitutable labor is used in AI production:
\[
L_a^{S,B}=\overline L^S.
\]
For any feasible \(L_a^C\in(0,\overline L^C)\), boundary AI output is
\[
Y_a^B(L_a^C)
=
A_a(\overline L^S)^{\alpha_a^S}(L_a^C)^{\alpha_a^C},
\]
final-good complementary labor is
\[
L_f^{C,B}(L_a^C)
=
\overline L^C-L_a^C,
\]
and the effective substitutable input in final-good production is
\[
Z^B(L_a^C)=Y_a^B(L_a^C).
\]
The threshold boundary price is the final-good marginal value of the effective
substitutable input evaluated at the boundary:
\[
w_S^{B,*}(L_a^C)=p_a^{B,*}(L_a^C)
=
\alpha_f^S A_f
\left[Z^B(L_a^C)\right]^{\alpha_f^S-1}
\left[L_f^{C,B}(L_a^C)\right]^{\alpha_f^C}.
\]
At this threshold, final-good producers are just willing to purchase the
boundary AI quantity while using no substitutable labor in final-good production.
The corresponding complementary-labor wage is
\[
w_C^{B,*}(L_a^C)
=
\alpha_f^C A_f
\left[Z^B(L_a^C)\right]^{\alpha_f^S}
\left[L_f^{C,B}(L_a^C)\right]^{\alpha_f^C-1}.
\]
Since \(p_a^{B,*}\) is the highest price consistent with the boundary allocation,
any lower price that keeps \(L_f^S=0\) weakly lowers revenue and cannot generate
higher profit for the same boundary quantity. Therefore, it is enough to compare
the interior profit with the best boundary profit evaluated at the threshold
price:
\[
\Pi^{m,B,*}
=
\sup_{0<L_a^C<\overline L^C}
\left\{
p_a^{B,*}(L_a^C)Y_a^B(L_a^C)
-
p_a^{B,*}(L_a^C)\overline L^S
-
w_C^{B,*}(L_a^C)L_a^C
\right\}.
\]
Equivalently,
\[
\Pi^{m,B,*}
=
\sup_{0<L_a^C<\overline L^C}
\left\{
p_a^{B,*}(L_a^C)
\left[
Y_a^B(L_a^C)-\overline L^S
\right]
-
w_C^{B,*}(L_a^C)L_a^C
\right\}.
\]

Because the monopoly-GE solution at this \(A_a\) is assumed to be an
interior-transition solution, the best boundary profit cannot exceed the
interior monopoly-GE profit:
\[
\Pi^{m,B,*}\leq\Pi^{m,\mathrm{int}}.
\]
If the inequality failed, a boundary allocation evaluated at its threshold price
would yield strictly higher profit, so the GE-monopoly solution would not be
interior. Since lowering the price below \(p_a^{B,*}\) while keeping
\(L_f^S=0\) cannot improve on \(\Pi^{m,B,*}\), no price cut that moves the
economy to the boundary is profitable.

Combining the interior maximization with the boundary comparison, the
monopoly-GE allocation is profit maximizing against both types of deviations:
(i) price changes that keep the economy in the interior transition region, and
(ii) price cuts that move the economy to the boundary \(L_f^S=0\).

Lastly, we show that the interior monopoly-GEsolution is unique on the
non-degenerate transition branch. To see this, use the GE condition
\(MP_S^a=1\), which implies
\[
L_a^S=\alpha_a^S Y_a,
\qquad
Y_a=\kappa L_a^C,
\qquad
\kappa=\left[A_a(\alpha_a^S)^{\alpha_a^S}\right]^{1/\alpha_a^C}.
\]
Let \(L=L_a^C\) and define
\[
\chi=\alpha_a^C\kappa,
\qquad
Z=\overline L^S+\chi L,
\qquad
L_f^C=\overline L^C-L.
\]
Using
\[
w_S=\alpha_f^S A_f Z^{\alpha_f^S-1}(L_f^C)^{\alpha_f^C},
\qquad
w_C=\alpha_f^C A_f Z^{\alpha_f^S}(L_f^C)^{\alpha_f^C-1},
\]
the monopoly-GE profit along the interior branch becomes
\[
\Pi^m(L)
=
A_f L
\left[
\alpha_f^S \chi(\overline L^C-L)
-
\alpha_f^C(\overline L^S+\chi L)
\right]
(\overline L^S+\chi L)^{-\alpha_f^C}
(\overline L^C-L)^{-\alpha_f^S}.
\]
Equivalently, define the un-normalized allocation ratio
\[
\xi=\frac{Z}{L_f^C}
=
\frac{\overline L^S+\chi L}{\overline L^C-L}.
\]
Then profit is proportional to
\[
\Pi^m(\xi)
\propto
\frac{
(\xi-\underline \xi)(\overline \xi-\xi)
}{
\xi^{\alpha_f^C}(\xi+\chi)
},
\]
where
\[
\underline \xi=\frac{\overline L^S}{\overline L^C},
\qquad
\overline \xi=\frac{\alpha_f^S}{\alpha_f^C}\chi.
\]
Thus profit is zero at the two endpoints
\[
\xi=\underline \xi
\qquad\text{and}\qquad
\xi=\overline \xi,
\]
and is positive between them.

The first-order condition for an interior maximum is
\[
\frac{1}{\xi-\underline \xi}
-
\frac{1}{\overline \xi-\xi}
-
\frac{\alpha_f^C}{\xi}
-
\frac{1}{\xi+\chi}
=
0.
\]
Let
\[
t=\frac{\xi-\underline \xi}{\overline \xi-\xi}>0.
\]
After substituting this transformation, the first-order condition becomes a
cubic equation of the form
\[
t^3
+
a t^2
-
b t
-
c
=
0,
\]
where
\[
a=1-\alpha_f^S\left(1-\frac{\underline \xi}{\overline \xi}\right)>0,
\]
\[
b=
\left[
\frac{\underline \xi}{\overline \xi}
+
\alpha_f^S
-
\alpha_f^S\frac{\underline \xi}{\overline \xi}
\right]a
>0,
\]
and
\[
c=
\frac{\underline \xi}{\overline \xi}a>0.
\]
The signs of the coefficients are therefore
\[
+,\quad +,\quad -,\quad -.
\]
By Descartes' rule of signs, this cubic has at most one positive real root,
because its coefficient sequence has exactly one sign change. Since profit is
zero at the two endpoints and positive between them, an interior maximizer
exists, so the first-order condition has at least one positive root. Therefore,
the cubic has exactly one positive root. Hence there is exactly one interior
stationary point, and this stationary point is the unique interior maximizer on
the non-degenerate GE-monopoly transition branch.

In the Supplementary Appendix, we also study the alternative monopoly
formulation that we mentioned in the main text. In that formulation, the AI producer has market power in the AI
output market, but for each level of AI output that it sells, it chooses the cost-minimizing
input bundle. The distinction is that the monopoly-GE formulation that we have analyzed above includes
monopsony-like features where the monopolist understands that its use of substitutable workers affects the final good demand for AI.  Above, the firm chooses directly over feasible input pairs
\((L_a^S,L_a^C)\), internalizing the induced equilibrium wages and allocations.
The cost-minimization formulation instead restricts the firm to input bundles
that satisfy the cost-minimizing input-mix condition,
\[
\frac{MP_S^a}{MP_C^a}
=
\frac{w_S}{w_C},
\]
or equivalently,
\[
\frac{\alpha_a^S}{\alpha_a^C}
\frac{L_a^C}{L_a^S}
=
\frac{w_S}{w_C}.
\]
Thus the monopoly-GE problem has a weakly larger choice set than the
cost-minimization monopoly problem. Consequently, when both problems are solved
over the same transition domain, the monopoly-GE profit is weakly higher than
the monopoly profit under the cost-minimization formulation.

\subsubsection*{A. Pricing, unit cost, and the markup}

From the final-good sector's equilibrium reaction, any interior monopoly
allocation satisfies
\[
p_a=w_S^m,
\]
and the monopoly wages are the corresponding marginal products in the
competitive final-good sector:
\[
w_S^m
=
\alpha_f^S A_f Z^{\alpha_f^S-1}(L_f^C)^{\alpha_f^C},
\qquad
w_C^m
=
\alpha_f^C A_f Z^{\alpha_f^S}(L_f^C)^{\alpha_f^C-1},
\]
where
\[
Z=\overline{L}^S-L_a^S+Y_a,
\qquad
L_f^C=\overline{L}^C-L_a^C.
\]

The Cobb-Douglas unit cost of producing one unit of AI is
\[
c(w_S^m,w_C^m;A_a)
=
\frac{(w_S^m)^{\alpha_a^S}(w_C^m)^{\alpha_a^C}}{A_a\Phi},
\]
where
\[
\Phi
=
(\alpha_a^S)^{\alpha_a^S}(\alpha_a^C)^{\alpha_a^C}.
\]
Equivalently,
\[
c(w_S^m,w_C^m;A_a)
=
\frac{1}{A_a}
\left(\frac{w_S^m}{\alpha_a^S}\right)^{\alpha_a^S}
\left(\frac{w_C^m}{\alpha_a^C}\right)^{\alpha_a^C}.
\]

Define the AI markup by
\[
\mu
=
\frac{p_a}{c(w_S^m,w_C^m;A_a)}.
\]
Using the interior condition \(p_a=w_S^m\), we obtain
\[
\mu
=
\frac{w_S^m}{c(w_S^m,w_C^m;A_a)}
=
\frac{A_a\Phi w_S^m}
{(w_S^m)^{\alpha_a^S}(w_C^m)^{\alpha_a^C}}.
\]
Since \(\alpha_a^S+\alpha_a^C=1\), this becomes
\[
\mu
=
A_a\Phi
\left(\frac{w_S^m}{w_C^m}\right)^{\alpha_a^C}.
\]
Therefore,
\[
\left(\frac{w_C^m}{w_S^m}\right)^{\alpha_a^C}
=
\frac{A_a\Phi}{\mu},
\]
and hence, during the interior monopoly transition,
\begin{equation}
\frac{w_C^m}{w_S^m}
=
\left(\frac{A_a\Phi}{\mu}\right)^{1/\alpha_a^C},
\qquad
\mu=\frac{p_a}{c}=\frac{w_S^m}{c}\geq 1.
\label{eq:wage_ratio_markup}
\end{equation}
Here \(\mu\) is the markup of the AI price over unit cost. Under competition,
\(\mu=1\); under monopoly, the output restriction implies \(\mu>1\).

The planner has two instruments: a tax \(\tau_\pi\) on profit and a tax
\(\tau_C\) on complementary wages, financing the subsidy to substitutable
workers. The monopolist maximizes \((1-\tau_\pi)\Pi^m\). For any
\(\tau_\pi<1\), this is a positive scaling of \(\Pi^m\) and yields the untaxed
allocation, while at \(\tau_\pi=1\) it is indifferent across output levels and,
by our tie-breaking convention, chooses the competitive allocation.

\subsubsection*{B. The monopolist's allocation}

\begin{lemma}\label{lem:mon_alloc}
At any interior transition allocation the monopolist preserves the competitive input mix within the AI sector,
\[
L^S_{a}=\alpha^S_{a}Y_a,
\qquad
Y_a=\kappa\,L^C_{a},
\qquad
\kappa=\big[A_a(\alpha^S_{a})^{\alpha^S_{a}}\big]^{1/\alpha^C_{a}},
\]
and restricts scale relative to competition: $L^{C,m}_a<L^{C,comp}_a$, $L^{S,m}_a<L^{S,comp}_a$ and $Y_a^{m}<Y_a^{comp}$.
\end{lemma}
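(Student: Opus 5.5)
The plan is to reduce the monopolist's two-dimensional choice of $(L^S_a,L^C_a)$ to a one-dimensional problem along the efficient branch $MP^a_S=1$, and then compare the monopoly and competitive allocations on that branch.

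\emph{Step 1: profit depends on $L^S_a$ only through $Z$.} By \eqref{eq:wage_levels_mono}, both $w_S^m$ and $w_C^m$ depend on the monopolist's choice only through $Z=\overline L^S-L^S_a+Y_a$ and $L^C_f=\overline L^C-L^C_a$. Since $Y_a-L^S_a=Z-\overline L^S$, the reduced-form profit can be written as
\[
\Pi^m=w_S^m(Z,L^C_f)\,(Z-\overline L^S)-w_C^m(Z,L^C_f)\,L^C_a .
\]
So $L^S_a$ enters only through $Z$. A positive profit requires $Z>\overline L^S$, which holds at any interior optimum because $\Pi^m$ is zero at the endpoints.

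\emph{Step 2: fixed-$Z$ argument giving the input mix.} I will show that, for any target level of $Z$, the monopolist strictly prefers the smallest $L^C_a$ that attains it. Lowering $L^C_a$ at fixed $Z$ raises $L^C_f$. This raises $w_S^m$ (because $\alpha^C_f>0$) and lowers $w_C^m$ (because $\alpha^C_f-1<0$), and it also shrinks the multiplier $L^C_a$. With $Z-\overline L^S>0$, all three effects strictly increase $\Pi^m$.

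The minimal $L^C_a$ delivering a given $Z$ is the one that maximizes $Y_a-L^S_a$ for that $L^C_a$. That maximization over $L^S_a$ has first-order condition $\alpha^S_aY_a/L^S_a=1$, i.e.\ $L^S_a=\alpha^S_aY_a$. Substituting this into $Y_a=A_a(L^S_a)^{\alpha^S_a}(L^C_a)^{\alpha^C_a}$ and solving gives $Y_a^{\alpha^C_a}=A_a(\alpha^S_a)^{\alpha^S_a}(L^C_a)^{\alpha^C_a}$, hence $Y_a=\kappa L^C_a$. This proves the first part of Lemma~\ref{lem:mon_alloc}.

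\emph{Step 3: the competitive allocation lies on the same branch.} Under competition, cost minimization gives $w_S=p_a\alpha^S_aY_a/L^S_a$. The interior condition $p_a=w_S$ then yields the same relation $L^S_a=\alpha^S_aY_a$. Hence both regimes lie on the one-parameter branch indexed by $L=L^C_a$, with $Z=\overline L^S+\chi L$ and $\chi=\alpha^C_a\kappa$. Along this branch, $\xi=Z/L^C_f=(\overline L^S+\chi L)/(\overline L^C-L)$ is strictly increasing in $L$.

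\emph{Step 4: scale comparison.} Using the closed form $\Pi^m(\xi)\propto(\xi-\underline\xi)(\overline\xi-\xi)/[\xi^{\alpha^C_f}(\xi+\chi)]$ derived above, I will identify the zero at $\overline\xi=\alpha^S_f\chi/\alpha^C_f$ with the competitive allocation. At $\overline\xi$ the bracket $\alpha^S_f\chi L^C_f-\alpha^C_fZ$ vanishes, which is exactly the zero-profit condition $p_aY_a=w_SL^S_a+w_CL^C_a$; equivalently, $w_C/w_S=\xi\alpha^C_f/\alpha^S_f=\chi$, which should match $(A_a\Phi)^{1/\alpha^C_a}$.

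The uniqueness argument above (via Descartes' rule of signs) shows that the monopoly maximizer lies strictly inside $(\underline\xi,\overline\xi)$. Since $\xi$ is increasing in $L$, this gives $\xi^m<\overline\xi$ and therefore $L^{C,m}_a<L^{C,comp}_a$. Then $Y_a^m=\kappa L^{C,m}_a<Y_a^{comp}$ and $L^{S,m}_a=\alpha^S_a\kappa L^{C,m}_a<L^{S,comp}_a$.

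\emph{Main obstacles.} The delicate steps are (i) making the fixed-$Z$ comparison in Step 2 rigorous, so that it excludes optima with $MP^a_S\neq1$ even where $\partial\Pi^m/\partial Z=0$, and (ii) verifying the algebraic identification $\chi=(A_a\Phi)^{1/\alpha^C_a}$, i.e.\ that $\overline\xi$ coincides with the competitive allocation. I would check (ii) first by direct substitution of $\kappa$ and $\Phi$.
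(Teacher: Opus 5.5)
Your argument is correct. It differs from the paper's mainly in how you get the input mix.

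\textbf{Input mix.} The paper differentiates in $L^S_a$ at fixed $L^C_a$, obtaining $\partial\Pi/\partial L^S_a=(MP^a_S-1)(w_S\overline L^S+\alpha^S_f\Pi)/Z$. It then signs the second factor using $\Pi\ge 0$ at the optimum. You hold $Z$ fixed instead and use that profit is then strictly decreasing in $L^C_a$ (given $Z>\overline L^S$). So the monopolist must deliver its net AI supply $Z-\overline L^S$ with the least complementary labor, which is exactly $MP^a_S=1$.

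This avoids computing and signing $\Pi_Z$, and it disposes of your obstacle (i) on its own. If $MP^a_S\neq1$ at an interior point, then $\partial Z/\partial L^S_a=MP^a_S-1\neq0$. By the implicit function theorem you can lower $L^C_a$ slightly while adjusting $L^S_a$ to keep $Z$ fixed, and this strictly raises profit whatever $\Pi_Z$ is (the paper's formula shows $\Pi_Z>0$ anyway). State it in this local form rather than via the global ``minimal $L^C_a$'' claim. The global version would also need the replacement $L^S_a$ to stay below $\overline L^S$.

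\textbf{Scale.} This half is essentially the paper's argument made explicit. The paper writes $\Pi=L^C_a\,T(L^C_a)$ with $T=\alpha^C_a\kappa w_S-w_C$ strictly decreasing and zero at the competitive scale. Your factor $\overline\xi-\xi$ carries exactly the sign of $T$. Using the closed form after Step~2 is not circular, since it is just $\Pi$ evaluated on the ray. Obstacle (ii) checks out: $(\alpha^C_a\kappa)^{\alpha^C_a}=(\alpha^C_a)^{\alpha^C_a}A_a(\alpha^S_a)^{\alpha^S_a}=A_a\Phi$, so $\chi=(A_a\Phi)^{1/\alpha^C_a}$ and $\overline\xi$ is indeed the competitive allocation.

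\textbf{Two small cleanups.} First, Descartes' rule is not needed. The sign pattern (positive on $(\underline\xi,\overline\xi)$, negative beyond $\overline\xi$) already places any maximizer strictly below $\overline\xi$. Second, in Step~1 the reason $Z>\overline L^S$ at the optimum is that positive profit is attainable, not that profit vanishes at endpoints. Ray points with $\xi\in(\underline\xi,\overline\xi)$ give positive profit, and this interval is nonempty exactly when $A_a>A^*$.
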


\begin{proof}
\emph{Input mix.} A competitive AI firm takes $w_S,w_C$ as given, so $\partial\Pi/\partial L^S_{a}=w_S(\partial Y_a/\partial L^S_{a}-1)=0$ gives $\partial Y_a/\partial L^S_{a}=\alpha^S_{a}Y_a/L^S_{a}=1$, i.e.\ $L^S_{a}=\alpha^S_{a}Y_a$. The monopolist instead internalizes that a change in $L^S_{a}$ moves both wages through $Z$, which shifts by $\partial Y_a/\partial L^S_{a}-1$:
\[
\frac{\partial w_S}{\partial L^S_{a}} = w_S(\alpha^S_{f}-1)\frac{\partial Y_a/\partial L^S_{a}-1}{Z},
\qquad
\frac{\partial w_C}{\partial L^S_{a}} = w_C\,\alpha^S_{f}\frac{\partial Y_a/\partial L^S_{a}-1}{Z}.
\]
Substituting into $\partial\Pi/\partial L^S_{a}=w_S(\partial Y_a/\partial L^S_{a}-1)+\tfrac{\partial w_S}{\partial L^S_{a}}(Y_a-L^S_{a})-\tfrac{\partial w_C}{\partial L^S_{a}}L^C_{a}$, factoring out $(\partial Y_a/\partial L^S_{a}-1)$, and using $Z=\overline{L}^S+(Y_a-L^S_{a})$ and \eqref{eq:profit_mono},
\[
\frac{\partial\Pi}{\partial L^S_{a}}=\Big(\frac{\partial Y_a}{\partial L^S_{a}}-1\Big)\cdot \frac{w_S\overline{L}^S+\alpha^S_{f}\,\Pi}{Z}.
\]

At the monopolist's optimum profit is nonnegative, $\Pi\geq 0$, since not operating always guarantees $\Pi=0$; the numerator $w_S\overline{L}^S+\alpha^S_{f}\Pi$ is then strictly positive (as $w_S\overline{L}^S>0$), so the second factor is strictly positive. The first-order
condition with respect to \(L_a^S\) therefore implies
\[
\frac{\partial Y_a}{\partial L_a^S}=1.
\]
Using the Cobb-Douglas AI production function,
\[
Y_a=A_a(L_a^S)^{\alpha_a^S}(L_a^C)^{\alpha_a^C},
\]
we have
\[
\frac{\partial Y_a}{\partial L_a^S}
=
\alpha_a^S\frac{Y_a}{L_a^S}.
\]
Thus
\[
\alpha_a^S\frac{Y_a}{L_a^S}=1,
\]
or equivalently
\[
L_a^S=\alpha_a^S Y_a.
\]
This is the same input-mix condition that arises under competition: although
the monopolist restricts the scale of AI production, it uses the same relative
mix of substitutable and complementary labor along the interior input-mix ray.

Substituting \(L_a^S=\alpha_a^S Y_a\) into the AI production function gives
\[
Y_a
=
A_a(\alpha_a^S Y_a)^{\alpha_a^S}(L_a^C)^{\alpha_a^C}.
\]
Since \(\alpha_a^S+\alpha_a^C=1\), this can be rearranged as
\[
Y_a^{\alpha_a^C}
=
A_a(\alpha_a^S)^{\alpha_a^S}(L_a^C)^{\alpha_a^C}.
\]
Taking both sides to the power \(1/\alpha_a^C\) yields
\[
Y_a=\kappa L_a^C,
\qquad
\kappa
=
\left[A_a(\alpha_a^S)^{\alpha_a^S}\right]^{1/\alpha_a^C}.
\]
Thus \(\kappa\) measures how AI output increases when the monopolist expands
production while maintaining the optimal input mix \(L_a^S=\alpha_a^S Y_a\).
It is therefore the marginal product of complementary labor along this
input-mix ray, not the partial marginal product of \(L_a^C\) holding \(L_a^S\)
fixed. The latter is
\[
\frac{\partial Y_a}{\partial L_a^C}
=
\alpha_a^C\frac{Y_a}{L_a^C}
=
\alpha_a^C\kappa.
\]

\emph{Scale.} We now characterize the scale of AI production after imposing the
input-mix condition derived above. Since
\[
L_a^S=\alpha_a^S Y_a
\qquad\text{and}\qquad
Y_a=\kappa L_a^C,
\]
we have
\[
L_a^S=\alpha_a^S\kappa L_a^C.
\]
Thus the effective substitutable input used in final-good production is
\[
Z
=
\overline L^S-L_a^S+Y_a
=
\overline L^S-\alpha_a^S\kappa L_a^C+\kappa L_a^C
=
\overline L^S+\alpha_a^C\kappa L_a^C,
\]
where we use \(\alpha_a^S+\alpha_a^C=1\). Similarly,
\[
L_f^C=\overline L^C-L_a^C.
\]
The net amount of AI output that exceeds the substitutable labor used in AI
production is
\[
Y_a-L_a^S
=
Y_a-\alpha_a^S Y_a
=
\alpha_a^C Y_a
=
\alpha_a^C\kappa L_a^C.
\]
Therefore, after imposing the optimal input mix, the monopolist's profit depends
only on the scale variable \(L_a^C\):
\[
\Pi(L_a^C)
=
w_S(L_a^C)(Y_a-L_a^S)-w_C(L_a^C)L_a^C.
\]
Using \(Y_a-L_a^S=\alpha_a^C\kappa L_a^C\), this becomes
\[
\Pi(L_a^C)
=
w_S(L_a^C)\alpha_a^C\kappa L_a^C
-
w_C(L_a^C)L_a^C,
\]
or equivalently
\[
\Pi(L_a^C)
=
L_a^C
\left[
\alpha_a^C\kappa w_S(L_a^C)-w_C(L_a^C)
\right].
\]
Define
\[
T(L_a^C)
\equiv
\alpha_a^C\kappa w_S(L_a^C)-w_C(L_a^C).
\]
Then
\[
\Pi(L_a^C)=L_a^C T(L_a^C).
\]

Differentiating \(\Pi(L_a^C)=L_a^C T(L_a^C)\) gives
\[
\Pi'(L_a^C)
=
T(L_a^C)+L_a^C T'(L_a^C).
\]
Equivalently, writing this out in terms of wages,
\[
\frac{\partial \Pi}{\partial L_a^C}
=
\underbrace{\left(\alpha_a^C\kappa w_S-w_C\right)}_{\text{competitive margin}}
+
\underbrace{
\alpha_a^C\kappa L_a^C\frac{\partial w_S}{\partial L_a^C}
-
L_a^C\frac{\partial w_C}{\partial L_a^C}
}_{\text{monopoly terms}}.
\]
The first term is exactly the competitive AI firm's scale margin. The second
term appears only because the monopolist internalizes the effect of changing
AI output on equilibrium wages.

The competitive scale \(L_a^{C,comp}\) satisfies the competitive AI-sector
first-order condition
\[
p_a\frac{\partial Y_a}{\partial L_a^C}=w_C.
\]
At the input-mix ray,
\[
\frac{\partial Y_a}{\partial L_a^C}
=
\alpha_a^C\frac{Y_a}{L_a^C}
=
\alpha_a^C\kappa.
\]
Using \(p_a=w_S\), the competitive condition becomes
\[
\alpha_a^C\kappa w_S=w_C.
\]
Therefore
\[
T(L_a^{C,comp})=0,
\]
and hence
\[
\Pi(L_a^{C,comp})
=
L_a^{C,comp}T(L_a^{C,comp})
=
0.
\]

We next compare the monopoly and competitive scales. Along the input-mix ray,
raising \(L_a^C\) increases the effective substitutable bundle,
\[
Z=\overline L^S+\alpha_a^C\kappa L_a^C,
\]
and lowers complementary labor remaining in final-good production,
\[
L_f^C=\overline L^C-L_a^C.
\]
Using the final-sector wage equations,
\[
w_S
=
\alpha_f^S A_f Z^{\alpha_f^S-1}(L_f^C)^{\alpha_f^C},
\]
and
\[
w_C
=
\alpha_f^C A_f Z^{\alpha_f^S}(L_f^C)^{\alpha_f^C-1},
\]
we obtain
\[
\frac{\partial w_S}{\partial L_a^C}<0
\qquad\text{and}\qquad
\frac{\partial w_C}{\partial L_a^C}>0.
\]
The first inequality holds because \(\alpha_f^S-1<0\) and \(L_f^C\) falls as
\(L_a^C\) rises. The second holds because \(Z\) rises and \(L_f^C\) falls, both
of which raise the marginal product of complementary labor in the final-good
sector.

Therefore
\[
T'(L_a^C)
=
\alpha_a^C\kappa\frac{\partial w_S}{\partial L_a^C}
-
\frac{\partial w_C}{\partial L_a^C}
<0.
\]
Since \(T(L_a^{C,comp})=0\) and \(T\) is strictly decreasing, we have
\[
T(L_a^C)>0
\quad\text{for}\quad
L_a^C<L_a^{C,comp},
\]
and
\[
T(L_a^C)<0
\quad\text{for}\quad
L_a^C>L_a^{C,comp}.
\]
Hence a small reduction below the competitive scale generates strictly positive
profit:
\[
\Pi(L_a^C)=L_a^C T(L_a^C)>0
\qquad
\text{for } L_a^C<L_a^{C,comp}\text{ sufficiently close to }L_a^{C,comp}.
\]

Moreover, for any \(L_a^C\geq L_a^{C,comp}\), we have \(T(L_a^C)\leq0\) and
\(T'(L_a^C)<0\). Therefore
\[
\Pi'(L_a^C)
=
T(L_a^C)+L_a^C T'(L_a^C)
<0
\qquad
\text{for all }L_a^C\geq L_a^{C,comp}.
\]
Thus profit is strictly decreasing at and above the competitive scale, so the
monopolist never chooses a scale weakly above the competitive allocation.

It remains to establish existence and interiority. On the half-open interval
\([0,\overline L^C)\), the profit function is continuous and satisfies
\[
\Pi(0)=0.
\]
As shown above, profit is strictly positive at some scale slightly below
\(L_a^{C,comp}\). In addition,
\[
\Pi(L_a^C)\to -\infty
\qquad\text{as}\qquad
L_a^C\uparrow \overline L^C,
\]
because
\[
L_f^C=\overline L^C-L_a^C\to0,
\]
which implies
\[
w_C
=
\alpha_f^C A_f Z^{\alpha_f^S}(L_f^C)^{\alpha_f^C-1}
\to\infty.
\]
The divergence of \(w_C\) drives the cost term \(-w_C L_a^C\) to \(-\infty\).
Thus the maximizer is confined to some compact subinterval
\[
[0,b]\subset [0,\overline L^C),
\qquad b<\overline L^C,
\]
on which a maximum is attained. Since the maximum value is strictly positive
while \(\Pi(0)=0\), the maximizer cannot be zero. Since profit is strictly
decreasing for all \(L_a^C\geq L_a^{C,comp}\), the maximizer cannot lie weakly
above the competitive scale. Therefore
\[
0<L_a^{C,m}<L_a^{C,comp}.
\]

Finally, along the input-mix ray,
\[
Y_a=\kappa L_a^C
\qquad\text{and}\qquad
L_a^S=\alpha_a^S Y_a.
\]
Hence the lower monopoly scale immediately implies
\[
Y_a^m<Y_a^{comp}
\qquad\text{and}\qquad
L_a^{S,m}<L_a^{S,comp}.
\]
\end{proof}

\subsubsection*{C. Output, wages, and adoption thresholds}

\begin{proposition}
\label{prop:mon_comp_no_tax}
For any AI productivity $A^{*}<A_a\leq A^{**}_{comp}$, the untaxed monopoly equilibrium, relative to the competitive benchmark, satisfies
\[
\textup{(i)}\ \ Y_f^{m} < Y_f^{comp},
\qquad
\textup{(ii)}\ \ w_C^{m} < w_C^{comp},
\qquad
\textup{(iii)}\ \ w_S^{m} > w_S^{comp}.
\]

Equivalently, the markup compresses the wage ratio: $w_C^{m}/w_S^{m}=(A_a\Phi/\mu)^{1/\alpha^C_{a}}<(A_a\Phi)^{1/\alpha^C_{a}}=w_C^{comp}/w_S^{comp}$.
\end{proposition}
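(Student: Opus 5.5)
We reduce everything to the one-dimensional scale variable $L=L^C_a$ along the input-mix ray of Lemma~\ref{lem:mon_alloc}. By that lemma, any interior allocation, monopoly or competitive, satisfies $L^S_a=\alpha^S_a Y_a$ and $Y_a=\kappa L$. Hence
\[
Z(L)=\overline L^S+\chi L,\qquad L^C_f(L)=\overline L^C-L,\qquad \chi=\alpha^C_a\kappa,
\]
and both regimes at a common $A_a$ lie on the same curve $L\mapsto\big(Y_f(L),w_S(L),w_C(L)\big)$, with
\[
Y_f(L)=A_f Z^{\alpha^S_f}(\overline L^C-L)^{\alpha^C_f}.
\]
The lemma gives $0<L^{C,m}_a<L^{C,comp}_a$. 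The comparison therefore follows once we show that each of $Y_f$, $w_C$ and $-w_S$ is strictly increasing in $L$ on $[0,L^{C,comp}_a]$. This interval lies inside the transition branch because $A^*<A_a\le A^{**}_{comp}$ (and $L^S_f\ge 0$ at the competitive point).

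For (ii) and (iii), we use the final-good marginal products directly:
\[
w_S=\alpha^S_f A_f Z^{\alpha^S_f-1}(L^C_f)^{\alpha^C_f},\qquad w_C=\alpha^C_f A_f Z^{\alpha^S_f}(L^C_f)^{\alpha^C_f-1}.
\]
As $L$ rises, $Z$ rises and $L^C_f$ falls. This makes $w_S$ strictly decreasing and $w_C$ strictly increasing; these are exactly the signs already recorded in the proof of Lemma~\ref{lem:mon_alloc}. Evaluating at $L^{C,m}_a<L^{C,comp}_a$ gives $w_C^m<w_C^{comp}$ and $w_S^m>w_S^{comp}$.

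The wage-ratio statement then follows in two ways. First, it follows directly from \eqref{eq:wage_ratio_markup}, which gives $w_C^m/w_S^m=(A_a\Phi/\mu)^{1/\alpha^C_a}$. Second, we need $\mu>1$. This holds because $\mu=1$ is equivalent to $\alpha^C_a\kappa w_S=w_C$, i.e. $T(L)=0$, which by the strict monotonicity of $T$ in the lemma happens only at $L^{C,comp}_a$. Since $T(L^{C,m}_a)>0$, we get $\mu>1$. Equivalently, one can simply observe that $w_C/w_S$ is increasing in $L$ by (ii) and (iii).

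Part (i) is the main obstacle, because $Y_f$ is not obviously monotone: raising $L$ adds effective substitutable input but withdraws complementary labor from final production. We differentiate logs:
\[
\frac{d\ln Y_f}{dL}=\frac{\alpha^S_f\chi}{Z}-\frac{\alpha^C_f}{\overline L^C-L}=\frac{\chi w_S-w_C}{Y_f},
\]
using $\alpha^S_f Y_f/Z=w_S$ and $\alpha^C_f Y_f/L^C_f=w_C$. This numerator is exactly $T(L)$. By the lemma, $T$ is strictly decreasing with $T(L^{C,comp}_a)=0$, so $T>0$ on $[0,L^{C,comp}_a)$. Thus $Y_f$ is strictly increasing there, and $Y_f^m<Y_f^{comp}$. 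This is the planner's envelope logic: competition is the output-maximizing scale along the ray.

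Two points need care in the write-up. One is the identification $\chi w_S-w_C=T$ via $\alpha^C_a\kappa=\chi$. The other is confirming that at $A_a=A^{**}_{comp}$ the competitive point is the boundary $L^S_f=0$, while the formulas above remain valid there by continuity.
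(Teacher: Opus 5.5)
Your proof is correct. For (ii), (iii) and the wage-ratio claim it matches the paper's argument: both regimes lie on the common input-mix ray of Lemma~\ref{lem:mon_alloc}, the monopoly scale is smaller, and the final-good marginal products are monotone in $Z$ and $L^C_f$.

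The genuine difference is in (i). The paper argues by efficiency: the competitive allocation maximizes final output over all feasible allocations. It then invokes ``strict concavity along the relevant interior allocation set'' to make the inequality strict. You instead differentiate output along the ray, obtaining $dY_f/dL=\alpha^C_a\kappa w_S-w_C=T(L)$, and reuse the lemma's facts that $T$ is strictly decreasing with $T(L^{C,comp}_a)=0$.

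Your route is more elementary and self-contained:
\begin{itemize}
\item It never needs a uniqueness or strict-concavity claim. That claim requires an argument, since both constant-returns Cobb--Douglas technologies are concave but not strictly concave.
\item It shows $Y_f$ is strictly increasing in scale all the way up to the competitive level.
\item It makes explicit that the lemma's ``competitive margin'' $T$ is exactly the planner's marginal product of scale.
\end{itemize}
The paper's route buys generality, since it compares competition with any feasible allocation, on or off the ray. The price is relying on an efficiency theorem plus an unverified strictness step.

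Similarly, the paper only asserts $\mu>1$ from output restriction. Your observation supplies the one-line justification: $(A_a\Phi)^{1/\alpha^C_a}=\alpha^C_a\kappa$, so $\mu=(\alpha^C_a\kappa w_S/w_C)^{\alpha^C_a}$ and hence $\mu>1\iff T>0$. State that directional equivalence explicitly rather than only ``$\mu=1\iff T=0$''.
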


\begin{proof}
By Lemma~\ref{lem:mon_alloc}, the monopolist preserves the same AI-sector
input mix as in competition but chooses a strictly smaller scale:
\[
L_a^{C,m}<L_a^{C,comp},
\qquad
Y_a^m<Y_a^{comp},
\qquad
L_a^{S,m}<L_a^{S,comp}.
\]
Along this common input-mix ray,
\[
Z=\overline L^S-L_a^S+Y_a
=
\overline L^S+\alpha_a^C\kappa L_a^C,
\qquad
L_f^C=\overline L^C-L_a^C.
\]
Hence the lower monopoly scale implies
\[
Z^m<Z^{comp}
\qquad\text{and}\qquad
L_f^{C,m}>L_f^{C,comp}.
\]

The output comparison follows directly from the fact that the competitive
allocation maximizes final output subject to the technologies and resource
constraints. Since the monopoly allocation is feasible but differs from the
competitive allocation, strict concavity of the final-good production function
along the relevant interior allocation set implies
\[
Y_f^m<Y_f^{comp}.
\]
This proves (i).

We next compare wages. In the final-good sector,
\[
w_S
=
\alpha_f^S A_f Z^{\alpha_f^S-1}(L_f^C)^{\alpha_f^C}.
\]
Because \(\alpha_f^S+\alpha_f^C=1\), we have
\(\alpha_f^S-1=-\alpha_f^C<0\). Thus \(w_S\) is decreasing in \(Z\) and
increasing in \(L_f^C\). Since monopoly has lower \(Z\) and higher \(L_f^C\),
both effects raise the substitutable wage relative to competition:
\[
w_S^m>w_S^{comp}.
\]
This proves (iii).

Similarly,
\[
w_C
=
\alpha_f^C A_f Z^{\alpha_f^S}(L_f^C)^{\alpha_f^C-1}.
\]
Here \(w_C\) is increasing in \(Z\) and decreasing in \(L_f^C\), since
\(\alpha_f^C-1=-\alpha_f^S<0\). The monopoly allocation has lower \(Z\) and
higher \(L_f^C\), so both effects lower the complementary wage:
\[
w_C^m<w_C^{comp}.
\]
This proves (ii).

Finally, the wage-ratio comparison follows from the markup formula
\eqref{eq:wage_ratio_markup}. Under monopoly,
\[
\frac{w_C^m}{w_S^m}
=
\left(\frac{A_a\Phi}{\mu}\right)^{1/\alpha_a^C},
\]
whereas under competition price equals unit cost, so \(\mu=1\) and
\[
\frac{w_C^{comp}}{w_S^{comp}}
=
(A_a\Phi)^{1/\alpha_a^C}.
\]
Since the monopolist restricts output relative to competition, it charges a
strict markup over unit cost, \(\mu>1\). Therefore,
\[
\frac{w_C^m}{w_S^m}
=
\left(\frac{A_a\Phi}{\mu}\right)^{1/\alpha_a^C}
<
(A_a\Phi)^{1/\alpha_a^C}
=
\frac{w_C^{comp}}{w_S^{comp}}.
\]
Thus monopoly compresses the complementary-to-substitutable wage ratio relative
to competition.
\end{proof}

\noindent\textbf{Thresholds.} The onset threshold is unchanged, $A^{*}_{m}=A^{*}_{comp}=A^{*}$: a monopolist with zero output has $Y_a-L^S_{a}=0$ and $L^C_{a}=0$, so the monopoly terms in the first-order condition vanish and it faces the same entry margin as a competitive firm. The completion threshold is strictly higher, $A^{**}_{m}>A^{**}_{comp}$: since $L^S_{a}=\alpha^S_{a}\kappa L^C_{a}$ and $L^{C,m}_a<L^{C,comp}_a$ throughout the competitive transition, at $A_a=A^{**}_{comp}$ the competitive economy reaches $L^{S,comp}_a=\overline{L}^S$ while the monopoly economy still has $L^{S,m}_a<\overline{L}^S$, so full displacement requires strictly higher productivity.

\subsubsection*{D. The allocation ratio and the wage bill}

\begin{lemma}\label{lem:wagebill}
Define the allocation ratio $R = (Z/\overline{L}^S)/(L^C_{f}/\overline{L}^C)$. Then $R=1$ in the pre-AI economy and $R>1$ for every $A_a>A^{*}$ at which AI is active. Wages and the total wage bill satisfy
\begin{equation}
\frac{w_S}{w_S^{pre}}=R^{\alpha^S_{f}-1},
\qquad
\frac{w_C}{w_C^{pre}}=R^{\alpha^S_{f}},
\qquad
W\equiv w_S\overline{L}^S+w_C\overline{L}^C=Y^{pre}f(R),
\label{eq:wages_in_R}
\end{equation}
with $f(R)=\alpha^S_{f}R^{\alpha^S_{f}-1}+\alpha^C_{f}R^{\alpha^S_{f}}$ strictly increasing for $R>1$. Under monopoly,
\begin{equation}
R^{m}=\frac{\alpha^S_{f}}{\alpha^C_{f}}\,\frac{\overline{L}^C}{\overline{L}^S}\,\Big(\frac{A_a\Phi}{\mu}\Big)^{1/\alpha^C_{a}},
\label{eq:R_m}
\end{equation}
so $R^{m}$ is strictly increasing in $A_a$ if and only if $A_a/\mu$ is increasing.
\end{lemma}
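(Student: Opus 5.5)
The plan is to express everything through the final-good marginal products, normalised by their pre-AI values, and then plug in the markup formula \eqref{eq:wage_ratio_markup}. Throughout we work in the two-type economy ($\alpha^F_{f}=0$), where $\alpha^S_{f}+\alpha^C_{f}=1$.

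\textbf{Wages in terms of $R$.} In any interior allocation, competitive or monopoly, the final-good wages are
\[
w_S=\alpha^S_{f}A_f Z^{\alpha^S_{f}-1}(L^C_{f})^{\alpha^C_{f}},
\qquad
w_C=\alpha^C_{f}A_f Z^{\alpha^S_{f}}(L^C_{f})^{\alpha^C_{f}-1}.
\]
In the pre-AI economy $Z=\overline{L}^S$ and $L^C_{f}=\overline{L}^C$, so $R=1$. Dividing each wage by its pre-AI value gives
\[
\frac{w_S}{w_S^{pre}}=(Z/\overline{L}^S)^{\alpha^S_{f}-1}(L^C_{f}/\overline{L}^C)^{1-\alpha^S_{f}}=R^{\alpha^S_{f}-1},
\]
using $\alpha^C_{f}=1-\alpha^S_{f}$. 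The same manipulation gives $w_C/w_C^{pre}=R^{\alpha^S_{f}}$.

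\textbf{Wage bill.} The pre-AI wage bills are $w_S^{pre}\overline{L}^S=\alpha^S_{f}Y^{pre}$ and $w_C^{pre}\overline{L}^C=\alpha^C_{f}Y^{pre}$. Substituting the expressions above yields $W=Y^{pre}f(R)$. Differentiating,
\[
f'(R)=\alpha^S_{f}\alpha^C_{f}R^{\alpha^S_{f}-2}(R-1),
\]
which is strictly positive for $R>1$.

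\textbf{$R>1$ when AI is active.} Taking the ratio of the two wage equations gives
\[
\frac{w_C}{w_S}=\frac{\alpha^C_{f}}{\alpha^S_{f}}\,\frac{Z}{L^C_{f}},
\qquad\text{so}\qquad
R=\frac{\alpha^S_{f}}{\alpha^C_{f}}\,\frac{\overline{L}^C}{\overline{L}^S}\,\frac{w_C}{w_S}.
\]
Substituting the monopoly wage ratio from \eqref{eq:wage_ratio_markup} gives \eqref{eq:R_m} directly. The same formula with $\mu=1$ covers competition. Since $R$ is a positive constant times $(A_a/\mu)^{1/\alpha^C_{a}}$, it is strictly increasing in $A_a$ exactly when $A_a/\mu$ is increasing.

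For $R>1$ a more direct route is available. When AI is active, $L^C_{a}>0$ and, along the input-mix ray of Lemma~\ref{lem:mon_alloc}, $Z=\overline{L}^S+\alpha^C_{a}\kappa L^C_{a}>\overline{L}^S$. At the same time $L^C_{f}=\overline{L}^C-L^C_{a}<\overline{L}^C$, so the numerator of $R$ exceeds one and the denominator is below one, giving $R>1$ for both market structures. One could instead compare with the onset value $R=1$ at $A^*$, where $A^*\Phi=(\overline{L}^S\alpha^C_{f}/(\overline{L}^C\alpha^S_{f}))^{\alpha^C_{a}}$, but that route would need $A_a/\mu>A^*$. The direct argument avoids this.

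\textbf{Main obstacle.} The step that needs care is the claim $R>1$ under monopoly, because it requires AI to be genuinely active ($L^C_{a}>0$). This should be cited from Lemma~\ref{lem:mon_alloc}, which gives $L^{C,m}_a>0$ in the interior transition. The remaining steps are routine algebra.
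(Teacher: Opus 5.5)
Your proposal is correct and follows essentially the same route as the paper. Like the paper, you normalize the final-good marginal products by their pre-AI values to get $R^{\alpha^S_{f}-1}$ and $R^{\alpha^S_{f}}$, and you compute $f'(R)=\alpha^S_{f}\alpha^C_{f}R^{\alpha^S_{f}-2}(R-1)$. You also obtain $R>1$ directly from $Z=\overline{L}^S+\alpha^C_{a}Y_a>\overline{L}^S$ together with $L^C_{f}<\overline{L}^C$ when $L^C_{a}>0$, and you read off $R^m$ by combining the final-good wage ratio with the markup formula \eqref{eq:wage_ratio_markup}.
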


\begin{proof}
Before AI adoption, \(Y_a=0\), \(L_a^S=L_a^C=0\), and hence
\[
Z=\overline L^S,
\qquad
L_f^C=\overline L^C.
\]
Therefore \(R=1\). Once AI is active, the input-mix result implies
\[
Y_a-L_a^S=\alpha_a^C Y_a>0,
\]
so
\[
Z=\overline L^S-L_a^S+Y_a
=
\overline L^S+\alpha_a^C Y_a
>
\overline L^S.
\]
Moreover, \(L_a^C>0\), so
\[
L_f^C=\overline L^C-L_a^C<\overline L^C.
\]
Thus
\[
R
=
\frac{Z/\overline L^S}{L_f^C/\overline L^C}
>
1.
\]

We next express wages in terms of \(R\). By definition,
\[
\frac{Z}{\overline L^S}
=
R\frac{L_f^C}{\overline L^C}.
\]
Using the final-good-sector wage equations and dividing by the pre-AI wages,
\[
w_S^{pre}
=
\alpha_f^S A_f(\overline L^S)^{\alpha_f^S-1}
(\overline L^C)^{\alpha_f^C},
\qquad
w_C^{pre}
=
\alpha_f^C A_f(\overline L^S)^{\alpha_f^S}
(\overline L^C)^{\alpha_f^C-1},
\]
we obtain
\[
\frac{w_S}{w_S^{pre}}
=
\left(\frac{Z}{\overline L^S}\right)^{\alpha_f^S-1}
\left(\frac{L_f^C}{\overline L^C}\right)^{\alpha_f^C}
=
R^{\alpha_f^S-1},
\]
where the powers of \(L_f^C/\overline L^C\) cancel because
\(\alpha_f^S+\alpha_f^C=1\). Similarly,
\[
\frac{w_C}{w_C^{pre}}
=
\left(\frac{Z}{\overline L^S}\right)^{\alpha_f^S}
\left(\frac{L_f^C}{\overline L^C}\right)^{\alpha_f^C-1}
=
R^{\alpha_f^S}.
\]
This proves the wage expressions in \eqref{eq:wages_in_R}.

The total wage bill is therefore
\[
W
=
w_S^{pre}\overline L^S R^{\alpha_f^S-1}
+
w_C^{pre}\overline L^C R^{\alpha_f^S}.
\]
Since, in the pre-AI economy,
\[
w_S^{pre}\overline L^S=\alpha_f^S Y^{pre},
\qquad
w_C^{pre}\overline L^C=\alpha_f^C Y^{pre},
\]
we get
\[
W
=
Y^{pre}
\left[
\alpha_f^S R^{\alpha_f^S-1}
+
\alpha_f^C R^{\alpha_f^S}
\right]
=
Y^{pre}f(R),
\]
where
\[
f(R)
=
\alpha_f^S R^{\alpha_f^S-1}
+
\alpha_f^C R^{\alpha_f^S}.
\]
Differentiating,
\[
f'(R)
=
\alpha_f^S\alpha_f^C R^{\alpha_f^S-2}(R-1),
\]
so
\[
f'(R)>0
\qquad
\text{for all }R>1.
\]

Finally, we derive the monopoly expression for \(R^m\). From the final-good-sector
wage equations,
\[
\frac{w_C^m}{w_S^m}
=
\frac{\alpha_f^C}{\alpha_f^S}
\frac{Z^m}{L_f^{C,m}}.
\]
Using
\[
R^m
=
\frac{Z^m/\overline L^S}{L_f^{C,m}/\overline L^C},
\]
we have
\[
\frac{Z^m}{L_f^{C,m}}
=
R^m\frac{\overline L^S}{\overline L^C}.
\]
Hence
\[
\frac{w_C^m}{w_S^m}
=
\frac{\alpha_f^C}{\alpha_f^S}
R^m
\frac{\overline L^S}{\overline L^C}.
\]
Solving for \(R^m\),
\[
R^m
=
\frac{\alpha_f^S}{\alpha_f^C}
\frac{\overline L^C}{\overline L^S}
\frac{w_C^m}{w_S^m}.
\]
Using \eqref{eq:wage_ratio_markup},
\[
\frac{w_C^m}{w_S^m}
=
\left(\frac{A_a\Phi}{\mu}\right)^{1/\alpha_a^C},
\]
we obtain
\[
R^m
=
\frac{\alpha_f^S}{\alpha_f^C}
\frac{\overline L^C}{\overline L^S}
\left(\frac{A_a\Phi}{\mu}\right)^{1/\alpha_a^C}.
\]
This is \eqref{eq:R_m}. Thus monopoly affects the allocation ratio through the
same productivity term as competition, but discounted by the markup \(\mu\).
\end{proof}

\subsubsection*{E. Proof of Proposition~\ref{prop:ctax_mono}}

\begin{proof}
With \(\tau_\pi=0\), after-tax monopoly profit is simply
\(\widetilde{\Pi}^m=\Pi^m\geq 0\), since the monopolist can always choose not to
produce and earn zero profit. Thus the profit constraint is automatically
satisfied.

Now consider the policy that protects substitutable workers exactly:
\[
\widetilde w_S^1=w_S^{m,0}.
\]
Relative to the monopoly wage at \(A_a^1\), this requires a subsidy payment
\[
(w_S^{m,0}-w_S^{m,1})\overline L^S
\]
to substitutable workers. With \(\tau_\pi=0\), this transfer is financed entirely
by complementary workers. Hence the after-tax income of complementary workers is
given by the balanced-budget condition
\[
\widetilde w_C^1\overline L^C
=
w_C^{m,1}\overline L^C
-
(w_S^{m,0}-w_S^{m,1})\overline L^S.
\]

Since substitutable workers are held exactly at their benchmark wage, the policy
is Pareto-improving for workers if and only if complementary workers are strictly
better off:
\[
\widetilde w_C^1>w_C^{m,0}.
\]
Using the previous display, this condition is equivalent to
\[
w_C^{m,1}\overline L^C
-
(w_S^{m,0}-w_S^{m,1})\overline L^S
>
w_C^{m,0}\overline L^C,
\]
or
\[
(w_C^{m,1}-w_C^{m,0})\overline L^C
>
(w_S^{m,0}-w_S^{m,1})\overline L^S .
\]
This is condition~\eqref{wagetaxonly}.

Rearranging the same inequality gives
\[
w_C^{m,1}\overline L^C+w_S^{m,1}\overline L^S
>
w_C^{m,0}\overline L^C+w_S^{m,0}\overline L^S.
\]
Thus the wage-tax-only policy is Pareto-improving if and only if the total wage
bill rises:
\[
W^1>W^0.
\]

By Lemma~\ref{lem:wagebill},
\[
W=Y^{pre}f(R),
\]
where \(Y^{pre}\) is fixed across the comparison and \(f\) is strictly increasing
for \(R>1\). Since both allocations are in the interior transition, 
\(R^{m,0},R^{m,1}>1\). Therefore
\[
W^1>W^0
\quad\Longleftrightarrow\quad
R^{m,1}>R^{m,0}.
\]

Finally, from \eqref{eq:R_m},
\[
R^m
=
\frac{\alpha_f^S}{\alpha_f^C}
\frac{\overline L^C}{\overline L^S}
\left(\frac{A_a\Phi}{\mu}\right)^{1/\alpha_a^C}.
\]
The terms
\[
\frac{\alpha_f^S}{\alpha_f^C},
\qquad
\frac{\overline L^C}{\overline L^S},
\qquad
\Phi
\]
are fixed across the comparison. Since \(1/\alpha_a^C>0\), we have
\[
R^{m,1}>R^{m,0}
\quad\Longleftrightarrow\quad
\frac{A_a^1}{\mu^1}>\frac{A_a^0}{\mu^0}.
\]
Equivalently,
\[
\frac{A_a^1}{A_a^0}>\frac{\mu^1}{\mu^0}.
\]
Thus the wage-tax-only policy is Pareto-improving if and only if AI productivity
grows faster than the monopoly markup. This gives the stated necessary and
sufficient condition.
\end{proof}

\begin{lemma}[Markup growth is slower than productivity growth]
\label{lem:markup_slower_than_productivity}
Along the interior monopoly transition, $A_a/\mu$ is strictly increasing in $A_a$.
Equivalently, for any two productivity levels $A_a^1>A_a^0$ in the interior monopoly
transition,
\[
\frac{A_a^1}{\mu^1}>\frac{A_a^0}{\mu^0},
\qquad\text{or}\qquad
\frac{A_a^1}{A_a^0}>\frac{\mu^1}{\mu^0}.
\]

\end{lemma}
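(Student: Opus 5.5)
The plan is to reduce the claim to monotonicity of the monopolist's optimal allocation ratio and then apply a supermodularity (monotone comparative statics) argument to the reduced-form profit derived in the monopoly-GE formulation above.

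\emph{Step 1: reduce to the allocation ratio.} By \eqref{eq:R_m} in Lemma~\ref{lem:wagebill}, $A_a/\mu$ is strictly increasing in $A_a$ if and only if $R^m$ is. From the final-good first-order conditions,
\[
\frac{w_C^m}{w_S^m}=\frac{\alpha^C_f}{\alpha^S_f}\,\frac{Z^m}{L_f^{C,m}} .
\]
Hence
\[
R^m=\frac{\overline L^C}{\overline L^S}\,\xi^m,
\qquad
\xi=\frac{Z}{L_f^C}.
\]
It therefore suffices to show that the monopolist's optimal $\xi^m$ is strictly increasing in $A_a$.

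\emph{Step 2: isolate how $A_a$ enters.} Use the reduced form obtained above after imposing the input-mix condition of Lemma~\ref{lem:mon_alloc}:
\[
\Pi^m(\xi)\propto\frac{(\xi-\underline\xi)(\overline\xi-\xi)}{\xi^{\alpha^C_f}(\xi+\chi)},
\qquad
\chi=\alpha^C_a\kappa,
\qquad
\overline\xi=\frac{\alpha^S_f}{\alpha^C_f}\chi,
\qquad
\underline\xi=\frac{\overline L^S}{\overline L^C}.
\]
Since $\kappa=[A_a(\alpha^S_a)^{\alpha^S_a}]^{1/\alpha^C_a}$, productivity enters only through $\chi$, which is strictly increasing in $A_a$, while $\underline\xi$ does not depend on $A_a$. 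I will first recheck the change of variables from $L$ to $\xi$, namely $L=(\xi\overline L^C-\overline L^S)/(\xi+\chi)$. The point is to confirm that the omitted proportionality factor depends on $(\chi,\text{parameters})$ only multiplicatively, with no further dependence on $\xi$. Such a factor drops out of the cross-partial of $\log\Pi^m$.

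\emph{Step 3: supermodularity and strict monotonicity.} Let $h(\xi,\chi)=\ln\Pi^m$ on $\underline\xi<\xi<\overline\xi$. Then
\[
\frac{\partial h}{\partial\xi}=\frac{1}{\xi-\underline\xi}-\frac{1}{\overline\xi-\xi}-\frac{\alpha^C_f}{\xi}-\frac{1}{\xi+\chi}.
\]
Differentiating in $\chi$ gives
\[
\frac{\partial^2 h}{\partial\xi\,\partial\chi}=\frac{\alpha^S_f/\alpha^C_f}{(\overline\xi-\xi)^2}+\frac{1}{(\xi+\chi)^2}>0,
\]
so $h$ has strictly increasing differences in $(\xi,\chi)$.

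The Descartes-rule argument above shows that the first-order condition has exactly one interior root, which is the maximizer $\xi^m(\chi)$. Because $h\to-\infty$ at both endpoints, $\partial h/\partial\xi$ changes sign from positive to negative exactly once, at $\xi^m$. Now take $\chi'>\chi$. At $\xi^m(\chi)$ we have $\partial h/\partial\xi(\xi^m(\chi),\chi')>0$. Hence the unique root for $\chi'$ lies strictly to the right, so $\xi^m(\chi')>\xi^m(\chi)$.

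The same root-location argument also covers the endpoint shift $\overline\xi(\chi')>\overline\xi(\chi)$: the domain only expands to the right, so $\xi^m(\chi)$ remains an interior point of the new interval and the comparison is valid. Chaining $A_a\uparrow\Rightarrow\chi\uparrow\Rightarrow\xi^m\uparrow\Rightarrow R^m\uparrow$ then gives $A_a^1/\mu^1>A_a^0/\mu^0$.

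\emph{Main obstacle.} I expect the main difficulty to be Step 2: confirming that the reduced form's proportionality factor introduces no hidden $\xi$-dependence that depends on $\chi$. If it did, it would add a cross-partial term to $h$ whose sign must be checked. The fallback is to redo the computation directly in $L=L_a^C$: differentiate the first-order condition of $\Pi^m(L)$ and apply the implicit function theorem. The derivative of the first-order condition in $L$ at the optimum is nonzero by the uniqueness and sign-change property. It then remains to show that $\xi=(\overline L^S+\chi L)/(\overline L^C-L)$ increases, whether through $\chi$ directly or through $L^m$.
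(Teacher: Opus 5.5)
Your proposal is correct and follows essentially the paper's proof. Both reduce the claim via \eqref{eq:R_m} to monotonicity of the monopolist's optimal allocation ratio, then show strict increasing differences of log-profit in (allocation ratio, productivity index); your $(\xi,\chi)$ is just a rescaling of the paper's $(R,\kappa)$, with $R=(\overline L^C/\overline L^S)\xi$ and $\chi=\alpha^C_a\kappa$. The worry in your Step 2 is moot, since the omitted factor is the constant $A_f\alpha^C_f\overline L^C$, and your strictness argument via the unique sign change of $\partial h/\partial\xi$ is a harmless variant of the paper's revealed-preference-plus-first-order-condition argument.
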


\begin{proof}
Recall from Lemma~\ref{lem:mon_alloc} that, at any interior monopoly
transition allocation, the monopolist preserves the competitive input mix inside
the AI sector:
\[
L_a^S=\alpha_a^S Y_a,
\qquad
Y_a=\kappa L_a^C,
\qquad
\kappa=\left[A_a(\alpha_a^S)^{\alpha_a^S}\right]^{1/\alpha_a^C}.
\]
Thus
\[
Y_a-L_a^S=\alpha_a^C Y_a=\alpha_a^C\kappa L_a^C,
\]
and
\[
Z=\overline L^S+\alpha_a^C\kappa L_a^C,
\qquad
L_f^C=\overline L^C-L_a^C.
\]
Since \(\kappa\) is strictly increasing in \(A_a\), it is enough to show that the
monopoly allocation ratio \(R^m\) is strictly increasing in \(\kappa\).

We use the allocation ratio
\[
R
=
\frac{Z/\overline L^S}{L_f^C/\overline L^C}.
\]
Substituting
\[
Z=\overline L^S+\alpha_a^C\kappa L_a^C,
\qquad
L_f^C=\overline L^C-L_a^C,
\]
and solving for \(L_a^C\) gives
\[
L_a^C(R,\kappa)
=
\frac{(R-1)\overline L^S\overline L^C}
{R\overline L^S+\alpha_a^C\kappa\overline L^C}.
\]
The final-good-sector wage equations imply
\[
\frac{w_C}{w_S}
=
\frac{\alpha_f^C}{\alpha_f^S}\frac{Z}{L_f^C}
=
\frac{\alpha_f^C}{\alpha_f^S}
R\frac{\overline L^S}{\overline L^C},
\]
and, by Lemma~\ref{lem:wagebill},
\[
w_S=w_S^{pre}R^{\alpha_f^S-1}.
\]

Using
\[
\Pi
=
w_S(Y_a-L_a^S)-w_C L_a^C,
\]
and the input-mix relation
\[
Y_a-L_a^S=\alpha_a^C\kappa L_a^C,
\]
profit can be written as
\[
\Pi
=
L_a^C(\alpha_a^C\kappa w_S-w_C).
\]
Now substitute the expressions derived above. First,
\[
L_a^C(R,\kappa)
=
\frac{(R-1)\overline L^S\overline L^C}
{R\overline L^S+\alpha_a^C\kappa\overline L^C}.
\]
Second,
\[
w_S=w_S^{pre}R^{\alpha_f^S-1}.
\]
Third,
\[
\alpha_a^C\kappa w_S-w_C
=
w_S
\left[
\alpha_a^C\kappa
-
\frac{w_C}{w_S}
\right]
=
w_S
\left[
\alpha_a^C\kappa
-
\frac{\alpha_f^C}{\alpha_f^S}
R\frac{\overline L^S}{\overline L^C}
\right].
\]
Therefore
\[
\Pi(R,\kappa)
=
w_S^{pre}\overline L^S\overline L^C
\,
R^{\alpha_f^S-1}
(R-1)
\frac{
\alpha_a^C\kappa
-
\frac{\alpha_f^C}{\alpha_f^S}
R\frac{\overline L^S}{\overline L^C}
}{
R\overline L^S+\alpha_a^C\kappa\overline L^C
}.
\]
Since \(w_S^{pre}\overline L^S\overline L^C>0\) is independent of \(R\) and
\(\kappa\), maximizing profit is equivalent to maximizing
\[
R^{\alpha_f^S-1}
(R-1)
\frac{
\alpha_a^C\kappa
-
\frac{\alpha_f^C}{\alpha_f^S}
R\frac{\overline L^S}{\overline L^C}
}{
R\overline L^S+\alpha_a^C\kappa\overline L^C
}.
\]

We restrict attention to the interior positive-profit domain. Since
\(L_a^C>0\), positive profit is equivalent to
\[
\alpha_a^C\kappa w_S-w_C>0.
\]
Using the wage-ratio expression above, this becomes
\[
\alpha_a^C\kappa
-
\frac{\alpha_f^C}{\alpha_f^S}
R\frac{\overline L^S}{\overline L^C}
>0.
\]

On the interior positive-profit domain, \(\Pi(R,\kappa)>0\), so maximizing
\(\Pi\) is equivalent to maximizing \(\log \Pi\). The positive-profit condition is
\[
\alpha_a^C\kappa
-
\frac{\alpha_f^C}{\alpha_f^S}
R\frac{\overline L^S}{\overline L^C}
>0.
\]

For a fixed \(R\), this condition becomes easier to satisfy when \(\kappa\)
increases, and for a fixed \(\kappa\), it becomes easier to satisfy when \(R\)
decreases. Therefore, in the comparison below, if \(R^{m,1}<R^{m,0}\), then
\(R^{m,0}\) is feasible at \(\kappa^1\) because \(\kappa^1>\kappa^0\), and
\(R^{m,1}\) is feasible at \(\kappa^0\) because \(R^{m,1}<R^{m,0}\). Hence both
candidate values can be evaluated at both productivity levels.

Dropping only additive constants that are
independent of \(R\), define
\[
\ell(R,\kappa)
=
\log(R-1)
+
(\alpha_f^S-1)\log R
+
\log\left[
\alpha_a^C\kappa
-
\frac{\alpha_f^C}{\alpha_f^S}
R\frac{\overline L^S}{\overline L^C}
\right]
-
\log\left[
R\overline L^S+\alpha_a^C\kappa\overline L^C
\right].
\]
This objective has strict increasing differences in \((R,\kappa)\). Indeed,
\[
\frac{\partial^2 \ell}{\partial R\,\partial \kappa}
=
\frac{
\alpha_a^C
\frac{\alpha_f^C}{\alpha_f^S}
\frac{\overline L^S}{\overline L^C}
}{
\left[
\alpha_a^C\kappa
-
\frac{\alpha_f^C}{\alpha_f^S}
R\frac{\overline L^S}{\overline L^C}
\right]^2
}
+
\frac{
\alpha_a^C\overline L^S\overline L^C
}{
\left[
R\overline L^S+\alpha_a^C\kappa\overline L^C
\right]^2
}
>0.
\]
Thus a higher \(\kappa\) raises the marginal return to choosing a higher \(R\).

Now take two productivity levels \(A_a^1>A_a^0\). Since
\[
\kappa
=
\left[A_a(\alpha_a^S)^{\alpha_a^S}\right]^{1/\alpha_a^C},
\]
we have
\[
\kappa^1>\kappa^0.
\]
Let \(R^{m,1}\) and \(R^{m,0}\) denote the corresponding interior monopoly
maximizers.

We first show that \(R^{m,1}\geq R^{m,0}\). Suppose, toward a contradiction, that
\[
R^{m,1}<R^{m,0}.
\]
Since \(R^{m,0}\) is optimal at \(\kappa^0\),
\[
\ell(R^{m,0},\kappa^0)
\geq
\ell(R^{m,1},\kappa^0).
\]
Since \(R^{m,1}\) is optimal at \(\kappa^1\),
\[
\ell(R^{m,1},\kappa^1)
\geq
\ell(R^{m,0},\kappa^1).
\]
Adding these two optimality inequalities and rearranging gives
\[
\ell(R^{m,1},\kappa^1)-\ell(R^{m,1},\kappa^0)
\geq
\ell(R^{m,0},\kappa^1)-\ell(R^{m,0},\kappa^0).
\]
However, strict increasing differences mean that the gain from raising
\(\kappa\) is larger at higher values of \(R\). Therefore, if
\(R^{m,1}<R^{m,0}\) and \(\kappa^1>\kappa^0\), then
\[
\ell(R^{m,1},\kappa^1)-\ell(R^{m,1},\kappa^0)
<
\ell(R^{m,0},\kappa^1)-\ell(R^{m,0},\kappa^0).
\]
This contradicts the previous inequality. Hence
\[
R^{m,1}\geq R^{m,0}.
\]

The inequality is strict at an interior optimum. Suppose instead that
\[
R^{m,1}=R^{m,0}.
\]
Since \(R^{m,0}\) is an interior maximizer at \(\kappa^0\), its first-order
condition is
\[
\frac{\partial \ell(R^{m,0},\kappa^0)}{\partial R}=0.
\]
Strict increasing differences imply that increasing \(\kappa\) strictly raises
the marginal payoff to \(R\). Hence
\[
\frac{\partial \ell(R^{m,0},\kappa^1)}{\partial R}
>
\frac{\partial \ell(R^{m,0},\kappa^0)}{\partial R}
=
0.
\]
Therefore the same \(R^{m,0}\) cannot also be an interior maximizer at
\(\kappa^1\). Hence
\[
R^{m,1}>R^{m,0}.
\]

Finally, from \eqref{eq:R_m},
\[
R^m
=
\frac{\alpha_f^S}{\alpha_f^C}
\frac{\overline L^C}{\overline L^S}
\left(
\frac{A_a\Phi}{\mu}
\right)^{1/\alpha_a^C}.
\]
All terms outside \(A_a/\mu\) are fixed across the comparison, and
\(1/\alpha_a^C>0\). Therefore
\[
R^{m,1}>R^{m,0}
\quad\Longleftrightarrow\quad
\frac{A_a^1}{\mu^1}>
\frac{A_a^0}{\mu^0}.
\]
Equivalently,
\[
\frac{A_a^1}{A_a^0}>
\frac{\mu^1}{\mu^0}.
\]
Thus, along the interior monopoly transition, the monopoly allocation ratio
\(R^m\) rises with AI productivity if and only if AI productivity grows faster
than the monopoly markup.
\end{proof}

\subsubsection*{F. Redistribution instruments and the regime comparison}

\noindent\textbf{Wage-ratio \((\gamma)\) target.} Consider first the wage-tax-only version of the policy. To bring the after-tax ratio to
\[
\frac{\widetilde w_C^{m,1}}{\widetilde w_S^{m,1}}=\gamma,
\qquad
\gamma<\frac{w_C^{m,1}}{w_S^{m,1}},
\]
the balanced-budget wage tax solves
\[
\tau_C w_C^{m,1}\overline L^C+\tau_S w_S^{m,1}\overline L^S=0.
\]
Using
\[
\widetilde w_C^{m,1}=(1-\tau_C)w_C^{m,1},
\qquad
\widetilde w_S^{m,1}=(1-\tau_S)w_S^{m,1},
\]
the target condition becomes
\[
\frac{(1-\tau_C)w_C^{m,1}}{(1-\tau_S)w_S^{m,1}}
=
\gamma.
\]
Solving the target condition jointly with the balanced-budget condition gives
\begin{equation}
\tau_C
=
\frac{\overline L^S}{\gamma\overline L^C+\overline L^S}
\left(
1-\frac{\gamma}{w_C^{m,1}/w_S^{m,1}}
\right).
\label{eq:tau_gamma_mono_app}
\end{equation}
This is structurally identical to the competitive formula, with market structure
entering only through the wage ratio. Since the monopoly wage ratio is compressed
relative to the competitive wage ratio,
\[
\frac{w_C^{m,1}}{w_S^{m,1}}
<
\frac{w_C^{comp,1}}{w_S^{comp,1}},
\]
the required wage tax on complementary workers is lower under monopoly for any
fixed feasible target \(\gamma\). Available profit-tax revenue can further
reduce the required wage tax if it is used to finance the subsidy to
substitutable workers.

\medskip

\noindent\textbf{Floor-wage target.} Consider first the wage-tax-only case,
\(\tau_\pi=0\). To hold substitutable workers at their benchmark monopoly wage,
\[
\widetilde w_S^{m,1}=w_S^{m,0},
\]
the required wage tax on complementary workers is
\[
\tau_C^m
=
\frac{(w_S^{m,0}-w_S^{m,1})\overline L^S}
{w_C^{m,1}\overline L^C}.
\]
To compare this tax to the competitive benchmark, let
\[
\Omega=\frac{w_C}{w_S}.
\]
The unit-cost identity implies
\[
w_S=\Theta\Omega^{-\alpha_f^C},
\qquad
w_C=\Theta\Omega^{\alpha_f^S},
\]
where \(\Theta\) is common across market structures for a fixed productivity
level. Since monopoly introduces the markup wedge
\[
\Omega^m=\Omega^{comp}\mu^{-1/\alpha_a^C},
\]
we obtain
\[
w_S^m=w_S^{comp}\mu^{\alpha_f^C/\alpha_a^C},
\qquad
w_C^m=w_C^{comp}\mu^{-\alpha_f^S/\alpha_a^C}.
\]
Therefore
\[
\frac{\tau_C^m}{\tau_C^{comp}}
=
(\mu^1)^{\alpha_f^S/\alpha_a^C}
\cdot
\frac{
(\mu^0)^{\alpha_f^C/\alpha_a^C}w_S^{comp,0}
-
(\mu^1)^{\alpha_f^C/\alpha_a^C}w_S^{comp,1}
}{
w_S^{comp,0}-w_S^{comp,1}
}.
\]
The tax base shrinks unambiguously under monopoly because
\[
w_C^{m,1}<w_C^{comp,1}.
\]
The behavior of the deficit depends on the markup. With a constant markup,
\(\mu^0=\mu^1=\mu\), the substitutable-wage schedule is scaled up uniformly:
\[
w_S^m=w_S^{comp}\mu^{\alpha_f^C/\alpha_a^C}.
\]
Thus the deficit is also scaled up by
\(\mu^{\alpha_f^C/\alpha_a^C}\), and the tax-rate ratio collapses to
\[
\frac{\tau_C^m}{\tau_C^{comp}}
=
\mu^{1/\alpha_a^C}
>
1.
\]
Hence the floor-wage tax is higher under monopoly when the markup is constant.
If the markup rises sufficiently fast with \(A_a\), the deficit can instead be
compressed, and the required wage tax can fall below the competitive level. The
comparison for this target therefore depends on how the markup changes with
productivity.

\medskip

\noindent\textbf{Profit-tax-only.} Now set \(\tau_C=0\) and allow a profit tax
\(\tau_\pi\leq 1\). Since complementary wages rise along the interior monopoly
transition, protecting substitutable workers is the binding worker constraint.
To hold
\[
\widetilde w_S^{m,1}=w_S^{m,0},
\]
the required transfer to substitutable workers is
\[
(w_S^{m,0}-w_S^{m,1})\overline L^S.
\]
A profit tax alone can finance this transfer if and only if
\[
\Pi^{m,1}
\geq
(w_S^{m,0}-w_S^{m,1})\overline L^S.
\]
Equivalently, there exists some \(\tau_\pi\leq 1\) such that
\[
\tau_\pi \Pi^{m,1}
=
(w_S^{m,0}-w_S^{m,1})\overline L^S.
\]
Thus profit taxation alone suffices exactly when monopoly profit covers the
substitutable-worker deficit. In the numerical illustrations, monopoly rents
accumulate gradually: near \(A^*\), the deficit exceeds profit and a wage tax on
complementary workers is still needed; beyond a crossing point, profit overtakes
the deficit and the profit tax alone suffices.

\medskip

\noindent\textbf{Price cap, full profit taxation, and partial profit taxation.}
Let regime \((i)\) be unregulated monopoly with no profit tax, regime \((ii)\)
be monopoly with a partial profit tax \(0<\tau_\pi<1\), and regime \((iii)\) be
restored competition, implemented either by a price cap at \(p_a^{comp}\) or,
under the tie-breaking assumption above, by full profit taxation
\(\tau_\pi=1\).

In regime \((iii)\), the markup is eliminated and the competitive allocation is
restored:
\[
Y_f=Y^{comp},
\qquad
\Pi=0.
\]
In regime \((ii)\), the monopolist retains after-tax profit
\[
\widetilde \Pi^{m,1}=(1-\tau_\pi)\Pi^{m,1}>0,
\]
and output remains restricted:
\[
Y^{m,1}<Y^{comp,1}.
\]
In regime \((i)\), the monopolist retains all monopoly profit:
\[
\widetilde \Pi^{m,1}=\Pi^{m,1}.
\]

Now impose the same floor-wage target in all three regimes:
\[
\widetilde w_S^1=w_S^{m,0}.
\]
The budget identity is
\[
Y^1
=
w_S^{m,0}\overline L^S
+
\widetilde w_C^1\overline L^C
+
\widetilde \Pi^1.
\]
Therefore, at productivity \(A_a^1\), the residual income of complementary
workers is
\[
\widetilde w_C^{(iii)}\overline L^C
=
Y^{comp,1}
-
w_S^{m,0}\overline L^S,
\]
\[
\widetilde w_C^{(ii)}\overline L^C
=
Y^{m,1}
-
w_S^{m,0}\overline L^S
-
(1-\tau_\pi)\Pi^{m,1},
\]
and
\[
\widetilde w_C^{(i)}\overline L^C
=
Y^{m,1}
-
w_S^{m,0}\overline L^S
-
\Pi^{m,1}.
\]
Differencing gives
\[
\widetilde w_C^{(iii)}-\widetilde w_C^{(ii)}
=
\frac{
(Y^{comp,1}-Y^{m,1})+(1-\tau_\pi)\Pi^{m,1}
}{\overline L^C}
>
0,
\]
and
\[
\widetilde w_C^{(ii)}-\widetilde w_C^{(i)}
=
\frac{\tau_\pi\Pi^{m,1}}{\overline L^C}
>
0.
\]
The first inequality uses
\[
Y^{comp,1}>Y^{m,1}
\]
from Proposition~\ref{prop:mon_comp_no_tax}, and the second uses
\(\Pi^{m,1}>0\). Hence
\[
\widetilde w_C^{(iii)}
>
\widetilde w_C^{(ii)}
>
\widetilde w_C^{(i)}.
\]
Thus, holding substitutable workers at the same benchmark wage, complementary
workers are best off when competition is restored, next best off under partial
profit taxation, and worst off under unregulated monopoly. The numerical
illustrations in Figure~\ref{fig:mono_regime_comparison} show that the gap to
competition widens with \(A_a^1\) and is only partly closed by partial profit
taxation.

\newpage

\section{Online Supplementary Appendix}

\label{Supplementary}

\subsection{Additional Results for the Competitive Transition}
\label{app:supp_transition}

This subsection collects the supplementary results that accompany the analysis of the competitive economy transition (Propositions~\ref{prop_wagegap} and the wage characterization in the main text): heatmaps of the equilibrium across the parameter space, the comparative statics of the AI expenditure share, and the labor-allocation patterns.

\begin{figure}[H]
\centering
\includegraphics[width=1\textwidth]{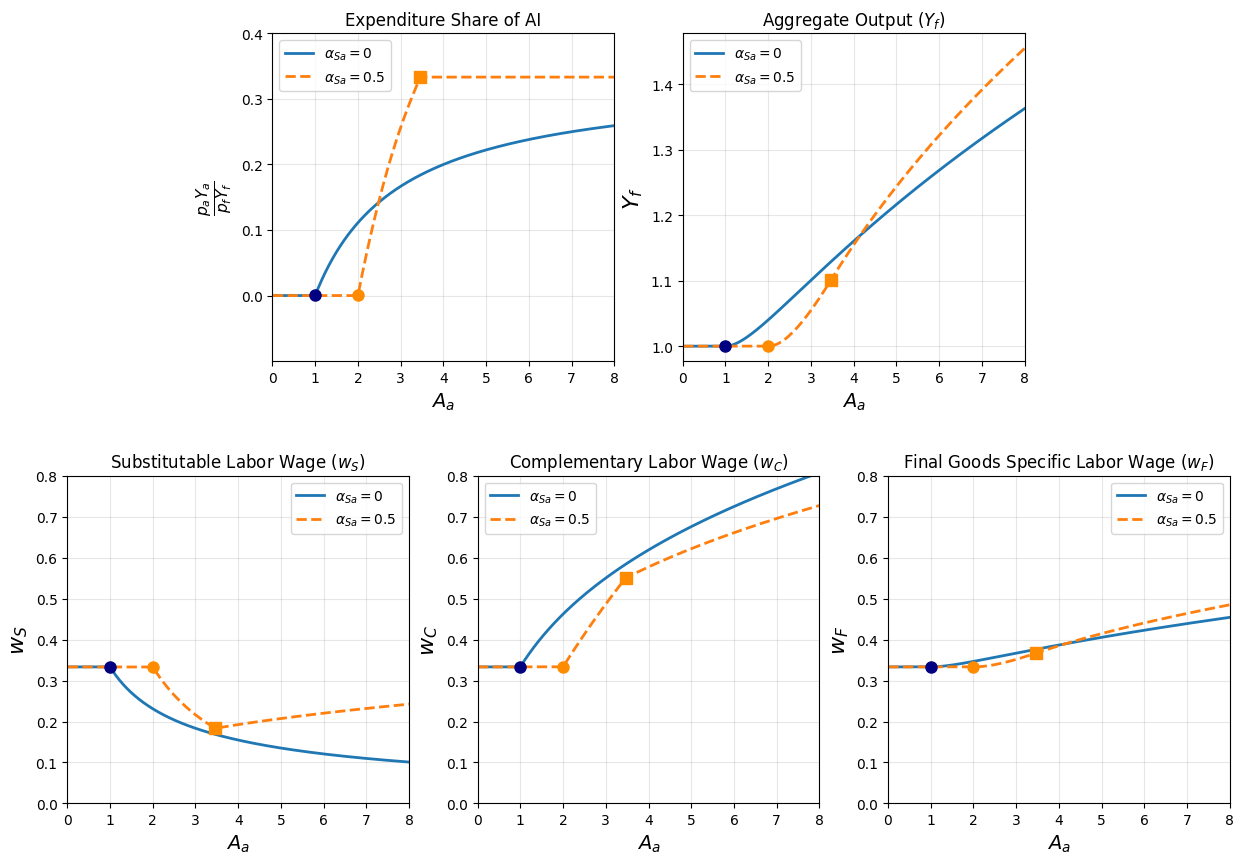}
\caption{AI Productivity and General Equilibrium Wage Dynamics}
\label{fig_wages_old}

\vspace{0.2cm}
 \begin{minipage}{0.95\textwidth}
 \footnotesize
 \textit{Notes:} This figure presents heatmaps showing how equilibrium outcomes vary across AI productivity ($A_a$, horizontal axis) and the substitutable labor share in AI production ($\alpha^S_{a}$, vertical axis). Panel 1 displays the expenditure share of AI in the final goods sector ($p_a Y_a / p_f Y_f$). Panel 2 shows aggregate output ($Y_f$). Panel 3 displays the wage for final-good-specific labor ($w_F$). Panels 4 and 5 display wages for substitutable ($w_S$) and complementary ($w_C$) labor, respectively. The solid white curve indicates the threshold $A_a^*$ separating the pre-AI and transition regimes; the dashed white curve indicates $A_a^{**}$ separating the transition and post-AI regimes. Parameters: $\alpha^S_{f}=\alpha^C_{f}=\alpha^F_{f}=1/3$, $\overline L^S=1$, $\overline L^C=\overline L^F=1$, and $A_f=1$.
 \end{minipage}
\end{figure}

Figure \ref{fig_wages_old} illustrates the equilibrium wage dynamics and AI adoption intensity across a given parameter space. The horizontal axis represents AI productivity $A_a$, while the vertical axis represents $\alpha^S_{a}$, the share of substitutable labor in AI production. The solid white curve marks the threshold $A_a^*$ at which AI adoption begins (transition from pre-AI to transition regime), while the dashed white curve marks $A_a^{**}$ at which substitutable labor is fully displaced from the final-good sector (transition to post-AI regime). For fixed parameters $\alpha^S_{f}=\alpha^C_{f}=\alpha^F_{f}=1/3$, $\overline L^S=1$, $\overline L^C=\overline L^F=1$, and $A_f=1$, all wages are equal in the pre-AI regime, and wages evolve as the final goods sector adopts AI. If $\alpha^S_{a}$ is low (bottom of each panel), the transition phase is prolonged because substitutable labor cannot be absorbed by the AI sector; in the limiting case $\alpha^S_{a} \to 0$, the economy remains indefinitely in the transition regime. As $\alpha^S_{a}$ increases, the gap between $A_a^*$ and $A_a^{**}$ narrows, depicting a faster transition.

\begin{figure}[H]
\centering
\includegraphics[width=0.8\textwidth]{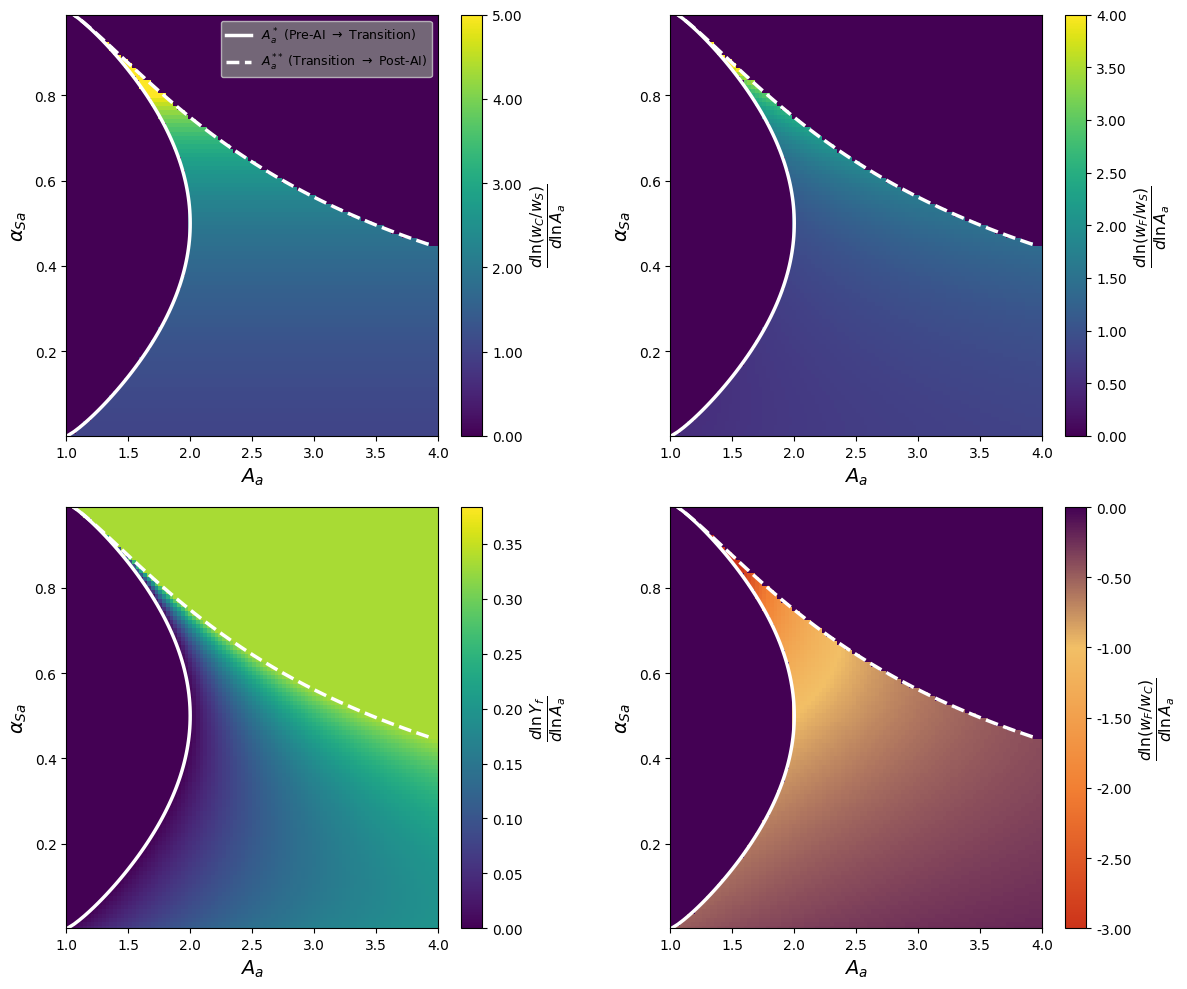}
\caption{Elasticities of Wage Inequality with Respect to AI Productivity}
\label{fig_wage_elasticities}

\vspace{0.2cm}
\begin{minipage}{0.95\textwidth}
\footnotesize
\textit{Notes:} This figure presents heatmaps of the elasticities of relative wages and aggregate output with respect to AI productivity ($A_a$). Panel 1 displays $\frac{d \ln(w_C/w_S)}{d \ln A_a} = \frac{1}{\alpha^C_{a}}$, which depends only on technology parameters during the transition. Panel 2 displays $\frac{d \ln(w_F/w_S)}{d \ln A_a}$. Panel 3 displays $\frac{d \ln Y_f}{d \ln A_a}$, the output elasticity equal to the AI expenditure share. Panel 4 displays $\frac{d \ln(w_F/w_C)}{d \ln A_a}$, which is negative during transition. In the pre-AI regime (left of solid white curve) and post-AI regime (right of dashed white curve), all wage ratio elasticities are zero; in the post-AI regime, output grows at rate $\alpha^S_{f}$. Parameters: $\alpha^S_{f}=\alpha^C_{f}=\alpha^F_{f}=1/3$, $\overline L^S=\overline L^C=\overline L^F=1$, and $A_f=1$.
\end{minipage}
\end{figure}

Figure~\ref{fig_wage_elasticities} visualizes the elasticities characterized in Proposition~\ref{prop_wagegap}. Panel 1 displays the elasticity $\frac{d \ln(w_C/w_S)}{d \ln A_a} = \frac{1}{\alpha^C_{a}}$, which is constant within the transition regime for a given $\alpha^S_{a}$ (and hence $\alpha^C_{a} = 1 - \alpha^S_{a}$). As $\alpha^S_{a}$ decreases (moving down on the vertical axis), $\alpha^C_{a}$ rises, and the elasticity decreases---indicating that wage inequality between complementary and substitutable labor responds less sharply to AI productivity improvements when AI production is more intensive in complementary labor. Panel 2 shows $\frac{d \ln(w_F/w_S)}{d \ln A_a}$, which is positive but strictly smaller than Panel 1, confirming the ordering in Proposition~\ref{prop_wagegap} part (i). Panel 3 displays the output elasticity $\frac{d \ln Y_f}{d \ln A_a} = \frac{p_a Y_a}{p_f Y_f}$, which increases with $A_a$ during the transition as AI captures a larger share of production. Panel 4 shows that $\frac{d \ln(w_F/w_C)}{d \ln A_a} < 0$ throughout the transition, reflecting that complementary labor gains faster than final-good-specific labor.

A striking pattern emerges in Panel 3: the output elasticity is highest in the post-AI regime (top-right region), where $\alpha^S_{a}$ is large and substitutable labor can be fully absorbed into AI production. When $\alpha^S_{a}$ is high, the economy transitions quickly to full AI adoption and achieves the largest output gains. Conversely, when $\alpha^S_{a}$ is low, substitutable labor remains trapped in the final-good sector with declining wages, and the economy cannot fully exploit AI's productivity potential.

\paragraph{Comparative statics of the AI expenditure share.}
The level of the AI expenditure share during the transition, $E_a = e_f^{\ast}\alpha^S_{f} = p_aY_a/(p_fY_f)$, is derived in Appendix~\ref{appendix} (the competitive-equilibrium derivations); using the interior value of $e_f^{\ast}$ in Equation~\eqref{ea} it equals
\begin{equation}
E_a=\frac{\alpha^S_{f}\,\frac{L^C}{L^S}\left(A_a(\alpha^C_{a})^{\alpha^C_{a}}(\alpha^S_{a})^{\alpha^S_{a}}\right)^{1/\alpha^C_{a}}-\alpha^C_{f}}{\alpha^C_{a}\left(1+\frac{L^C}{L^S}\left(A_a(\alpha^C_{a})^{\alpha^C_{a}}(\alpha^S_{a})^{\alpha^S_{a}}\right)^{1/\alpha^C_{a}}\right)} .
\label{eq:Ea_recall}
\end{equation}
Rather than restate this level, we record here its comparative statics in $A_a$, which underlie the claim in the main text that the AI expenditure share rises during the transition.

\begin{proposition}[Monotonicity and curvature of the AI expenditure share]
\label{mass}
During the substitution phase ($A_a^{*}<A_a<A_a^{**}$), the AI expenditure share $E_a=p_aY_a/(p_fY_f)$ in \eqref{eq:Ea_recall} is strictly increasing in $A_a$. Moreover, let
\[
\Lambda \;\equiv\; \frac{w_C \overline{L}^C}{w_S \overline{L}^S}\;=\;\frac{L^C}{L^S}\left(A_a(\alpha^C_{a})^{\alpha^C_{a}}(\alpha^S_{a})^{\alpha^S_{a}}\right)^{1/\alpha^C_{a}}
\]
denote the complementary-to-substitutable income-share ratio (the quantity $\Lambda$ of the main text). Then $E_a$ is convex in $A_a$ when $\Lambda<\frac{1-\alpha^C_{a}}{1+\alpha^C_{a}}$ and concave when $\Lambda>\frac{1-\alpha^C_{a}}{1+\alpha^C_{a}}$. Since $\Lambda$ is strictly increasing in $A_a$, the curvature changes sign at most once along the transition.
\end{proposition}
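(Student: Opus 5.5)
The plan is to reduce everything to a one-variable problem in $\Lambda$. The key observation is that the closed form \eqref{eq:Ea_recall} depends on $A_a$ only through
\[
\Lambda=\frac{L^C}{L^S}\bigl(A_a\Phi\bigr)^{1/\alpha^C_{a}} .
\]
So I will write
\[
E_a=\frac{\alpha^S_{f}\Lambda-\alpha^C_{f}}{\alpha^C_{a}(1+\Lambda)} ,
\]
and treat $E_a$ as the composition $E_a(\Lambda(A_a))$. Two facts from the definition will be used throughout. First, $\Lambda=kA_a^{1/\alpha^C_{a}}$ with a constant $k>0$. Second, $d\Lambda/dA_a=\Lambda/(\alpha^C_{a}A_a)>0$, which also gives the last sentence of the proposition: $\Lambda$ is strictly increasing in $A_a$.

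For monotonicity, I will differentiate the fraction with respect to $\Lambda$ by the quotient rule. The numerator collapses to a constant:
\[
\frac{dE_a}{d\Lambda}=\frac{s}{\alpha^C_{a}(1+\Lambda)^2},
\qquad s=\alpha^S_{f}+\alpha^C_{f}=1-\alpha^F_{f}>0 .
\]
By the chain rule this gives
\[
\frac{dE_a}{dA_a}=\frac{s}{(\alpha^C_{a})^2}\,h(A_a),
\qquad h(A_a)=\frac{\Lambda}{A_a(1+\Lambda)^2} .
\]
This is strictly positive on the transition, since $\Lambda>0$ there. That settles the first claim.

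For curvature, the sign of $d^2E_a/dA_a^2$ equals the sign of $h'(A_a)$, which in turn equals the sign of $d\ln h/dA_a$ because $h>0$. I will compute the logarithmic derivative term by term:
\[
\frac{d\ln h}{dA_a}
=\frac{1}{\alpha^C_{a}A_a}-\frac{1}{A_a}-\frac{2\Lambda}{1+\Lambda}\cdot\frac{1}{\alpha^C_{a}A_a} .
\]
Multiplying by the positive factor $\alpha^C_{a}A_a(1+\Lambda)$ leaves
\[
(1-\alpha^C_{a})-\Lambda(1+\alpha^C_{a}) .
\]
This expression is positive exactly when $\Lambda<\frac{1-\alpha^C_{a}}{1+\alpha^C_{a}}$ and negative exactly when the inequality is reversed. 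So $E_a$ is convex below this threshold and concave above it. Because $\Lambda$ is strictly increasing in $A_a$, the sign can switch at most once along $(A_a^*,A_a^{**})$.

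The only real obstacle is bookkeeping: keeping the $1/\alpha^C_{a}$ exponent consistent in the chain rule, and recognizing that the positive constant $s$ drops out of the curvature sign. Working with $\ln h$ rather than differentiating the quotient twice avoids most of the algebra.
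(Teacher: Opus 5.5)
Your proposal is correct and follows essentially the same route as the paper's proof: you rewrite $E_a$ as a function of $\Lambda$, use $dE_a/d\Lambda=(\alpha^S_{f}+\alpha^C_{f})/(\alpha^C_{a}(1+\Lambda)^2)$ and $d\Lambda/dA_a=\Lambda/(\alpha^C_{a}A_a)$, and sign the second derivative through the bracket $(1-\alpha^C_{a})-(1+\alpha^C_{a})\Lambda$. That bracket is exactly the paper's $\bigl(\tfrac{1}{\alpha^C_{a}}-1\bigr)-\bigl(\tfrac{1}{\alpha^C_{a}}+1\bigr)\Lambda$ multiplied by $\alpha^C_{a}$, and the remaining differences are cosmetic: you sign the second derivative via the logarithmic derivative of $h$ rather than writing out the positive prefactor $M$, and you omit the paper's one-line check that $\Lambda=w_C\overline{L}^C/(w_S\overline{L}^S)$, which follows from $w_C/w_S=(A_a\Phi)^{1/\alpha^C_{a}}$.
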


\begin{proof}
Write $\Lambda \equiv \frac{L^C}{L^S}\big(A_a(\alpha^C_{a})^{\alpha^C_{a}}(\alpha^S_{a})^{\alpha^S_{a}}\big)^{1/\alpha^C_{a}}$; by Proposition~\ref{prop_wagegap}, $w_C/w_S=(A_a(\alpha^C_{a})^{\alpha^C_{a}}(\alpha^S_{a})^{\alpha^S_{a}})^{1/\alpha^C_{a}}$, so $\Lambda=w_C\overline{L}^C/(w_S\overline{L}^S)$. In terms of $\Lambda$, \eqref{eq:Ea_recall} reads
\[
E_a=\frac{\alpha^S_{f}\Lambda-\alpha^C_{f}}{\alpha^C_{a}(1+\Lambda)} .
\]
Differentiating,
\[
\frac{dE_a}{d\Lambda}=\frac{\alpha^S_{f}+\alpha^C_{f}}{\alpha^C_{a}(1+\Lambda)^2}>0 ,
\qquad
\frac{d\Lambda}{dA_a}=\frac{1}{\alpha^C_{a}}\frac{\Lambda}{A_a}>0 ,
\]
so $dE_a/dA_a=(dE_a/d\Lambda)(d\Lambda/dA_a)>0$: the share is strictly increasing. Explicitly,
\[
\frac{dE_a}{dA_a}
=\frac{\frac{L^C}{L^S}\,(\alpha^S_{f}+\alpha^C_{f})}{(\alpha^C_{a})^2}
\cdot
\frac{\left((\alpha^C_{a})^{\alpha^C_{a}}(\alpha^S_{a})^{\alpha^S_{a}}\right)^{1/\alpha^C_{a}}\,A_a^{1/\alpha^C_{a}-1}}
{\left(1+\frac{L^C}{L^S}\left(A_a(\alpha^C_{a})^{\alpha^C_{a}}(\alpha^S_{a})^{\alpha^S_{a}}\right)^{1/\alpha^C_{a}}\right)^2}.
\]
For the curvature, differentiating once more gives
\[
\frac{d^2E_a}{dA_a^2}
= M\cdot\left[\left(\tfrac{1}{\alpha^C_{a}}-1\right)-\left(\tfrac{1}{\alpha^C_{a}}+1\right)\Lambda\right],
\]
where $M>0$.\footnote{$M=\dfrac{\frac{L^C}{L^S}\,(\alpha^S_{f}+\alpha^C_{f})}{(\alpha^C_{a})^2}
\left((\alpha^C_{a})^{\alpha^C_{a}}(\alpha^S_{a})^{\alpha^S_{a}}\right)^{1/\alpha^C_{a}}\dfrac{A_a^{1/\alpha^C_{a}-2}}
{\left(1+\frac{L^C}{L^S}\left(A_a(\alpha^C_{a})^{\alpha^C_{a}}(\alpha^S_{a})^{\alpha^S_{a}}\right)^{1/\alpha^C_{a}}\right)^3}>0.$}
The bracket is positive if and only if $\Lambda<(\tfrac{1}{\alpha^C_{a}}-1)/(\tfrac{1}{\alpha^C_{a}}+1)=\frac{1-\alpha^C_{a}}{1+\alpha^C_{a}}$, giving the stated convex/concave regions. Because $\Lambda$ is strictly increasing in $A_a$ and the threshold $\frac{1-\alpha^C_{a}}{1+\alpha^C_{a}}$ is constant, $\Lambda$ crosses it at most once, so the curvature changes sign at most once.
\end{proof}


Figure~\ref{fig_ai_share_output} illustrates these dynamics: the expenditure share of AI in value added ($E_a=e_f^{\ast}\alpha^S_{f}$) rises monotonically in $A_a$ but at a diminishing rate over the displayed range, consistent with Proposition~\ref{mass}.

\paragraph{Labor allocation.}
The next proposition records how the sectoral allocation of each labor type responds to AI productivity during the transition.

\begin{proposition}[Labor allocation]
\label{prop_labor_allocation}
In the substitution phase ($A_a^{*} < A_a < A_a^{**}$), the labor-allocation responses to AI productivity are
\[
\frac{\mathrm{d}\,(L^S_{f}/L^S)}{\mathrm{d} A_a} <0, \quad
\frac{\mathrm{d}\,(L^S_{a}/L^S)}{\mathrm{d} A_a} >0, \quad
\frac{\mathrm{d}\,(L^C_{f}/L^C)}{\mathrm{d} A_a} <0, \quad
\frac{\mathrm{d}\,(L^C_{a}/L^C)}{\mathrm{d} A_a} > 0, \quad
\frac{\mathrm{d}\,(L^F_{f}/L^F)}{\mathrm{d} A_a} =0 .
\]
\end{proposition}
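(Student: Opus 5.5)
The plan is to express each sectoral labor share in closed form as a function of $e_f^{\ast}$ (equivalently of $\Lambda$), and then use that $e_f^{\ast}$ is strictly increasing in $A_a$ during the substitution phase. That monotonicity is immediate from Proposition~\ref{mass}, since $E_a=e_f^{\ast}\alpha^S_{f}$ is strictly increasing.

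The last claim is trivial: $L^F_{f}=\overline L^F$ by market clearing, so $L^F_{f}/L^F\equiv 1$ and its derivative is zero.

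\emph{Complementary labor.} In the appendix derivation of the transition-regime $w_F/w_C$ we already obtained
\[
\frac{L^C_{f}}{L^C}=\frac{\alpha^C_{f}}{\alpha^C_{f}+e_f^{\ast}\alpha^S_{f}\alpha^C_{a}},
\qquad
\frac{L^C_{a}}{L^C}=\frac{e_f^{\ast}\alpha^S_{f}\alpha^C_{a}}{\alpha^C_{f}+e_f^{\ast}\alpha^S_{f}\alpha^C_{a}}.
\]
The first is strictly decreasing in $e_f^{\ast}$ because $\alpha^S_{f}\alpha^C_{a}>0$ in the interesting case $\alpha^C_{a}>0$. The second equals one minus the first. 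Composing with $de_f^{\ast}/dA_a>0$ gives the two $C$-signs.

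\emph{Substitutable labor.} Using $Z=L^S/[1-e_f^{\ast}\alpha^C_{a}]$, $L^S_{f}=(1-e_f^{\ast})Z$ and $L^S_{a}=\alpha^S_{a}e_f^{\ast}Z$, all from the appendix derivation of $w_F/w_S$, we get
\[
\frac{L^S_{a}}{L^S}=\frac{\alpha^S_{a}e_f^{\ast}}{1-e_f^{\ast}\alpha^C_{a}},\qquad
\frac{L^S_{f}}{L^S}=\frac{1-e_f^{\ast}}{1-e_f^{\ast}\alpha^C_{a}}.
\]
These sum to one because $\alpha^S_{a}+\alpha^C_{a}=1$. Differentiating the second in $e_f^{\ast}$ gives numerator $-(1-e_f^{\ast}\alpha^C_{a})+\alpha^C_{a}(1-e_f^{\ast})=-(1-\alpha^C_{a})=-\alpha^S_{a}$, which is negative whenever $\alpha^S_{a}>0$. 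The first share is then increasing, and the chain rule finishes the argument.

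The only delicate point is the degenerate boundary cases. When $\alpha^S_{a}=0$, $L^S_{a}\equiv 0$, so the $S$-derivatives vanish rather than being strict. The proof should therefore note that strictness presumes $0<\alpha^S_{a}<1$, which is the non-degenerate transition with finite $A^{**}$ that the proposition implicitly assumes. Beyond this caveat, the main step to verify carefully is the strict monotonicity of $e_f^{\ast}$, inherited from $dE_a/dA_a>0$ in Proposition~\ref{mass}.
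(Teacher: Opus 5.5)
Your argument is correct and is essentially the paper's: the paper likewise writes the four shares as explicit functions of $e_f^{\ast}$, checks their monotonicity in $e_f^{\ast}$, and composes with $e_f^{\ast}$ being strictly increasing in $A_a$. Your $S$-shares coincide with the paper's after cancelling $\alpha^S_{f}$, and you obtain them directly from the cost-share identities rather than via its ``automation-share'' rewriting of $Y_f$. Your caveat is a genuine refinement that the paper's proof passes over: when $\alpha^S_{a}=0$ the two $S$-inequalities collapse to equalities, and this case matters because there $A^{**}=\infty$, so the phase is nonempty.
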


These allocation patterns are driven by changes in $A_a$ through their effect on $e_f^{\ast}$ (the endogenous usage of AI). An increase in $e_f^{\ast}$ through higher $A_a$ leads to (i) an increase (decrease) in the share of substitutable labor employed in the AI (final-good) sector, and (ii) an increase (decrease) in the share of complementary labor employed in the AI (final-good) sector, while the employment of $L^F$ stays entirely in final-good production.

\begin{proof}
We use the automation-share representation. Using the equilibrium AI value-added share $e_f^{\ast}\alpha^S_{f}$, the equilibrium final-good output can be written as
\[
Y_{f}
=
\overline{A}\,
\left(L^S_{f}\right)^{\alpha^S_{f}-e_f^{\ast}\alpha^S_{f}}
\left((L^S_{a})^{\alpha^S_{a}}\right)^{e_f^{\ast}\alpha^S_{f}}
\left((L^C_{a})^{\alpha^C_{a}}\right)^{e_f^{\ast}\alpha^S_{f}}
\left(L^C_{f}\right)^{\alpha^C_{f}}
\left(L^F_{f}\right)^{\alpha^F_{f}},
\]
where $\overline{A}$ is a converted productivity constant that reproduces the same equilibrium $Y_f$ as the original technologies. Using this representation and $\alpha^S_{a}=1-\alpha^C_{a}$, the equilibrium labor allocations are
\[
\frac{L^S_{f}}{L^S}
=
\frac{(1-e_{f}^{\ast})\,\alpha^S_{f}}{\alpha^S_{f}-e_{f}^{\ast}\alpha^S_{f}\alpha^C_{a}},
\qquad
\frac{L^S_{a}}{L^S}
=
\frac{e_{f}^{\ast}\alpha^S_{f}(1-\alpha^C_{a})}{\alpha^S_{f}-e_{f}^{\ast}\alpha^S_{f}\alpha^C_{a}},
\]
\[
\frac{L^C_{a}}{L^C}
=
\frac{e_{f}^{\ast}\alpha^S_{f}\alpha^C_{a}}{\alpha^C_{f}+e_{f}^{\ast}\alpha^S_{f}\alpha^C_{a}},
\qquad
\frac{L^C_{f}}{L^C}
=
\frac{\alpha^C_{f}}{\alpha^C_{f}+e_{f}^{\ast}\alpha^S_{f}\alpha^C_{a}},
\]
and $L^F_{f}/L^F=1$, since $L^F$ is used only in final-good production. Each of the four nondegenerate ratios is a strictly monotone function of $e_f^{\ast}$: $L^S_{f}/L^S$ and $L^C_{f}/L^C$ are strictly decreasing in $e_f^{\ast}$, while $L^S_{a}/L^S$ and $L^C_{a}/L^C$ are strictly increasing in $e_f^{\ast}$. Because $e_f^{\ast}$ is strictly increasing in $A_a$ during the transition (Equation~\eqref{ea}), the stated signs follow, and $L^F_{f}/L^F$ is constant.
\end{proof}

\subsection{The Duration of the Transition Phase}
\label{transsec}

We quantify the persistence of the transition regime by examining the ratio of the full-substitution threshold ($A_a^{**}$) to the initial adoption threshold ($A_a^{*}$). This ratio measures the range of productivity growth over which the economy remains in the substitution phase---and thus the range over which the labor-supply invariance mechanism holds. Dividing the expressions in Equation~\eqref{threshold} yields a ratio that depends solely on technological parameters:
\begin{equation}
\label{ratio_eq}
\frac{A_a^{**}}{A_a^{*}} = \left( \frac{\alpha^S_{f}\alpha^C_{a}+\alpha^C_{f}}{\alpha^S_{a}\alpha^C_{f}} \right)^{\alpha^C_{a}}.
\end{equation}
Equation~\eqref{ratio_eq} reveals that the transition phase is not merely a temporary adjustment but can be structurally prolonged by the nature of AI production. Specifically, as the share of substitutable labor in the AI sector approaches zero ($\alpha^S_{a} \to 0$, implying $\alpha^C_{a} \to 1$), the ratio diverges to infinity. Similarly, a lower intensity of complementary tasks in the final-good sector ($\alpha^C_{f} \to 0$) magnifies the ratio, requiring a larger expansion of AI productivity to fully automate the sector.

\begin{figure}[H]
	\centering
	\vspace{0.5cm}
	{
		\begin{tikzpicture}
			\begin{axis}[
				view={120}{30},
				xlabel=\(\frac{\alpha^C_{f}}{\alpha^C_{f}+\alpha^S_{f}}\),
				ylabel=\(\alpha^C_{a}\),
				zlabel=\(\frac{A_a^{**}}{A_a^*}\),
				xmin=0.2, xmax=0.8,
				ymin=0.2, ymax=0.8,
				zmin=1, zmax=10,
				xtick={0.2,0.5},
				ytick={0.2,0.5,0.8},
				grid=both,
				colormap/viridis,
				colorbar,
				tick label style={font=\normalsize},
				label style={font=\Large},
				title style={font=\Large},
				width=0.65\textwidth,
				height=0.45\textwidth
				]
				\addplot3[
				surf,
				domain=0.2:0.8,
				domain y=0.2:0.8,
				samples=30,
				samples y=30,
				] { ( ((1-x)*y + x) / ((1-y)*x) )^y };
			\end{axis}
		\end{tikzpicture}
	}
	\caption{Technological Determinants of Transition Duration ($A_a^{**}/A_a^*$)}
	\label{figureheat}
\end{figure}

Figure~\ref{figureheat} numerically illustrates these dynamics. The vertical axis tracks the relative intensity of complementary labor in the AI sector ($\alpha^C_{a}$), while the horizontal axis tracks the same relative intensity in the final-good sector ($\alpha^C_{f}$, normalized). The transition phase is longest when the AI sector is intensive in complementary labor (high $\alpha^C_{a}$) and the final-good sector is intensive in substitutable tasks (low $\alpha^C_{f}$). This prolongation occurs because a high $\alpha^C_{a}$ limits the AI sector's capacity to absorb substitutable labor (since $\alpha^S_{a}$ is low), precisely when the final-good sector has a large share of such labor to displace. 
With the reallocation channel constrained, the economy faces a structural bottleneck: the displaced workers remain ``stuck'' in the final-good sector for a much wider range of productivity levels, significantly extending the transition phase.

\section{Derivations for the Model Extensions (Section~\ref{SEC:ROBUSTNESS})}
\label{app:extensions}

We provide the derivations underlying the two extensions of Section~\ref{SEC:ROBUSTNESS}: AI used in its own production, and a general (CES) AI technology.

\subsection{AI Used in AI Production}
\label{app:ai_in_ai}

This subsection provides the full derivation of the wage gap, the Domar weight, and the multiplicity result for the AI-in-AI extension, in which AI can be used as an input in its own production. The AI and final-good technologies are
\[
Y_a = A_a\big[L^S_{a}+X_{aa}\big]^{\alpha^S_{a}}(L^C_{a})^{\alpha^C_{a}},
\qquad
Y_f = A_f\big[L^S_{f}+X_{fa}\big]^{\alpha^S_{f}}(L^C_{f})^{\alpha^C_{f}}(L^F_{f})^{\alpha^F_{f}},
\]
with $\alpha^S_{a}+\alpha^C_{a}=1$ and $\alpha^S_{f}+\alpha^C_{f}+\alpha^F_{f}=1$. Here $X_{aa}$ is AI used in AI production and $X_{fa}$ is AI used in final-good production, and AI-market clearing is $X_{aa}+X_{fa}=Y_a$. Let $Z_a\equiv L^S_{a}+X_{aa}$ and $Z_f\equiv L^S_{f}+X_{fa}$ be the substitutable bundles in the two sectors, and let $\Phi\equiv(\alpha^S_{a})^{\alpha^S_{a}}(\alpha^C_{a})^{\alpha^C_{a}}$.

\paragraph{Price relation and wage ratio.} Substitutable labor and AI are perfect substitutes within each bundle, so any equilibrium that uses both equates their prices sector-by-sector. In the final-good sector the marginal product of $Z_f$ equals $w_S$ while the firm pays $p_a$ for $X_{fa}$, hence $p_a=w_S$; in the AI sector the producer pays $w_S$ for $L^S_{a}$ and $p_a$ for $X_{aa}$, which again gives $p_a=w_S$. The AI cost function is the same Cobb-Douglas unit cost as in the baseline, so
\begin{equation}
p_a=\frac{w_S^{\alpha^S_{a}}w_C^{\alpha^C_{a}}}{A_a\Phi}=w_S
\quad\Longrightarrow\quad
\frac{w_C}{w_S}=(A_a\Phi)^{1/\alpha^C_{a}},
\label{eq:wage_ratio_aiai}
\end{equation}
exactly as in the baseline. The recursive use of AI does not alter the AI cost function and therefore leaves the wage ratio unchanged.

\paragraph{Cost shares.} With $p_f=1$ and competitive pricing, the Cobb-Douglas cost shares of the final-good sector (whose total cost equals $Y_f$) are
\[
w_S Z_f=\alpha^S_{f}Y_f,\qquad w_C L^C_{f}=\alpha^C_{f}Y_f,\qquad w_F L^F_{f}=\alpha^F_{f}Y_f,
\]
while the AI sector (whose total revenue $p_aY_a=w_SY_a$ equals its total cost) gives
\[
w_S Z_a=\alpha^S_{a}\,w_S Y_a \ \Rightarrow\ Z_a=\alpha^S_{a}Y_a,
\qquad
w_C L^C_{a}=\alpha^C_{a}\,w_S Y_a .
\]

\paragraph{Domar weight.} Let $\hat E_f\equiv X_{fa}/Z_f$ and $\hat E_a\equiv X_{aa}/Z_a$ be the AI shares of the substitutable bundle in the two sectors, so $X_{fa}=\hat E_f Z_f$ and $X_{aa}=\hat E_a Z_a$. AI-market clearing $X_{aa}+X_{fa}=Y_a$ becomes
\[
\hat E_a Z_a+\hat E_f Z_f=Y_a
\quad\Longrightarrow\quad
\hat E_a\,\alpha^S_{a}Y_a+\hat E_f\,\frac{\alpha^S_{f}Y_f}{w_S}=Y_a ,
\]
using $Z_a=\alpha^S_{a}Y_a$ and $w_S Z_f=\alpha^S_{f}Y_f$. Solving,
\[
w_S Y_a=\frac{\hat E_f}{1-\hat E_a\alpha^S_{a}}\,\alpha^S_{f}Y_f \equiv K\,\alpha^S_{f}Y_f,
\qquad
K\equiv \frac{\hat E_f}{1-\hat E_a\alpha^S_{a}} .
\]
Since $p_a=w_S$, the Domar weight of the AI sector (its sales relative to value added) is
\begin{equation}
\mathcal{D}_a=\frac{p_aY_a}{Y_f}=K\,\alpha^S_{f}.
\label{eq:domar_aiai}
\end{equation}

\paragraph{Wage gap in terms of $K$.} Summing complementary-labor demands,
\[
w_C\overline{L}^C=w_C L^C_{a}+w_C L^C_{f}=\alpha^C_{a}\,w_SY_a+\alpha^C_{f}Y_f=Y_f\big(\alpha^C_{f}+K\alpha^S_{f}\alpha^C_{a}\big),
\]
using $w_SY_a=K\alpha^S_{f}Y_f$. For substitutable labor,
\[
w_S\overline{L}^S=w_S L^S_{a}+w_S L^S_{f}=w_S(1-\hat E_a)Z_a+(1-\hat E_f)\,w_S Z_f ;
\]
substituting $Z_a=\alpha^S_{a}Y_a$, $w_S Z_f=\alpha^S_{f}Y_f$, $w_SY_a=K\alpha^S_{f}Y_f$, and the identity $\hat E_f=K(1-\hat E_a\alpha^S_{a})$ collapses the dependence on $\hat E_a,\hat E_f$ to
\[
w_S\overline{L}^S=\alpha^S_{f}Y_f\big[(1-\hat E_a)\alpha^S_{a}K+(1-\hat E_f)\big]=\alpha^S_{f}Y_f\big(1-K\alpha^C_{a}\big)=Y_f\big(\alpha^S_{f}-K\alpha^S_{f}\alpha^C_{a}\big).
\]
Dividing the two expressions yields the wage gap,
\begin{equation}
\frac{w_C}{w_S}=\frac{\overline{L}^S}{\overline{L}^C}\,\frac{\alpha^C_{f}+K\alpha^S_{f}\alpha^C_{a}}{\alpha^S_{f}-K\alpha^S_{f}\alpha^C_{a}},
\label{eq:wage_gap_aiai}
\end{equation}
which is exactly the baseline wage-gap formula with $K$ replacing $E_a/\alpha^S_{f}=e_f^{\ast}$.

\paragraph{Determination of $K$ and observational equivalence.} Equating \eqref{eq:wage_gap_aiai} with \eqref{eq:wage_ratio_aiai} and solving for $K$,
\begin{equation}
K=\frac{\hat E_f}{1-\hat E_a\alpha^S_{a}}=\frac{(A_a\Phi)^{1/\alpha^C_{a}}\,\alpha^S_{f}\overline{L}^C-\alpha^C_{f}\overline{L}^S}{\alpha^S_{f}\alpha^C_{a}\big((A_a\Phi)^{1/\alpha^C_{a}}\overline{L}^C+\overline{L}^S\big)} .
\label{eq:K_aiai}
\end{equation}
The right-hand side is identical to the expression that pins $E_a/\alpha^S_{f}=e_f^{\ast}$ in the baseline (Equation~\eqref{ea}). Hence $\mathcal{D}_a=K\alpha^S_{f}=E_a$, and the baseline wage formulas $w_S=Y_f(\alpha^S_{f}-E_a\alpha^C_{a})/\overline{L}^S$ and $w_C=Y_f(\alpha^C_{f}+E_a\alpha^C_{a})/\overline{L}^C$ deliver the same $w_S$, $w_C$, $p_a=w_S$, and $Y_f$ as the baseline during the transition. The two models are observationally equivalent in all aggregate outcomes, and the elasticity of output with respect to $A_a$ equals $\mathcal{D}_a$ in both.

\paragraph{Multiplicity.} Equation~\eqref{eq:K_aiai} pins down the composite $K=\hat E_f/(1-\hat E_a\alpha^S_{a})$ but not the pair $(\hat E_f,\hat E_a)$ separately. Any $(\hat E_f,\hat E_a)\in[0,1]^2$ satisfying $\hat E_f=K(1-\hat E_a\alpha^S_{a})$ is an equilibrium: different splits of AI output between own-use and final-good use are consistent with the same prices, wages, and output. 

\paragraph{Beyond the baseline completion threshold.} In the baseline the transition ends at $A_a^{**}$, where $e_f^{\ast}=1$ and the Domar weight $\mathcal{D}_a=\alpha^S_{f}$ becomes fixed. In the AI-in-AI specification there is no finite completion threshold: even once the final-good sector is fully automated ($\hat E_f=1$, $L^S_{f}=0$), the AI sector keeps substituting its own output for labor as $\hat E_a$ rises. The Domar weight $\mathcal{D}_a=\alpha^S_{f}/(1-\hat E_a\alpha^S_{a})$ continues to grow and the wage ratio $w_C/w_S=(A_a\Phi)^{1/\alpha^C_{a}}$ expands without bound, so AI-driven growth continues while inequality widens indefinitely. These dynamics are illustrated in Figure~\ref{fig:ai_in_ai_sim} of Section~\ref{SEC:ROBUSTNESS}.

\subsection{Beyond Cobb--Douglas: A CES AI Technology}
\label{app:ces}

This subsection derives the wage ratio reported in Section~\ref{SEC:ROBUSTNESS} when the AI sector uses a constant-elasticity-of-substitution (CES) technology, and shows that it is increasing in $A_a$ and nests the Cobb-Douglas case. The argument relies only on the equilibrium condition $p_a=w_S$ (perfect substitutability of AI output and substitutable labor in the final-good bundle) and on the AI unit-cost function; the functional form of the final-good technology plays no role, so the invariance of $w_C/w_S$ to labor supply and to final-good parameters carries over. A formal treatment under general nested CES structures is provided in \cite{kj}.

Let the AI sector produce with the CES technology
\[
Y_a=A_a\Big[\alpha^S_{a}\,(L^S_{a})^{\rho}+\alpha^C_{a}\,(L^C_{a})^{\rho}\Big]^{1/\rho},
\qquad \rho=\frac{\sigma-1}{\sigma},\quad \sigma>0,\ \sigma\neq 1,
\]
where $\sigma$ is the elasticity of substitution between substitutable and complementary labor in building AI and $\alpha^S_{a}+\alpha^C_{a}=1$. Cost minimization yields the standard CES unit-cost function
\begin{equation}
c(w_S,w_C;A_a)=\frac{1}{A_a}\Big[(\alpha^S_{a})^{\sigma}\,w_S^{1-\sigma}+(\alpha^C_{a})^{\sigma}\,w_C^{1-\sigma}\Big]^{1/(1-\sigma)} .
\label{eq:ces_cost}
\end{equation}

\paragraph{Wage ratio.} During the transition, substitutable labor and AI are both used in the final-good bundle, so their prices are equal, $p_a=w_S$, and competitive AI pricing sets $p_a=c(w_S,w_C;A_a)$. Hence $w_S=c(w_S,w_C;A_a)$. Substituting \eqref{eq:ces_cost} and raising both sides to the power $1-\sigma$,
\[
A_a^{1-\sigma}\,w_S^{1-\sigma}=(\alpha^S_{a})^{\sigma}\,w_S^{1-\sigma}+(\alpha^C_{a})^{\sigma}\,w_C^{1-\sigma}.
\]
Rearranging,
\[
\big[A_a^{1-\sigma}-(\alpha^S_{a})^{\sigma}\big]\,w_S^{1-\sigma}=(\alpha^C_{a})^{\sigma}\,w_C^{1-\sigma}
\quad\Longrightarrow\quad
\Big(\frac{w_C}{w_S}\Big)^{1-\sigma}=\frac{A_a^{1-\sigma}-(\alpha^S_{a})^{\sigma}}{(\alpha^C_{a})^{\sigma}} ,
\]
so the wage ratio is
\begin{equation}
\frac{w_C}{w_S}=\left[\frac{A_a^{1-\sigma}-(\alpha^S_{a})^{\sigma}}{(\alpha^C_{a})^{\sigma}}\right]^{1/(1-\sigma)} ,
\label{eq:ces_wage_ratio}
\end{equation}
which is the expression reported in Section~\ref{SEC:ROBUSTNESS}. It is well defined precisely when $A_a^{1-\sigma}>(\alpha^S_{a})^{\sigma}$: a lower bound on $A_a$ when $\sigma<1$ (the AI sector must be productive enough to be adopted) and an upper bound when $\sigma>1$, as discussed under completion behavior below.

\paragraph{Monotonicity in \(A_a\).}
Write \eqref{eq:ces_wage_ratio} as
\[
\left(\frac{w_C}{w_S}\right)^{1-\sigma}
=
\frac{
A_a^{1-\sigma}-(\alpha_a^S)^\sigma
}{
(\alpha_a^C)^\sigma
}.
\]
Let
\[
q=\frac{w_C}{w_S}.
\]
On the admissible interior range,
\[
A_a^{1-\sigma}-(\alpha_a^S)^\sigma>0,
\]
so the wage ratio is well-defined and positive.

Taking logs gives
\[
(1-\sigma)\ln q
=
\ln\left[
A_a^{1-\sigma}-(\alpha_a^S)^\sigma
\right]
-
\sigma\ln\alpha_a^C.
\]
Differentiating with respect to \(\ln A_a\),
\[
(1-\sigma)\frac{d\ln q}{d\ln A_a}
=
\frac{
(1-\sigma)A_a^{1-\sigma}
}{
A_a^{1-\sigma}-(\alpha_a^S)^\sigma
}.
\]
For \(\sigma\neq 1\), the factor \(1-\sigma\) cancels from both sides. Hence
\[
\frac{d\ln(w_C/w_S)}{d\ln A_a}
=
\frac{
A_a^{1-\sigma}
}{
A_a^{1-\sigma}-(\alpha_a^S)^\sigma
}.
\]
Since the denominator is positive on the admissible interior range, this
derivative is strictly positive. Therefore
\[
\frac{w_C}{w_S}
\]
is strictly increasing in \(A_a\), both for \(\sigma<1\) and for \(\sigma>1\).
Moreover,
\[
\frac{
A_a^{1-\sigma}
}{
A_a^{1-\sigma}-(\alpha_a^S)^\sigma
}
\geq 1,
\]
with strict inequality whenever \(\alpha_a^S>0\). Thus
\[
\frac{d\ln(w_C/w_S)}{d\ln A_a}\geq 1,
\]
with strict inequality when AI production uses substitutable labor.

The two cases differ in their completion behavior. If \(\sigma<1\), then
\(1-\sigma>0\), so \(A_a^{1-\sigma}\) rises without bound as \(A_a\) rises.
Therefore
\[
A_a^{1-\sigma}-(\alpha_a^S)^\sigma
\]
does not converge to zero at any finite productivity level. The wage ratio rises
with \(A_a\), but it does not diverge at a finite \(A_a\).

If \(\sigma>1\), then \(1-\sigma<0\), so \(A_a^{1-\sigma}\) decreases as
\(A_a\) rises. The admissible interior range ends when
\[
A_a^{1-\sigma}=(\alpha_a^S)^\sigma.
\]
Equivalently,
\[
A_a
=
(\alpha_a^S)^{-\sigma/(\sigma-1)}.
\]
As \(A_a\) approaches this value from below,
\[
A_a^{1-\sigma}-(\alpha_a^S)^\sigma\to 0.
\]
Because
\[
\frac{1}{1-\sigma}<0,
\]
we have
\[
\frac{w_C}{w_S}\to\infty.
\]
Thus, for \(\sigma>1\), the interior transition ends at a finite productivity
level, the analogue of the completion threshold \(A^{**}\) in the
Cobb--Douglas baseline.

\paragraph{Cobb-Douglas limit.} As $\sigma\to1$ (so $1-\sigma\to0$), \eqref{eq:ces_wage_ratio} converges to the Cobb-Douglas ratio. Writing $\varepsilon\equiv1-\sigma\to0$ and expanding,
\[
A_a^{\varepsilon}=1+\varepsilon\ln A_a+o(\varepsilon),\quad
(\alpha^S_{a})^{\sigma}=\alpha^S_{a}\big(1-\varepsilon\ln\alpha^S_{a}\big)+o(\varepsilon),\quad
(\alpha^C_{a})^{\sigma}=\alpha^C_{a}\big(1-\varepsilon\ln\alpha^C_{a}\big)+o(\varepsilon).
\]
The numerator is $A_a^{\varepsilon}-(\alpha^S_{a})^{\sigma}=\alpha^C_{a}+\varepsilon(\ln A_a+\alpha^S_{a}\ln\alpha^S_{a})+o(\varepsilon)$ (using $1-\alpha^S_{a}=\alpha^C_{a}$), so
\[
\frac{A_a^{\varepsilon}-(\alpha^S_{a})^{\sigma}}{(\alpha^C_{a})^{\sigma}}
=1+\varepsilon\left[\frac{\ln A_a+\alpha^S_{a}\ln\alpha^S_{a}}{\alpha^C_{a}}+\ln\alpha^C_{a}\right]+o(\varepsilon).
\]
Therefore
\[
\ln\frac{w_C}{w_S}=\frac{1}{\varepsilon}\ln\!\left(\frac{A_a^{\varepsilon}-(\alpha^S_{a})^{\sigma}}{(\alpha^C_{a})^{\sigma}}\right)
\xrightarrow[\varepsilon\to0]{}
\frac{1}{\alpha^C_{a}}\Big[\ln A_a+\alpha^S_{a}\ln\alpha^S_{a}+\alpha^C_{a}\ln\alpha^C_{a}\Big]
=\frac{1}{\alpha^C_{a}}\ln\!\big(A_a\Phi\big),
\]
with $\Phi=(\alpha^C_{a})^{\alpha^C_{a}}(\alpha^S_{a})^{\alpha^S_{a}}$. Hence $w_C/w_S\to(A_a\Phi)^{1/\alpha^C_{a}}$, the Cobb-Douglas baseline ratio.

\paragraph{Redistribution implications.} The redistribution analysis of Sections~\ref{SEC:INCOME_TAX} and the monopoly extension depends on the AI technology only through the fact that $w_C/w_S$ is increasing in $A_a$ during the transition. Since \eqref{eq:ces_wage_ratio} is increasing in $A_a$ on its admissible range for every $\sigma>0$, the qualitative conclusions---substitutable wages fall in absolute terms while complementary wages rise, the wage-ratio ($\gamma$) tax increases in $A_a$ while the floor-wage tax is single-peaked in $A_a$ (Proposition~\ref{taxrate}), and aggregate output rises so that Pareto-improving redistribution is feasible---carry over unchanged beyond the Cobb-Douglas case.

\section{Cost-Minimizing Monopolist versus GE-Monopolist First-Order Conditions}
\label{sec:monopoly_struc}

Consider AI production
\[
Y_a=A_a(L_a^S)^{\alpha_a^S}(L_a^C)^{\alpha_a^C},
\qquad \alpha_a^S+\alpha_a^C=1,
\]
with marginal products
\[
MP_S^a=\frac{\partial Y_a}{\partial L_a^S}
=
\alpha_a^S\frac{Y_a}{L_a^S},
\qquad
MP_C^a=\frac{\partial Y_a}{\partial L_a^C}
=
\alpha_a^C\frac{Y_a}{L_a^C}.
\]

The final-good sector uses the effective substitutable input
\[
Z=L_f^S+Y_a.
\]
Since total substitutable labor is \(\overline L^S\), and \(L_a^S\) is allocated to AI production,
\[
L_f^S=\overline L^S-L_a^S.
\]
Therefore,
\[
Z=\overline L^S-L_a^S+Y_a.
\]
Similarly,
\[
L_f^C=\overline L^C-L_a^C.
\]

For concreteness, suppose final-good production is
\[
Y_f=A_f Z^{\alpha_f^S}(L_f^C)^{\alpha_f^C}.
\]
Then the competitive final-good sector implies
\[
w_S(Z,L_f^C)
=
\alpha_f^S A_f Z^{\alpha_f^S-1}(L_f^C)^{\alpha_f^C},
\]
and
\[
w_C(Z,L_f^C)
=
\alpha_f^C A_f Z^{\alpha_f^S}(L_f^C)^{\alpha_f^C-1}.
\]
In the interior transition, final-good producers use both AI and substitutable labor, so
\[
p_a=w_S.
\]

The two formulations below differ in the input-mix condition.

\subsection*{A. Cost-minimization formulation}

In this formulation, the monopolist has market power in the AI output market, but chooses its input mix as if it were competitive in factor markets.

For a given output level \(Y_a\), the monopolist minimizes labor cost:
\[
\min_{L_a^S,L_a^C}
\; w_S L_a^S+w_C L_a^C
\]
subject to
\[
Y_a=A_a(L_a^S)^{\alpha_a^S}(L_a^C)^{\alpha_a^C}.
\]

In this cost-minimization problem, \(w_S\) and \(w_C\) are treated as the input prices relevant for the input-mix decision.

The Lagrangian is
\[
\mathcal L
=
 w_S L_a^S+w_C L_a^C
+
u\Big[
Y_a-A_a(L_a^S)^{\alpha_a^S}(L_a^C)^{\alpha_a^C}
\Big],
\]
where \(u\) is the Lagrange multiplier on the fixed-output constraint.

The first-order condition with respect to \(L_a^S\) is
\[
\frac{\partial \mathcal L}{\partial L_a^S}
=
w_S-uMP_S^a=0,
\]
so
\[
w_S=uMP_S^a.
\]
The first-order condition with respect to \(L_a^C\) is
\[
\frac{\partial \mathcal L}{\partial L_a^C}
=
w_C-uMP_C^a=0,
\]
so
\[
w_C=uMP_C^a.
\]

Thus the FOCs are
\[
w_S=u MP_S^a,
\qquad
w_C=u MP_C^a.
\]

Dividing the two FOCs gives the cost-minimizing input-mix condition:
\[
\frac{MP_S^a}{MP_C^a}
=
\frac{w_S}{w_C}
\]
or equivalently
\[
\frac{\alpha_a^S}{\alpha_a^C}\frac{L_a^C}{L_a^S}
=
\frac{w_S}{w_C}.
\]

Given these cost-minimizing input demands, the remaining equilibrium conditions and market clearing determine, for each feasible \(Y_a\), a corresponding allocation and wage vector:
\[
L_a^{S,costmin}(Y_a),\quad
L_a^{C,costmin}(Y_a),\quad
L_f^{S,costmin}(Y_a),\quad
L_f^{C,costmin}(Y_a),\quad
w_S^{costmin}(Y_a),\quad
w_C^{costmin}(Y_a).
\]
In the interior transition, final-good producers use both AI and substitutable labor, so the induced AI price is
\[
p_a^{costmin}(Y_a)=w_S^{costmin}(Y_a).
\]
Thus this formulation generates an inverse demand curve \(p_a^{costmin}(Y_a)\).

The monopolist then chooses the level of AI production:
\[
Y_a^{costmin}
\in
\arg\max_{Y_a}
\Pi^{costmin}(Y_a),
\]
where
\[
\Pi^{costmin}(Y_a)
=
p_a^{costmin}(Y_a)Y_a
-
w_S^{costmin}(Y_a)L_a^{S,costmin}(Y_a)
-
w_C^{costmin}(Y_a)L_a^{C,costmin}(Y_a).
\]
Equivalently, because each feasible \(Y_a\) induces a unique price \(p_a^{costmin}(Y_a)\), the monopolist may be viewed as choosing either \(Y_a\) or the corresponding price \(p_a^{costmin}(Y_a)\).

\subsection*{B. GE-monopoly formulation}

In the GE-monopoly formulation, the monopolist chooses inputs while anticipating the equilibrium response of wages, prices, and allocations.

Because \(p_a=w_S\) in the interior transition, monopoly profit is
\[
\Pi^m
=
p_aY_a-w_S L_a^S-w_C L_a^C
=
w_S(Y_a-L_a^S)-w_C L_a^C.
\]

Using
\[
Z=\overline L^S-L_a^S+Y_a,
\]
we can rewrite
\[
Y_a-L_a^S=Z-\overline L^S.
\]
Also,
\[
L_f^C=\overline L^C-L_a^C.
\]

Therefore, after substituting the equilibrium wage functions, monopoly profit can be viewed as a reduced-form function of the induced final-good input \(Z\) and the AI-sector complementary labor \(L_a^C\):
\[
\Pi^m
=
w_S(Z,\overline L^C-L_a^C)(Z-\overline L^S)
-
w_C(Z,\overline L^C-L_a^C)L_a^C.
\]
This expression is the same monopoly profit as above, but written directly in terms of \(Z\) and \(L_a^C\). In particular, \(Z\) summarizes the effect of \(L_a^S\) on final-good production and on the equilibrium wages.

Now hold \(L_a^C\) fixed and vary \(L_a^S\). Since \(L_a^C\) is held fixed, \(L_f^C=\overline L^C-L_a^C\) is also fixed. Hence the only channel through which \(L_a^S\) affects this reduced-form profit is through its effect on \(Z\).

A marginal increase in \(L_a^S\) has two effects on \(Z\). First, it removes one unit of substitutable labor from final-good production. Second, it raises AI output by \(MP_S^a\). Hence
\[
\frac{\partial Z}{\partial L_a^S}
=
\frac{\partial}{\partial L_a^S}
\left(\overline L^S-L_a^S+Y_a\right)
=
-1+\frac{\partial Y_a}{\partial L_a^S}
=
MP_S^a-1.
\]

Let \(\Pi_Z^m\) denote the derivative of the reduced-form monopoly profit with respect to \(Z\), holding \(L_a^C\) fixed:
\[
\Pi_Z^m
\equiv
\frac{\partial}{\partial Z}
\left[
w_S(Z,\overline L^C-L_a^C)(Z-\overline L^S)
-
w_C(Z,\overline L^C-L_a^C)L_a^C
\right].
\]
Then, by the chain rule,
\[
\frac{\partial \Pi^m}{\partial L_a^S}
=
\Pi_Z^m
\frac{\partial Z}{\partial L_a^S}.
\]
Substituting the expression for \(\partial Z/\partial L_a^S\), we get
\[
\frac{\partial \Pi^m}{\partial L_a^S}
=
\Pi_Z^m\left(MP_S^a-1\right).
\]

The interior FOC with respect to \(L_a^S\) is therefore
\[
\Pi_Z^m\left(MP_S^a-1\right)=0.
\]
On the non-degenerate branch where the marginal value of the effective substitutable input is not zero,
\[
\Pi_Z^m\neq0,
\]
the FOC implies
\[
MP_S^a=1.
\]

Using
\[
MP_S^a=\alpha_a^S\frac{Y_a}{L_a^S},
\]
the condition \(MP_S^a=1\) becomes
\[
\alpha_a^S\frac{Y_a}{L_a^S}=1.
\]

Now substitute this condition back into AI production:
\[
Y_a
=
A_a(L_a^S)^{\alpha_a^S}(L_a^C)^{\alpha_a^C}.
\]
Using \(L_a^S=\alpha_a^S Y_a\), we obtain
\[
Y_a
=
A_a(\alpha_a^S Y_a)^{\alpha_a^S}(L_a^C)^{\alpha_a^C}.
\]
Equivalently,
\[
Y_a
=
A_a(\alpha_a^S)^{\alpha_a^S}
Y_a^{\alpha_a^S}(L_a^C)^{\alpha_a^C}.
\]
Dividing both sides by \(Y_a^{\alpha_a^S}\) gives
\[
Y_a^{1-\alpha_a^S}
=
A_a(\alpha_a^S)^{\alpha_a^S}(L_a^C)^{\alpha_a^C}.
\]
By using \(1-\alpha_a^S=\alpha_a^C\) and taking both sides to the power \(1/\alpha_a^C\), we get

\[
Y_a=\kappa L_a^C,
\qquad
\kappa=\left[A_a(\alpha_a^S)^{\alpha_a^S}\right]^{1/\alpha_a^C}.
\]

Hence
\[
MP_C^a
=
\alpha_a^C\frac{Y_a}{L_a^C}
=
\alpha_a^C\kappa.
\]

Therefore the GE-monopoly input-mix condition implies
\[
\frac{MP_S^a}{MP_C^a}
=
\frac{1}{\alpha_a^C\kappa}.
\]

\bigskip
\bigskip

\noindent{Next, we compare how these two formulations differ.}

\subsection*{Where the two formulations differ}

Cost minimization requires
\[
\frac{MP_S^a}{MP_C^a}=\frac{w_S}{w_C}.
\]
The GE-monopoly formulation implies
\[
\frac{MP_S^a}{MP_C^a}=\frac{1}{\alpha_a^C\kappa}.
\]
Therefore, the two formulations coincide only if
\[
\frac{w_S}{w_C}=\frac{1}{\alpha_a^C\kappa}
\]
or equivalently
\[
w_C=\alpha_a^C\kappa w_S.
\]

But under the GE-monopoly formulation, after using \(Y_a=\kappa L_a^C\) and \(L_a^S=\alpha_a^S Y_a\), profit is
\[
\Pi^m
=
w_S(Y_a-L_a^S)-w_C L_a^C.
\]
Since
\[
Y_a-L_a^S
=
Y_a-\alpha_a^S Y_a
=
\alpha_a^C Y_a
=
\alpha_a^C \kappa L_a^C,
\]
we get
\[
\Pi^m
=
\left(\alpha_a^C\kappa w_S-w_C\right)L_a^C.
\]
Thus, positive monopoly profit requires
\[
\Pi^m>0
\quad\Longleftrightarrow\quad
w_C<\alpha_a^C\kappa w_S.
\]
Hence, whenever monopoly profit is strictly positive,
\[
\frac{w_S}{w_C}
>
\frac{1}{\alpha_a^C\kappa}
=
\frac{MP_S^a}{MP_C^a}.
\]
Therefore, the cost-minimization FOC and the GE-monopoly FOC differ for positive monopoly profit.

In the cost-minimization formulation, the monopolist chooses AI output \(Y_a\)
while anticipating the general-equilibrium allocation associated with that
output. For each feasible \(Y_a\), the input mix \((L_a^S,L_a^C)\) is determined
by cost minimization at the equilibrium wages induced by that output, together
with final-good-sector optimality and market clearing. This generates functions
\[
p_a^{costmin}(Y_a)=w_S^{costmin}(Y_a),\quad
w_C^{costmin}(Y_a),\quad
L_a^{S,costmin}(Y_a),\quad
L_a^{C,costmin}(Y_a).
\]
The monopolist then chooses \(Y_a\), equivalently the associated price
\(p_a^{costmin}(Y_a)\), to maximize
\[
\Pi^{costmin}(Y_a)
=
p_a^{costmin}(Y_a)Y_a
-
w_S^{costmin}(Y_a)L_a^{S,costmin}(Y_a)
-
w_C^{costmin}(Y_a)L_a^{C,costmin}(Y_a).
\]
Thus the monopolist anticipates how wages and allocations vary with its output
choice, but the conditional input mix is still determined by the
cost-minimization condition
\[
\frac{MP_S^a}{MP_C^a}=\frac{w_S^{costmin}(Y_a)}{w_C^{costmin}(Y_a)}.
\]

Finally, the GE-monopoly problem has a weakly larger choice set than the
cost-minimization formulation. In the GE-monopoly formulation, the monopolist
chooses directly over all feasible input pairs \((L_a^S,L_a^C)\), internalizing
the induced equilibrium wages and allocations. In the cost-minimization
formulation, the monopolist is restricted to input pairs that satisfy the
cost-minimizing input-mix condition for each \(Y_a\), namely
\[
\frac{MP_S^a}{MP_C^a}
=
\frac{w_S}{w_C},
\]
or equivalently,
\[
\frac{\alpha_a^S}{\alpha_a^C}
\frac{L_a^C}{L_a^S}
=
\frac{w_S}{w_C}.
\]
Therefore, any allocation chosen under the cost-minimization formulation is
feasible in the GE-monopoly problem, but not conversely. Hence, when both
problems are solved over the same transition domain,
\[
\Pi^{GE}\geq \Pi^{costmin}.
\]
The inequality is strict whenever the GE-monopoly optimum does not satisfy the
cost-minimizing input-mix condition; in particular, this is the relevant case
whenever positive GE-monopoly profit makes the GE and cost-minimization FOCs
differ.

\begin{figure}[t]
\centering
\includegraphics[width=\textwidth]{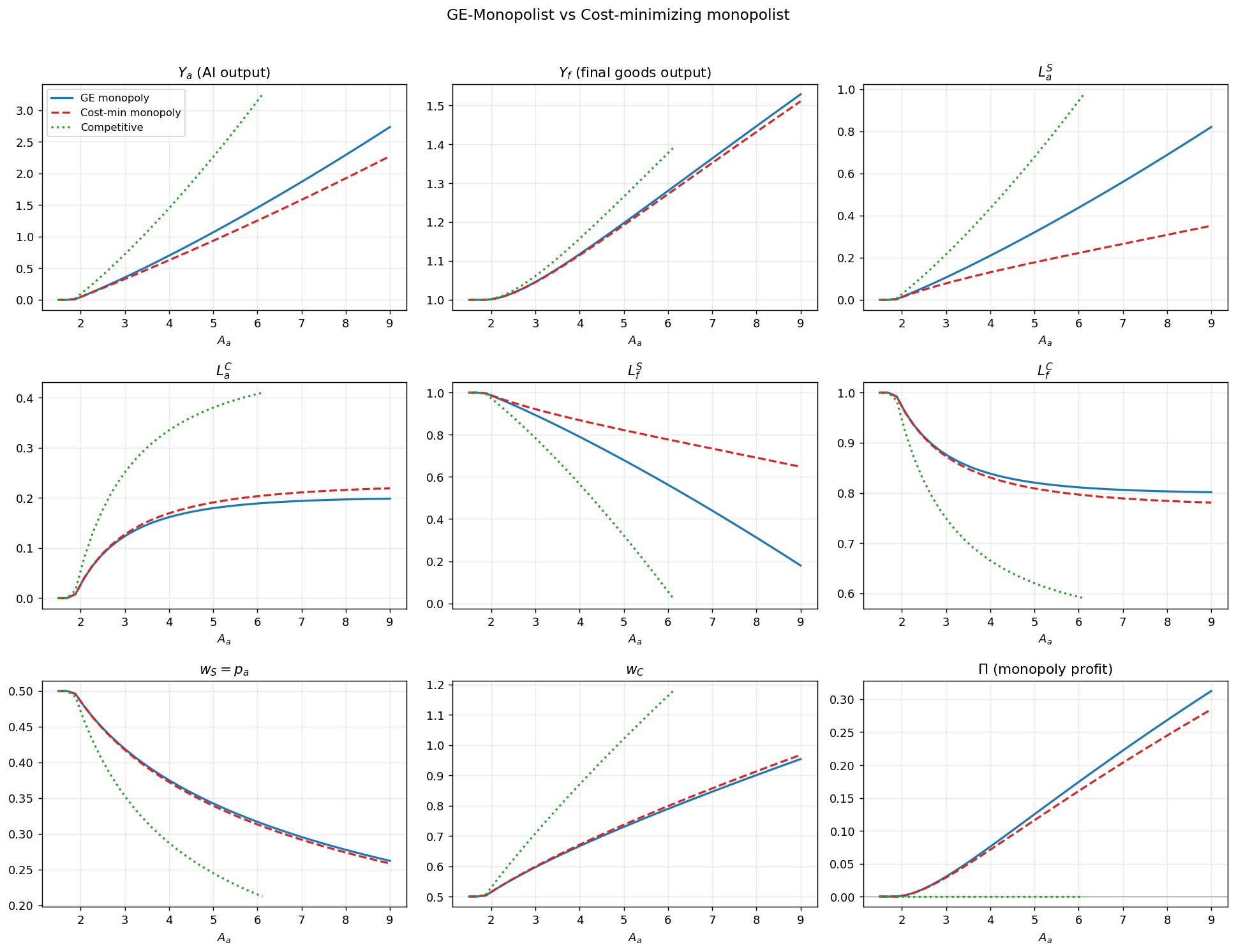}
\caption{GE-Monopoly, Cost-Minimizing Monopoly, and Competition}
\label{fig:mono_two_approaches}
 
\vspace{0.1cm}
\begin{minipage}{0.95\textwidth}
\footnotesize
\textit{Notes:} Each panel plots an equilibrium object against AI productivity $A_a$ under three closures of the AI-sector problem. The closures are the GE-monopoly of the main text (solid), the Cost-minimizing monopoly of this appendix (dashed), and the competitive benchmark (dotted). The bottom-right panel reports profit: the GE-monopoly earns the most, and the Cost-minimizing monopoly strictly less, because it is restricted to cost-minimizing input bundles. Consistent with Lemma~\ref{lem:mon_alloc}, both monopoly closures restrict scale relative to competition---$Y_a$, $L^S_{a}$, and $L^C_{a}$ are all lower. Parameters: $\alpha^S_{a}=0.30$, $\alpha^C_{a}=0.70$, $\alpha^S_{f}=\alpha^C_{f}=0.50$, $\overline{L}^S=\overline{L}^C=1$, $A_f=1$, over the range $A_a\in[1.5,9]$. The competitive economy transition terminates near $A_a\approx 6.2$.
\end{minipage}
\end{figure}

\end{document}